\newif\ifabstract
\abstracttrue
\newif\iffull
\ifabstract \fullfalse \else \fulltrue \fi

\documentclass[11pt]{article}
\usepackage{amsfonts}
\usepackage{amssymb}
\usepackage{amstext}
\usepackage{amsmath}
\usepackage{xspace}
\usepackage{ntheorem}
\usepackage{graphicx}
\usepackage{url}
\usepackage{graphics}
\usepackage{colordvi}
\usepackage{hyperref}
\usepackage{xcolor}
\usepackage{cleveref}
\usepackage{subcaption}
\usepackage{tikz}
\usetikzlibrary{calc}

\advance \oddsidemargin by -0.8in
\newcommand{\myparskip}{3pt}
\parskip \myparskip

\newcommand{\procprocessU}{\ensuremath{\operatorname{ProcProcessU}}\xspace}
\newcommand{\procprocess}{\ensuremath{\operatorname{ProcProcess}}\xspace}

\newcommand{\procpromote}{\ensuremath{\operatorname{ProcPromote}}\xspace}

\newcommand{\procupdate}{\ensuremath{\operatorname{ProcUpdate}}\xspace}

\newcommand{\procrematchA}{\ensuremath{\operatorname{ProcRematchA}}\xspace}
\newcommand{\procrematchBU}{\ensuremath{\operatorname{ProcRematchBU}}\xspace}

\newcommand{\vhigh}{V^{\operatorname{high}}}
\newcommand{\vlow}{V^{\operatorname{low}}}
\newcommand{\vmed}{V^{\operatorname{med}}}

\newcommand{\ceil}[1]{\ensuremath{\left\lceil#1\right\rceil}}
\newcommand{\floor}[1]{\ensuremath{\left\lfloor#1\right\rfloor}}

\newcommand{\attime}[1][\tau]{^{(#1)}}

\newcommand{\set}[1]{\left\{ #1 \right\}}

\newcommand{\sset}{{\mathcal{S}}}

\newcommand{\be}{\begin{enumerate}}
\newcommand{\ee}{\end{enumerate}}
\newcommand{\bd}{\begin{description}}
\newcommand{\ed}{\end{description}}
\newcommand{\bi}{\begin{itemize}}
\newcommand{\ei}{\end{itemize}}

\newtheorem{theorem}{Theorem}[section]
\newtheorem{lemma}[theorem]{Lemma}
\newtheorem{observation}[theorem]{Observation}
\newtheorem{corollary}[theorem]{Corollary}
\newtheorem{claim}[theorem]{Claim}

\newtheorem{definition}[theorem]{Definition}
\newenvironment{proof}{\par \smallskip{\bf Proof:}}{\hfill\stopproof}
\def\stopproof{\square}
\def\square{\vbox{\hrule height.2pt\hbox{\vrule width.2pt height5pt \kern5pt
\vrule width.2pt} \hrule height.2pt}}

        {\hfill\stopproof}

\newenvironment{prog}[1]{
\begin{minipage}{5.8 in}
\begin{center}
{\sc #1}
\end{center}
}
{
\end{minipage}
}

\newcommand{\program}[3]{\begin{figure} \fbox{\vspace{2mm}\begin{prog}{#1} #3 \end{prog}\vspace{2mm}} 
			\caption{#1 \label{#2}} \end{figure}}

\renewcommand{\phi}{\varphi}
\newcommand{\eps}{\epsilon}

\newcommand{\poly}{\operatorname{poly}}

\newenvironment{properties}[2][0]
{
\begin{enumerate} \setcounter{enumi}{#1}}{\end{enumerate}}

\renewcommand{\path}{\mbox{\sf{path}}}

\begin{document}

\begin{titlepage}	
\title{A Faster Deterministic Algorithm for Fully Dynamic Maximal Matching\footnote{To appear in STOC 2026.}}
\author{Julia Chuzhoy\thanks{Toyota Technological Institute at Chicago, IL, USA. Email: {\tt cjulia@ttic.edu}. Supported in part by NSF award CCF-2402283 and NSF HDR TRIPODS award 2216899.} \and Sanjeev Khanna\thanks{Courant Institute School of Mathematics, Computing, and Data Science, New York University, NY, USA. Email: {\tt sanjeev.khanna@nyu.edu}. Supported in part by NSF award CCF-2625203 and AFOSR award FA9550-25-1-0107.} \and Junkai Song\thanks{Courant Institute School of Mathematics, Computing, and Data Science, New York University, NY, USA. Email: {\tt junkaisong@nyu.edu}.}}

\pagenumbering{gobble}

\maketitle
\thispagestyle{empty}

\begin{abstract}
In the fully dynamic maximal matching problem, the goal is to maintain a maximal matching in a graph undergoing an online sequence of edge insertions and deletions, while minimizing the update time. The problem has been studied extensively in the \emph{oblivious-adversary} setting, where randomized algorithms with polylogarithmic worst-case and constant amortized update time have been known for some time. A major challenge in this area has been designing an algorithm with non-trivial update time against an \emph{adaptive} adversary, who may explicitly tailor the update sequence to the algorithm's choices. In a recent breakthrough, Bernstein, Bhattacharya, Kiss, and Saranurak (STOC~2025; hereafter, BBKS25) obtained the first algorithms with sublinear in $n$ update time for this setting: namely, a randomized algorithm with $\tilde{O}(n^{3/4})$ amortized update time, and a deterministic algorithm with $\tilde{O}(n^{8/9})$ amortized update time. Our main result is a deterministic algorithm for fully dynamic  maximal matching with amortized update time $n^{1/2+o(1)}$.  

A powerful tool in dynamic matching is the use of matching
sparsifiers: sparse subgraphs that preserve enough information to
recover matchings with desired properties. Sparsifiers have been
successfully used for approximate maximum matching, yielding
sublinear update-time algorithms even against adaptive adversaries.
For maximal matching, however, this paradigm is not as natural, since
maximality must hold with respect to the entire graph, and so the
algorithm must be able to detect and repair violations across all edges.
Nevertheless, BBKS25 showed that the EDCS data structure can be
ingeniously repurposed as a verification-and-repair mechanism for
fully dynamic maximal matching against adaptive adversaries.

We introduce a new deterministic framework, referred to as the \emph{subgraph system}, which, in contrast to the EDCS data structure used by BBKS25, is purpose-built for verification and maintenance of maximality. 
The structure of the subgraph system is also carefully designed to allow efficient recursive refinements leading to stronger and stronger parameters.
This recursive approach yields our deterministic algorithm with $n^{1/2+o(1)}$ amortized  update time, and provides a new deterministic framework for one of the central graph optimization problems in the dynamic setting.

\end{abstract}

\end{titlepage}

\tableofcontents

\newpage
\pagenumbering{arabic}

\section{Introduction}

We study the fully dynamic maximal matching problem: given an $n$-vertex graph $G$ that undergoes an online sequence of edge insertions and deletions, the goal is to maintain a matching $M$, that is a maximal matching with respect to the current graph $G$ at all times. 
Maximal matching is a fundamental graph theoretic notion, and algorithms for computing it serve as a key subroutine in dynamic algorithms for coloring, scheduling, and approximate maximum matching.
As such, the problem has been studied extensively in various settings, including that of dynamic algorithms.

In the \emph{oblivious} adversary scenario, where the update sequence is fixed in advance,
fully dynamic maximal matching is by now a well-understood problem.
Baswana, Gupta, and Sen~\cite{BGS11,BGS18} designed the first algorithm with $O(\poly\log n)$ amortized update time,
that was later improved to an $O(1)$ amortized update time by Solomon~\cite{Sol16},
and to $O(\poly\log n)$ worst-case update time by Bernstein, Forster, and Henzinger~\cite{BFH21}.
In contrast, in the setting of the \emph{adaptive adversary}, where the update sequence may depend on the algorithm’s decisions,
progress was long stalled. 
Until very recently, the best known algorithm in this setting, due to Neiman and Solomon~\cite{NS15}, achieved an $O(\sqrt{m})$ worst-case update time, improving upon the earlier $O((n+m)^{1/\sqrt{2}})$ result of Ivković and Lloyd~\cite{IL93}; 
both these algorithms are deterministic. Thus in dense graphs, until very recently, even with randomization, no algorithm was known to achieve a better than $\Theta(n)$ amortized update time against an adaptive adversary—the same update time bound as the trivial approach of restoring maximality by exhaustively searching the neighborhood of a vertex affected by a deletion.
To summarize, while algorithms with $O(1)$ amortized update time have been known for the fully dynamic maximal matching problem in the oblivious adversary setting for some time, in the adaptive adversary setting, even the problem of achieving a sublinear in $n$ amortized update time was open until very recently.

To gain intuition for this stark discrepancy, consider a vertex $v$ of degree $\Omega(n)$. The most challenging updates are deletions of edges that lie in the current matching $M$ maintained by the algorithm, since such deletions require that the affected vertices are rematched. If an edge that is incident to $v$ in the matching $M$ that the algorithm maintains is selected uniformly at random among the $\Omega(n)$ incident edges of $v$ in $G$, then in \emph{expectation}, an oblivious adversary must perform $\Omega(n)$ deletions before removing the matching edge incident on $v$. Thus, even if the algorithm spends $\Theta(n)$ time to rematch $v$, it still achieves an $O(1)$ amortized update time. In contrast, an adaptive adversary can deliberately target matching edges incident to high-degree vertices, forcing  updates that require $\Omega(n)$ time if a naive rematching strategy is used.

As every maximal matching is a $1/2$-approximation to the maximum matching, and since the past decade has seen major progress on algorithms for \emph{approximate maximum matching} in dynamic graphs (e.g., \cite{OR10,GP13,BS15,BS16,BHN16,BCH17,BHI18,ACC+18,Waj20,BDL21,Kiss22,RSW22,BK22,Beh23,ABKL23,BKS23,BKSW24a,BKSW24b,ABR24,Liu24,BG24,AKK25}), it is natural to ask whether techniques employed in these algorithms can be extended to the maximal matching setting.  
A common tool underlying most these algorithms is the notion of a \emph{matching sparsifier}—a sparse subgraph that compresses the input fully dynamic graph $G$, while preserving enough information, so that an approximate matching can be recovered using only the edges of the sparsifier.
Prominent examples of this approach include the use of \emph{kernel} in algorithms for $(1/2-\epsilon)$-approximate matchings~\cite{BHI18}, the use of the \emph{Edge-Degree Constrained Subgraph (EDCS)}
in algorithms for $(2/3-\epsilon)$-approximate matchings~\cite{BS15,BS16}, and algorithms based on Szemerédi’s regularity lemma and
Ruzsa--Szemerédi graphs, that achieve $(1-o(1))$-approximation~\cite{ABKL23,BG24,AKK25}. 
These structures lead to a sublinear in $n$ update time, because a matching of the desired quality can be found while restricting the attention to the sparsifier graph, which is much sparser than $G$. Moreover, these sparsifier structures have proven robust enough to yield sublinear update times even against an adaptive adversary.  
In the context of maintaining a maximal matching, however, this sparsification paradigm is less natural, since maximality must hold with respect to the \emph{entire} graph, and hence requires the ability
to detect and repair violations that may involve edges outside of the
sparsifier. The core algorithmic challenge thus becomes one of
\emph{efficient verification and repair} of maximality.
An additional impediment is that, unlike the setting of approximate matchings, where, for example, a valid $(1 - \Theta(\eps))$-approximate matching remains so even if one ignores the next $\Theta(\eps\cdot n)$ updates~\cite{GP13,AKL19,Kiss22}, maximality is fragile: it can be violated by the insertion or the deletion of even a single edge.

A major breakthrough was achieved very recently by Bernstein, Bhattacharya, Kiss, and Saranurak~\cite{BBKS25},
who obtained a randomized algorithm with $\tilde{O}(n^{3/4})$ amortized update time and a deterministic algorithm with $\tilde{O}(n^{8/9})$ amortized update time; both algorithms work against an adaptive adversary. In their algorithm, \cite{BBKS25} ingeniously repurpose the EDCS data structure as a verification-and-repair mechanism, using it as the algorithmic backbone for maintaining a dynamic maximal matching. 
While \cite{BBKS25} demonstrate that the EDCS framework can be adapted to the maximal matching setting, their approach inherits some structural limitations, since EDCS was originally designed for approximate matchings rather than for enforcing maximality. In particular, an EDCS acts as a \emph{sparse subgraph} of the underlying graph, relying on degree constraints that ensure approximate matchings can be maintained efficiently. 
Some of the challenges of using EDCS as a backbone include a high update time for maintaining the EDCS data structure in the fully dynamic setting~\cite{GSSU22}, and difficulty of finding short augmenting paths efficiently in a dynamically changing graph.

\paragraph{Our Results.} Our main result is a \emph{deterministic} algorithm for maintaining a maximal matching in a fully dynamic graph with $n^{1/2+o(1)}$ amortized update time.
Our approach is different from that of  \cite{BBKS25}: we introduce a new deterministic framework referred to as the \emph{subgraph system}, 
designed specifically to allow efficient \emph{verification and repair of maximality} directly, rather than to sparsify the graph while preserving an approximate matching. Our subgraph system design serves two key purposes.  
First, it allows for efficient deterministic construction (amortized over a batch of updates) and efficient maintenance of some key data structures as the underlying graph evolves. In particular, the subgraph system is designed to allow a recursive composition that gradually yields stronger and stronger properties. Second, it offers a structured way to detect and restore violations of maximality through bounded local operations. In this respect, the subgraph system provides for maximal matching what matching sparsifiers provide for approximate matching: an explicit, combinatorial framework that retains all information necessary to maintain maximality.

\paragraph{Adaptive vs. Oblivious Adversary.}
It is worth noting that maximal matching is not unique in suffering a large performance gap between the oblivious and the adaptive adversary settings. A similar dichotomy
arises in other dynamic 
 problems, such as maximal independent set and $(\Delta+1)$-coloring.
In the oblivious adversary setting, algorithms with polylogarithmic or even constant amortized update times are known for both problems
\cite{BDH+19,CZ19,BGK+22,HP22},
while in the adaptive adversary setting, the best known update times are polynomial in $n$
\cite{GK18,DZ18,BRW25,FH25}.
More generally, such gaps are a recurrent phenomenon in dynamic graph problems. 
A very recent work by Bernstein, Bhattacharya, Fischer, Kiss, and Saranurak~\cite{BBFKS25} establishes conditional update-time separations between adaptive and oblivious adversaries for multiple dynamic symmetry-breaking problems, including maximal independent set (MIS) and maximal clique. 
Specifically, they show that, under the combinatorial Boolean Matrix Multiplication (BMM) hypothesis, every algorithm for the \emph{incremental MIS} problem that works against an adaptive adversary must incur $n^{1-o(1)}$ amortized update time, whereas, in the oblivious adversary setting, polylogarithmic update time algorithms are known. 
Similarly, \cite{BBFKS25} show that, under the 3SUM or the APSP hypotheses, any algorithm for \emph{decremental maximal clique} that works against an adaptive adversary requires \( \Omega(\Delta / n^{o(1)}) \) amortized update time when the initial maximum degree is \( \Delta \le \sqrt{n} \), while in the oblivious adversary setting, a $\poly\log (n)$ update time is achievable. 
Together, these results show that the performance gap between oblivious and adaptive adversaries is a pervasive barrier across dynamic symmetry-breaking and related graph problems.

\subsection{Our Results and Techniques}
\label{subsec: techniques}
Our main result is summarized in the following theorem.

\begin{theorem}\label{thm: main}
There is a deterministic algorithm that, given an $n$-vertex graph $G$ initially containing no edges and undergoing an online sequence of edge insertions and deletions, maintains a maximal matching in $G$ with amortized update time $n^{1/2+o(1)}$.
\end{theorem}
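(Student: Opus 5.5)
We prove \Cref{thm: main} by a phase-based reduction combined with a recursively refined ``subgraph system''. We partition the update sequence into phases of $\Theta(n)$ updates each; all costly structures are rebuilt at phase boundaries in $\tilde O(n^{3/2+o(1)})$ time, so this cost amortizes to $n^{1/2+o(1)}$ per update. Within a phase, the only hard event is the deletion of a matched edge $(u,v)$ where $u$ has large degree, since then $u$ must either be rematched or certified to have no free neighbor. Low-degree vertices (degree at most $\sqrt n$) are handled by scanning their neighborhoods. Insertions are trivial: we add the new edge to $M$ if both endpoints are free. We use a threshold $\Delta=\sqrt n$ and maintain the invariant that every free vertex of degree larger than $\Delta$ is ``certified''.

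The structure we maintain for high-degree vertices is a subgraph system: a collection of vertex-disjoint (or nearly disjoint) subgraphs $H_1,\dots,H_k$ of $G$, each with about $\Delta$ vertices. Each high-degree vertex $u$ is assigned to a subgraph $H_i$ in which it has $\Omega(\Delta)$ neighbors, and $H_i$ is nearly saturated by $M$ in a structured way: its matched vertices are matched along short alternating structures inside $H_i$. When $u$ loses its partner, we search only inside $H_i$, in $\Delta^{1+o(1)}$ time, for a free neighbor of $u$ or a short augmenting or alternating path that frees a neighbor of $u$. If $u$ has many neighbors in $H_i$ and few of them have been deleted in the current phase, such a rematch exists as long as $H_i$ is not ``exhausted''. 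Charging works as follows: each repair consumes $\Omega(1)$ of the $\Delta$ units of slack in $H_i$, and the slack is replenished only by adversarial deletions, so we can afford $\Delta$-time local work per update.

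The construction is recursive. Level $j$ of the recursion builds a subgraph system with parameter $n^{1/2+\epsilon_j}$ using, as a black box, a maximal-matching maintenance algorithm on the smaller, denser graphs $H_i$ at level $j-1$. Each level improves the exponent, and with $O(1/\epsilon)$ levels the per-level overhead of $n^{O(\epsilon)}$ compounds into $n^{o(1)}$ when $\epsilon=1/\sqrt{\log n}$. At the phase start, the construction must be deterministic. We find the $H_i$'s greedily: we repeatedly pick a dense piece around an unassigned high-degree vertex using its neighborhood, and compute a maximal matching inside it statically. Vertices that cannot be placed in any dense piece have few neighbors among matched vertices, so they can be rematched by scanning.

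The main obstacle is the invariant that a high-degree free vertex never has a free neighbor outside its designated subgraph, while adversarial deletions concentrate on exactly the edges of $M$ inside $H_i$. To handle this, I would first try to make the assignment of $u$ lazy. When $H_i$ runs out of slack for $u$, we reassign $u$ by examining a fresh sample of $\Delta$ neighbors. We charge this reassignment to the $\Omega(\Delta)$ deletions that caused the exhaustion, and we bound the number of reassignments per phase by $n^{1/2+o(1)}\cdot(\text{number of updates})/\Delta$. Since the assignment is deterministic, this potential argument must hold against every update sequence. That is why it is charged to deletions rather than to any randomness, and it is the step whose correctness I am least sure of.
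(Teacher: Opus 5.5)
Your outline leaves out the two mechanisms that carry the result, and the step you flag as uncertain is where it breaks. Searching only inside $H_i$ cannot certify maximality. If $u$ stays free, every neighbor of $u$ must be matched, including its possibly $\Theta(n)$ neighbors outside $H_i$ and any neighbors via edges inserted since the last rebuild. The adversary can free a neighbor $x\notin H_i$ by deleting $x$'s matching edge without touching $H_i$. If $x$ has high degree and is not rematched inside its own piece (which need not contain $u$), the edge $(u,x)$ with both endpoints free goes undetected, and your lazy reassignment, charged to deletions inside $H_i$, is never triggered. (The pieces need not even exist: in a random graph of degree $n^{3/4}$, w.h.p.\ every set of $\sqrt n$ vertices contains $\Omega(\sqrt n)$ neighbors of only $O(n^{1/4})$ vertices, so $O(\sqrt n)$ nearly disjoint pieces serve only $o(n)$ of the high-degree vertices.)

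What you need is (i) an invariant that the set of free high-degree vertices is always small, so every newly freed vertex can afford to scan it, and (ii) a stealing rule whose cascade provably terminates at a cheap vertex. In the paper, (i) is the set $\hat S$ of free $S$-vertices, of size $O(n\log^2 n/z)$. This holds because $M^*$ contains a color class $M_1$ of the degree-$\le z$ edge set $M$, in which every $S$-vertex has degree at least $z-k+1$. The class $M_1$ is repaired every $n/z$ updates by augmenting along paths of $M_1\cup M_i$, for a class $M_i$ that still covers most of $S$. For (ii), the stratification $A_1,\dots,A_k,B,U$, with $M$-edges at $A^{\le i}$ avoiding $R_i$, implies that at most $O(n\log^2 n/z)$ vertices of $A^{\le i}$ are matched into $R_i$. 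So scanning that many entries of $L(v)$ finds a neighbor whose partner lies strictly lower. The chain ends in $B\cup U$, where the graph $H$ (free $U$-vertices pointing to their $O(z)$ neighbors in $B\cup U$), together with a scan of $\hat S$, rematches in $\tilde O(z+n/z)$ time. Your ``short alternating path that frees a neighbor of $u$'' only moves the problem to the other end of the path, with no ordering that forces termination.

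Second, once $m\gg n^{3/2}$, rebuilding every $\Theta(n)$ updates in $n^{3/2+o(1)}$ time is impossible for any structure built by reading neighborhoods. Your greedy construction costs $\Theta(m)$ per rebuild, i.e.\ up to $\Theta(n)$ amortized, while lengthening the phases to amortize $m$ gives the adversary enough targeted deletions to destroy the structure. The paper resolves exactly this tension for dense graphs. It uses phases of length $m$ and a hierarchy of subgraph systems whose parameters start at $z_1\approx m/n$ and halve down to about $\sqrt n/\log n$. Only the top system is built from scratch. Each finer $(h+1)$-level $z'$-system is derived from the $h$-level $z$-system in $\tilde O(n^{1+o(1)}z+|E_D|+|E_I|)$ time without reading $G$. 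It does this by keeping the $z'$ color classes least hit by the deferred deletions $E_D$, so only a $z'/z$ fraction of them is carried down, and by reusing the long lists $L(v)$ through the sets $N_i,R_i$. Your recursion, which runs a matching algorithm on the pieces of the previous level, specifies neither how it avoids $\Omega(m)$ work nor where such deletion-thinning would come from. As written, the proposal is an outline whose two load-bearing steps are missing.
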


We now provide a high-level overview of the proof of \Cref{thm: main} and compare it with the approach of \cite{BBKS25}. For clarity of exposition, we consider a slightly different setting, where the input $n$-vertex graph $G$ initially may contain edges, but the only updates that it undergoes are edge deletions, so $G$ is decremental and not fully dynamic. We also assume that the update sequence is sufficiently long, so, for example it contains at least $n^{4/3}$ updates. This setting seems to capture the main challenges of the problem, and extending the algorithm to the fully dynamic setting with an arbitrary number of updates is not difficult.
We start by describing an algorithm with $\tilde O(n^{2/3})$ amortized update time, since this algorithm allows us to convey some of our main ideas, and to motivate the more involved data structures used in the algorithm with $n^{1/2+o(1)}$ amortized time.

\subsection*{Algorithm with $\tilde O\left(n^{2/3}\right )$ Amortized Update Time}
The main new combinatorial object that we introduce is the \emph{$z$-subgraph system}, that we describe next; for clarity of exposition, we omit some technical details that are not essential.

\paragraph{The $z$-subgraph system.}
Let $G$ be a graph and let $1\leq z\leq n$ be an integral parameter. A $z$-subgraph system for $G$ consists of a subset $M\subseteq E(G)$ of edges, and a partition $(A,B,U)$ of the vertices of $G$; we denote by $S=A\cup B$. We require that every vertex in $S$ is incident to exactly $z$ edges of $M$, and every vertex in $U$ is incident to at most $z$ edges of $M$. We also require that, for every vertex $u\in U$, at most $z$ neighbors of $u$ in $G$ may lie in $U$, so $|N_G(u)\cap U|\leq z$. Additionally, we require that the following two crucial properties hold:

\begin{properties}{P}
	\item \label{prop intro UB edges} For every vertex $u\in U$, $|N_G(u)\cap B|\leq 2z$; and
	\item \label{prop: intro A vertices} For every vertex $a\in A$, for every edge $(a,v)\in M$ incident to $a$, $v\in S$ must hold.
\end{properties}

Lastly, the subgraph system must contain, for every vertex $u\in U$, the list $\Lambda(u)=N_G(u)\cap (B\cup U)$ of vertices, whose length must be $O(z)$, and, for every vertex $a \in A$, the list $L(a)=N_G(a)\cap U$ of vertices, whose length may be arbitrary.
We denote by $\Lambda=\set{\Lambda(u)}_{u\in U}$ and $L=\set{L(v)}_{v\in A}$, and we denote the entire $z$-subgraph system by $(S,B,U,M,\Lambda,L)$.

We note that the notion of the $z$-subgraph system is somewhat similar to that of the \emph{Edge-Degree Constrained Subgraph} (EDCS) data structure, that was used in the algorithm of \cite{BBKS25} for dynamic maximal matching. The EDCS data structure was introduced  by  Bernstein and Stein \cite{BS15,BS16} in the context of computing approximate maximum matchings, and it was used in numerous algorithms since (see \cite{BBKS25} for further references). 
Given a pair $0<B^-<B$ of parameters, a $(B, B^{-})$-EDCS of a
graph $G$ is a subgraph $G'\subseteq G$ with $V(G')=V(G)$, that has the following two properties. First, for every edge $(u,v) \in E(G')$, $\deg_{G'}(u) + \deg_{G'}(v) \leq B$ must hold. Second, for every edge $(u,v) \in E(G) \setminus E(G')$,  $\deg_{G'}(u) + \deg_{G'}(v) \geq B^{-}$. Note that the maximum vertex degree in $G'$ is at most $B$, ensuring that $G'$ is a  sparse graph, regardless of the density of $G$, provided that $B$ is chosen to be sufficiently small. The algorithm of \cite{BBKS25} for fully dynamic maximal matching maintains a $(B,(1-\epsilon)B)$-EDCS $G'$ of the input fully dynamic graph $G$, for some parameters $0<\epsilon<1$ and $B>0$,  by using the deterministic algorithm of \cite{GSSU22}, whose worst-case update time is $O\left(\frac{n}{\eps\cdot B}\right)$. In the algorithm of \cite{BBKS25}, the vertices of $G$ are partitioned into three subsets, based on their degree in $G'$: set $\vhigh$ contains all vertices $v$ with $\deg_{G'}(v)\ge \left(\frac{1}{2}+\Theta(\epsilon)\right)B$; set $\vlow$ contains all vertices $v$ with $\deg_{G'}(v)\leq \left(\frac{1}{2}-\Theta(\epsilon)\right)B$; and set $\vmed$ contains all remaining vertices.
The properties of the EDCS data structure are then exploited by \cite{BBKS25} in order to maintain a matching $M'$, in which every vertex of $\vhigh$, and the vast majority of the vertices of $\vmed$ are matched. Additionally, they maintain a maximal matching $M''$ in $G$, that is obtained by extending the matching $M'$ to the vertices of $\vlow$. In order to do so, they exploit the fact that each such vertex may only have relatively few neighbors in $\vlow$. One can think of the parameter $B$ in their EDCS data structure as playing a similar role to the parameter $z$ in the $z$-subgraph system, with vertex sets $\vhigh,\vmed$ and $\vlow$ corresponding to sets $A,B$ and $U$, respectively, in the $z$-subgraph system. However, there are several crucial differences between the two data structures. The first difference is the sharper degree bounds that are required for the vertices of $S$ in the subgraph system. For example, this allows us to decompose the set $M$ of edges into $(z+1)$ disjoint matchings $M_1,\ldots,M_{z+1}$ so that, for every vertex $v\in S$, all but at most one of the matchings $M_i$ contain an edge incident to $v$. Additionally, Properties \ref{prop intro UB edges} and
\ref{prop: intro A vertices} of the $z$-subgraph system are specifically designed so as to make the task of maintaining a maximal matching in $G$ easier, and it is not clear how to ensure these properties in the EDCS data structure. We also believe that the $z$-subgraph system is a simpler data structure, in that, it is both easier to construct and manipulate. In fact our final algorithm for maximal matching with $n^{1/2+o(1)}$ amortized update time uses a \emph{multi-level} generalization of this data structure, that is constructed from the basic data structure using a recursive algorithm. The multi-level subgraph system is specifically designed to allow an efficient recursive refinement of the parameters of the subgraph system.

Our algorithm for dynamic maximal matching uses two main algorithmic tools. The first tool allows us to construct a $z$-subgraph system of a graph $G$ in time $\tilde O(|V(G)|+|E(G)|)$. The second tool allows us to maintain a maximal matching in a graph $G$ undergoing a bounded number of edge deletions, given a $z$-subgraph system for the initial graph $G$. We now describe each of these tools in turn.

\paragraph{Constructing a $z$-subgraph system.}
We provide a simple algorithm, that, given an $n$-vertex and $m$-edge graph $G$, constructs a $z$-subgraph system $(S,B,U,M,\Lambda,L)$ for $G$. 
 The algorithm consists of two steps. In the first step, we construct a $z$-subgraph system for $G$ that has all required properties except for, possibly, Property \ref{prop intro UB edges}. Then in the second step we ``fix'' this subgraph system, to ensure that it has all required properties.

The algorithm for the first step is quite straightforward. We construct a maximal set $M\subseteq E(G)$ of edges with the following property: every vertex of $G$ has at most $z$ edges incident to it in $M$. Such an edge set can be constructed in $O(n + m)$ time using a straightforward greedy algorithm. For every vertex $v\in V(G)$, let $m(v)$ denote the number of edges of $M$ incident to $v$. We then let $S$ contain all vertices $v$ with $m(v)=z$ and $U$ contain all remaining vertices of $G$. From the maximality of $M$, it is immediate to verify that, for every vertex $u\in U$, $|N_G(u)\cap U|\leq z$. We also partition the vertex set $S$ into two subsets: set $A$ containing every vertex $v\in S$ such that every edge $(v,v')$ of $M$ incident to $v$ connects $v$ to a vertex in $S$; and set $B$ containing all remaining vertices of $S$. Finally, we initialize the lists $\set{\Lambda(u)}_{u\in U}$ and $\set{L(v)}_{v\in A}$, in time $O(n + m)$. It is easy to verify that this data structure has all properties required from the $z$-subgraph system, except for, possibly, Property  \ref{prop intro UB edges}.

In the second step, we modify the subgraph system computed in the first step, in order to ensure that it has Property  \ref{prop intro UB edges}. In order to do so, we process every vertex of $U$ one by one. When a vertex $u\in U$ is processed, we check whether $|N_G(u)\cap B|\ge z$ holds. If so, we insert edges connecting $u$ to vertices of $B$ into $M$, until $m(u)=z$ holds, and $u$ can be moved from $u$ to $S$, where it joins the set $B$. Note that, if an edge $(u,v)$ connecting $u$ to a vertex $v\in B$ is inserted into $M$, then, in order to ensure that $m(v)=z$ continues to hold, we need to delete another edge connecting $v$ to a vertex of $U$ from $M$. Whenever a vertex $v\in B$ loses all edges of $M$ connecting it to vertices of $U$ (that is, Property \ref{prop: intro A vertices} holds for $v$), we move it from $B$ to $A$. This ensures that, as long as a vertex $v$ remains in $B$, there is an edge of $M$ connecting it to a vertex of $U$; this, in turn, allows us to perform ``edge switching'' described above, when an edge connecting $v$ to some new vertex $u\in U$ is inserted into $M$, and an edge connecting $v$ to another vertex of $U$ is deleted from $M$. It is not difficult to efficiently implement this step, and to update the lists $\set{\Lambda(u)}_{u\in U}$ and $\set{L(v)}_{v\in A}$ at the end of this step, in time $\tilde O(n+m)$. At the end of this procedure we obtain a valid $z$-subgraph system for $G$.

Our algorithm uses the parameter $z=\floor{n^{2/3}}$. Unlike the algorithm of \cite{BBKS25}, that maintains the EDCS data structure for the input dynamic graph $G$ over the course of the entire algorithm, we only compute the $z$-subgraph system for the graph $G$ from time to time. Specifically, we partition our algorithm into \emph{phases}, each of which spans exactly $r=\floor{n^{4/3}}$ updates by the adversary (except for the last phase that may be shorter). We compute the $z$-subgraph system for $G$ from scratch at the beginning of every phase, in time $\tilde O(|E(G)|+|V(G)|)\leq \tilde O(n^2)$. Since this computation is amortized over $r$ updates to $G$ that occur during the phase, we get that computing the $z$-subgraph system at the beginning of every phase can be executed in amortized $\tilde O(n^{2/3})$ update time. Our second main tool is an algorithm that exploits the $z$-subgraph system in the initial graph $G$ (at the start of a phase) in order to maintain a maximal matching in $G$ over the course of a phase spanning $r$ edge deletions. We describe this algorithm next.

\paragraph{Maintaining a maximal matching via the $z$-subgraph system.} 
We now describe our algorithm for implementing a single phase.
We assume that we are given an $n$-vertex graph $G$ (corresponding to the input dynamic graph $G$ at the beginning of the current phase), that undergoes an online sequence of $r\geq n$ edge deletions. We also assume that we are given a $z$-subgraph system $\sset=(S,B,U,M,\Lambda,L)$ for the initial graph $G$. We now describe an algorithm that exploits this subgraph system in order to maintain a maximal matching in $G$. As our first step, we use the almost linear-time \emph{deterministic} algorithm of \cite{ABB+26} for edge coloring in order to compute, in time $O\left(n^{1+o(1)}\cdot z\right )$, a partition $(M_1,\ldots,M_{z+1})$ of the set $M$ of edges from the subgraph system into $z+1$ disjoint matchings (here, each color class defines a matching). Recall that, from the properties of the subgraph system, every vertex $v\in S$ has $z$ edges incident to it in $M$; so there is at most one index $1\leq i\leq z+1$, such that $v$ is not matched in $M_i$. Therefore, in a typical matching $M_i$, all but at most $\frac{2|S|}{z}$ vertices of $S$ are matched. 
We assume without loss of generality that matching $M_1$ has this property. Over the course of the algorithm, we may modify the matching $M_1$ by deleting some edges from it, and by inserting edges of $M$ into it, so $M_1\subseteq M$ will always hold. The remaining matchings $M_2,\ldots,M_{z+1}$ only undergo the deletions of edges that the adversary deletes from the graph $G$. 
Since $G$ undergoes a sequence of $r$ edge deletions, for most indices $2\leq i\leq z+1$, the number of edges of $M_i$ that are deleted by the adversary over the course of the algorithm is bounded by $\frac{2r}{z}$. It is then not hard to see that, at all times, there is an index $2\leq i\leq z+1$, such that the number of vertices of $S$ that are not matched by $M_i$ is bounded by $\frac{2|S|}{z}+\frac{4r}{z}\leq \frac{32r}{z}$; denote this bound by $\tau$. 
We partition each phase of the algorithm into \emph{subphases}, each of which spans $\floor{\frac{r}{z}} \approx n^{2/3}$ adversarial updates to graph $G$, so that the number of subphases is bounded by $2z$. We will ensure that, at the beginning of every subphase, all but at most $\tau=\frac{32r}{z}$ vertices of $S$ are matched by $M_1$. Over the course of each subphase, our algorithm may delete edges from $M_1$, but it ensures that every edge deletion by the adversary may lead to at most two edge deletions from $M_1$. Since at most $\frac{r}{z}$ adversarial edge deletions may occur during a subphase, this ensures that, overall, at most $\frac{3r}{z}$ edges may be deleted from $M_1$ over the course of a subphase, and so at most $\frac{6r}{z}$ additional vertices of $S$ may become unmatched during a subphase. Therefore, the following crucial invariant holds over the course of the subphase.

\begin{properties}{I}
\item	\label{inv: vertices of S} Throughout the subphase, all but at most $2\tau=\frac{64r}{z}$ vertices of $S$ are matched by $M_1$.
\end{properties}

Over the course of a subphase, our algorithm maintains a matching $M^*$ that contains all edges of $M_1$, such that, at all times, $M^*$ is a maximal matching in $G$.
Recall that our algorithm ensures that $M_1\subseteq M$ holds at all times. Since, from Property \ref{prop: intro A vertices} of the subgraph system, for every vertex $a\in A$, every edge of $M$ incident to $a$ connects it to a vertex of $S$, from Invariant \ref{inv: vertices of S}, we get that all but at most $2\tau$ vertices of $A$ are matched to vertices of $S$ by $M_1$. Recall that $M_1\subseteq M^*$, so the only way that a vertex $a\in A$ is matched to a vertex of $U$ by $M^*$ is if $a$ is not matched by $M_1$. Therefore, the following property is also guaranteed to hold over the course of the subphase:

\begin{properties}[1]{I}
	\item	\label{inv: vertices of A} Throughout the subphase, the number of vertices of $A$ that are matched by $M^*$ to vertices of $U$ is bounded by $2\tau=\frac{64r}{z}$.
\end{properties}

Recall that, at the beginning of a subphase, we are given a matching $M_1$ in $G$, such that all but at most $\tau$ vertices of $S$ are matched by  $M_1$. Throughout the algorithm, we denote by $\hat S$ the collection of all vertices $v\in S$ that are currently not matched by $M^*$. From Property \ref{inv: vertices of S}, $|\hat S|\leq 2\tau$ always holds. We start by setting $M^*=M_1$, and then we extend $M^*$ to be a maximal matching in a relatively straightforward manner, by inspecting all vertices $u\in U$ and checking whether we can insert an edge connecting $u$ to vertices of $U\cup B\cup \hat S$ into $M^*$. We also iteratively check whether an edge connecting a pair of vertices in $\hat S$ can be inserted into $M^*$. Since $|\hat S|\leq O(r/z)$ and, for every vertex $u\in U$, $|N_G(u)\cap U|\leq |\Lambda(u)|\leq O(z)$, this can be done in time $\tilde O(nz+nr/z)$.

Throughout the algorithm, we say that a vertex $v\in V(G)$ is \emph{settled} if either (i) some edge of $M^*$ is incident to $v$; or (ii) for every vertex $y\in N_G(v)$, some edge of $M^*$ is incident to $y$. It is easy to verify that matching $M^*$ is maximal if and only if every vertex of $G$ is settled. Whenever an edge $e=(u,v)$ is deleted from $M^*$, we execute a procedure that attempts to \emph{rematch} each of its endpoints; in order to rematch a vertex $x$, we need to either insert an edge incident to $x$ into $M^*$, or verify that every vertex in $N_G(x)$ is currently matched by $M^*$. In other words, we need to ensure that $x$ is settled. 

Intuitively, it is easy to rematch vertices of $U$ efficiently. Specifically, consider a vertex $u\in U$ that needs to be rematched. Recall that $|N_G(u)\cap U|\leq O(z)$, so we can check, in time $O(z)$, whether some vertex $u'\in N_G(u)\cap U$ is currently not matched by $M^*$, and, if so, insert the edge $(u,u')$ into $M^*$. Additionally, we can scan every vertex of $\hat S$ (recall that $\hat S$ contains all vertices of $S$ that are currently not matched, and that $|\hat S|\leq O(\tau)\leq O(r/z)$), to check whether $u$ can be matched to a vertex of $S$ that is currently unmatched. Therefore, the procedure for rematching a vertex of $U$ can be implemented in time $\tilde O\left(z+\frac{r}{z}\right )$. Moreover, rematching the vertices of $B$ is also relatively easy, as long as we maintain an appropriate data structure: namely, a directed graph $H$, whose vertex set is $B\cup U$, and whose edge set contains, for every vertex $u\in U$ that is currently unmatched in $M^*$, an edge connecting $u$ to every vertex in $N_G(u)\cap (B\cup U)=\Lambda(u)$. Recall that $|\Lambda(u)|\leq O(z)$, so, whenever a vertex $u$ switches its status between matched and unmatched in $M^*$, we can insert all edges connecting $u$ to vertices of $\Lambda(u)$, or delete them from $H$, as needed, in time $\tilde O(z)$. Whenever a vertex $b\in B$ becomes unmatched, we can then check whether $H$ contains an edge $(u,b)$ entering $b$ in $H$. If so, then vertex $u$ must be currently unmatched by $M^*$, and we can insert the edge $(u,v)$ into $M^*$. If $b$ has no incoming edges in $H$, then we are guaranteed that every vertex in $N_G(b)\cap U$ is currently matched by $M^*$. We then scan the vertices of $\hat S$ in order to check whether $b$ can be matched to a vertex of $S$ that is currently unmatched by $M^*$, in time $O(|\hat S|)\leq O(r/z)$. Overall, we can execute a rematching procedure for a vertex of $B\cup U$ in time $\tilde O(z+r/z)$, and we can maintain the graph $H$ that allows us to rematch the vertices of $B$ efficiently in time $\tilde O(z)$ per update.

The main challenge of the algorithm is in rematching the vertices of $A$. The problem is that a vertex $a\in A$ may be incident to a large number of vertices of $U$ in $G$ (recall that all such vertices are stored in list $L(a)$), and, moreover, every vertex of $U$ may be incident to a large number of vertices of $A$. Therefore, whenever a vertex $a\in A$ loses an edge of $M^*$ that is incident to it, it is not clear how to check whether $a$ can be matched to a vertex of $U$ efficiently. Our main idea is to give the vertices of $A$ a \emph{priority} in rematching, exploiting the fact that vertices of $B\cup U$ are relatively easy to rematch. Therefore, when attempting to rematch a vertex $a\in A$, we allow ourselves to delete at most one edge from $M^*$ (and from $M_1$), provided that this edge deletion only affects vertices in $B\cup U$, which we can then rematch efficiently. Specifically, recall that, from Invariant \ref{inv: vertices of A}, at most $2\tau=\frac{64r}{z}$ vertices of $U$ may be matched to vertices of $A$ at all times. Therefore, 
either $L(a)=N_G(a)\cap U$ contains at most $\frac{64r}{z}$ vertices, or among the first $\frac{64r}{z}+1$ vertices of $L(a)$, there must be some vertex $u$ that is not currently matched to a vertex of $A$ by $M^*$. In the latter case, we insert the edge $(a,u)$ into $M^*$, and, if another edge $(u,u')$ incident to $u$ lies in $M^*$, we delete this edge from $M^*$ (and from $M_1$, if $(u,u')\in M_1$). Notice that $u'\in B\cup U$ must hold, so we can rematch $u'$ efficiently. Overall, in order to rematch a vertex  $a\in A$, we only need to scan $O(r/z)$ vertices of $L(a)$, and $O(r/z)$ vertices of $\hat S$. Therefore, as long as Invariant \ref{inv: vertices of A} holds, we can maintain a maximal matching in $M^*$ in amortized update time $\tilde O\left (z+\frac{r}{z}\right )$.

From our discussion so far, in order to guarantee that Invariant \ref{inv: vertices of A} holds, it is enough to ensure that, at the beginning of every subphase, all but at most $\tau=\frac{32r}{z}$ vertices of $S$ are matched by $M_1$. As shown above, we are guaranteed that, at all times, there is an index $2\leq i\leq z+1$, such that all but at most $\tau$ vertices of $S$ are matched by $M_i$. Therefore, the most natural approach would be to simply replace $M_1$ with a matching $M_i$ that has this property at the beginning of every subphase. The problem is that we would then  need to extend this matching to a maximal one by inspecting the vertices of $U$ as before, and to construct the data structure $H$ from scratch, which is too time consuming (we can afford to do so once at the beginning of the algorithm, but not at the beginning of every phase). Instead, we use the graph $M_i\cup M_1$ in order to gradually augment the matching $M_1$ via augmenting paths originating at vertices of $S$ that are unmatched by $M_1$, so as to reduce the number of such vertices. This approach ensures that only $O(r/z)$ vertices of $G$ may switch their status between matched and unmatched in $M_1$, which, in turn, allows us to repair the resulting matching $M^*$ efficiently. At the end of this ``repair'' procedure, we are guaranteed that all but at most $\tau$ vertices of $S$ are matched by $M_1$, and that $M^*$ is a maximal matching for $G$ with $M_1\subseteq M^*$.

Overall, our algorithm spends $O\left(n^{1+o(1)}\cdot z\right )$ time
in order to compute the initial collection $M_1,\ldots,M_{z+1}$ matchings. After that, it spends $\tilde O\left (z+\frac{r}{z}\right )$ time per update. Intuitively, we spend $O(z)$ time per update in order to rematch vertices of $U$ (as each such vertex may have up to $\Theta(z)$ neighbors in $U\cup B$); equivalently, every time a vertex of $U$ switches its status between matched and unmatched in $M^*$, we may need to spend up to $\Theta(z)$ time in order to update its outgoing edges in graph $H$. Additionally, since the adversary may delete $r$ edges over the course of the algorithm, it is possible that, for every matching $M_i$, $\Omega(r/z)$ vertices of $S$ become unmatched in it. Therefore, in particular, $M_1$ may contain as many as $\Theta(r/z)$ vertices of $S$ that are currently unmatched. Whenever any vertex $v$ of $G$ becomes unmatched, we need to inspect all vertices of $\hat S$, in order to check whether $v$ can be matched to these vertices, which may require $\Theta(r/z)$ time. (Note  a vertex of $U$ can be incident to an arbitrarily large number of vertices of $A$ in $G$ and vice versa, so we do not have a more efficient way to check whether a newly unmatched vertex can be rematched to vertices of $A$).
We note that it is not hard to adapt this algorithm to work with a slightly more relaxed definition of the $z$-subgraph system, where we only require that, for every vertex $v\in S$, $z-k\leq m(v)\leq z$ holds for some parameter $k<z$; in this case, the amortized update time of the algorithm is bounded by $\tilde O\left (z+\frac{rk}{z}\right )$. We will use this relaxed definition of the subgraph system later.

Overall, including the time required to compute the $z$-subgraph system at the beginning of every phase, our algorithm spends time $\tilde O\left(n^{2}+n^{1+o(1)}\cdot z+r\cdot \left(z+\frac{r}{z}\right )\right )$ per phase, which leads to \\$\tilde O\left(\frac{n^{2}+n^{1+o(1)}\cdot z}{r}+z+\frac{r}{z}\right )$ amortized update time, as every phase contains $r$ adversarial updates to $G$. Substituting $z=\floor{n^{2/3}}$ and $r=\floor{n^{4/3}}$, we get $\tilde O\left (n^{2/3}\right )$ update time overall.

\subsection*{Obtaining a Faster Algorithm}

Observe that our algorithm for maintaining a maximal matching from a $z$-subgraph system has amortized update time $\tilde O\left(\frac{r}{z}+z\right )$; at a very high level, additive factor $z$ accounts for rematching vertices of $U$, since each such vertex may have up to $\Theta(z)$ neighbors that lie in $U$, while additive factor $\frac{r}{z}$ accounts for the fact that every matching $M_i$ may have up to $\Theta\left(\frac{r}{z}\right )$ vertices of $A$ that are not matched in it. Both these bounds appear to be fundamental bottlenecks for this approach. Therefore, in order to obtain a faster amortized update time, we need to make sure that both the parameters $z$ and $r$ are significantly smaller; in particular, the length $r$ of every phase must be reduced. However, computing the $z$-subgraph system from scratch at the beginning of every phase requires $\Omega(|E(G)|)$ time, which may be as high as $\Omega(n^2)$, and decreasing the length of each phase means that this computation needs to be repeated more often, which will lead to a higher amortized update time.

In order to overcome this difficulty, we use a recursive approach. We use a parameter $r_1$ that is sufficiently high, and partition the timeline into \emph{level-1 phases}, each of which spans $r_1$ adversarial edge deletions (except for possibly the last phase that may be shorter). At the beginning of every level-$1$ phase, we construct a $z_1$-subgraph system $\sset$ for $G$ exactly as before; in order to ensure that this subgraph system does not sustain too much damage over the course of the phase due to edge deletions, we need to ensure that the parameter $z_1$ is sufficiently high. However, 
we do not use this subgraph system directly in order to maintain a maximal matching in $G$. Instead, we design an algorithm that, given a $z_1$-subgraph system for $G$, constructs a new $z_2$-subgraph system, for any chosen parameter $z_2<z_1$, in time roughly $O\left(n^{1+o(1)}\cdot z_1\right )$; note that, if the input graph $G$ is sufficiently dense and the parameter $z_1$ is not too high, this running time may be significantly lower than $\Theta(|E(G)|)$. We then select an appropriate parameter $r_2<r_1$ and partition every level-1 phase into \emph{level-2 phases}, each of which spans $r_2$ adversarial edge deletions. Consider now a level-$1$ phase $\Phi$, and let $\sset$ be the $z_1$-subgraph system computed at the beginning of Phase $\Phi$. Let $\Phi'\subseteq \Phi$ be a level-$2$ phase that is contained in $\Phi$.
We denote by $\tau$ and $\tau'$ the times when the phases $\Phi$ and $\Phi'$ started,  respectively, and we denote by $G\attime$ and $G\attime[\tau']$ the graph $G$ at time $\tau$ and $\tau'$, respectively.
 Let $E_D$ denote the set of all edges that the adversary deleted from $G$ during the time interval $[\tau,\tau')$. We do not delete the edges of $E_D$ from the subgraph system $\sset$. Instead, we use $\sset$ in order to construct a subset $E'_D\subseteq E_D$ of at most $\frac{z_2}{z_1}\cdot |E_D|$ edges, and a $z_2$-subgraph system $\sset'$ for the graph $G'=G\attime\setminus (E_D\setminus E'_D)$ (note that, equivalently, $G'$ is the graph that is obtained from the current graph $G\attime[\tau']$ by inserting the edges of $E'_D$ into it). Then we apply the algorithm for maintaining a maximal matching in $G'$, given a $z_2$-subgraph system $\sset'$ for it. This algorithm first processes the deletions of the edges of $E'_D$ from $G'$ as adversarial updates to ensure that $G'=G\attime[\tau']$, before processing the edge deletions from $G$ that the adversary performs during the level-$2$ phase $\Phi'$. Since our algorithm for constructing a $z_2$-subgraph system from a $z_1$-subgraph system is more efficient than the $\tilde O(|E(G)|)$-time algorithm that constructs a subgraph system for $G$ from scratch (if $G$ is sufficiently dense), we can afford to execute it more often. This allows us to ensure that the length $r_2$ of level-2 phases is shorter, and the corresponding parameter $z_2$ is smaller, leading to a faster running time of the algorithm that maintains a maximal matching given a subgraph system.

We now briefly describe our algorithm for constructing a $z_2$-subgraph system from a $z_1$-subgraph system, in order to motivate the multi-level subgraph system that our algorithm uses. Recall that we are given as input a graph $G$ (which we previously denoted by $G'$), a $z_1$-subgraph system $\sset=(S,B,U,M,\Lambda,L)$ for $G$, and a subset $E_D\subseteq E(G)$ of edges (these are the edges that were already deleted from $G$ by the adversary but we do not implement these edge deletions yet, that is, $E_D\subseteq E(G)$ currently holds). Our goal is to construct a subset $E'_D\subseteq E_D$ of at most $\frac{z_2}{z_1}\cdot |E_D|$ edges, and a $z_2$-subgraph system $\sset'=(S',B',U',M',\Lambda',L')$ for the graph $\tilde G= G\setminus (E_D\setminus E'_D)$.
In order to do so, we use the almost linear-time deterministic algorithm of \cite{ABB+26} for edge coloring in order to compute, in time $O\left(n^{1+o(1)}\cdot z_1\right )$, a partition $(M_1,\ldots,M_{z_1+1})$ of the set $M$ of edges from the subgraph system $\sset$ into $z_1+1$ disjoint matchings. For all $1\leq i\leq z_1+1$, we denote by $w_i=|M_i\cap E_D|$, and we reindex the matchings so that $w_1\leq w_2\leq\cdots\leq w_{z_1+1}$ holds. We then let $M'=\bigcup_{i=1}^{z_2}M_i$ and $E'_D=E_D\cap M'$. It is easy to verify that $|E'_D|\leq\frac{z_2}{z_1}\cdot |E_D|$, as required. Let $\tilde G=G\setminus (E_D\setminus E'_D)$, and let $\sset'=(S',B',U',M',\Lambda',L')$ be a subgraph system for $\tilde G$, where $M'$ is the set of edges we have just constructed, $S'=S$, $B'=B$, and $U'=U$; for now we ignore the data structures $\Lambda'$ and $L'$ but we will revisit them later. For every vertex $v\in V(\tilde G)$, let $m'(v)$ denote the number of edges of $M'$ incident to $v$. Recall that, from the properties of the $z_1$-subgraph system $\sset$, for every vertex $v\in S$, $m(v)=z_1$ held, so there is at most one matching $M_i$ that does not contain an edge incident to $v$. It is then easy to verify that, for every vertex $v\in S'$, $z_2-1\leq m'(v)\leq z_2$ must hold, and for every vertex $v\in V(\tilde G)$, $m'(v)\leq z_2$. The subgraph system $\sset'$ therefore \emph{almost} has all required properties, except for the following: first, it is required that, for every vertex $u\in U$, $|N_{\tilde G}(u)\cap U|\leq z_2$ holds, but we are only guaranteed that 
$|N_{\tilde G}(u)\cap U|\leq z_1$ from the properties of the $z_1$-subgraph system $\sset$. Additionally, it is required that, for every vertex $u\in U$, $|N_{\tilde G}(u)\cap B|\leq z_2$ holds, but we are only guaranteed that
$|N_{\tilde G}(u)\cap B|\leq 2z_1$ from the properties of the $z_1$-subgraph system $\sset$. In our next step, we ``fix'' this problem by processing every vertex $u\in U$ one by one. Recall that, for each such vertex $u$, we are given the list $\Lambda(u)=N_{G}(u)\cap (B\cup U)$ of vertices as part of the $z_1$-subgraph system $\sset$. If $\Lambda(u)$ contains at least $ z_2$ vertices of $B'\cup U'$, we insert into $M'$ additional edges connecting $u$ to vertices of $U'\cup B'$, so that $m'(u)=z_2$ holds, and then move $u$ from $U'$ to $S'$, where it joins the vertex set $B'$. As before, if we insert into $M'$ an edge $(u,v)$ with $v\in B'$, then we may need to delete another edge connecting $v$ to a vertex of $U'$ from $M'$ to ensure that $m'(v)\leq z_2$ continues to hold. As before, every vertex of $B'$ that loses all edges of $M'$ connecting it to vertices of $U'$ is ``promoted'' to set $A'$. It is not difficult to implement this step to run in time $\tilde O(nz_1)$, by exploiting the lists $\Lambda(u)$ for vertices $u\in U$ that are given as part of the subgraph system $\sset$. 

Generally, this approach allows us to compute the desired $z_2$-subgraph system $\sset'$ for the graph $\tilde G$, except for the following critical issue: at the end of this procedure we need to construct, for every vertex $v\in A'$, a list $L'(v)=N_{\tilde G}(v)\cap U'$ of vertices. In order to do so, we can exploit the data structures $L\cup \Lambda$ that are given as part of the $z_1$-subgraph system $\sset$. The problem is that, for every vertex $a\in A$, the initial list $L(v)$ may be arbitrarily long, and it is possible that every vertex $u\in U$ has a large number of neighbors in $\tilde G$ that lie in $A$. Therefore, if $u$ is promoted from $U'$ to $S'$, we may need to update a large number of the lists $L'(v)$, which we cannot afford. At the same time, these lists are crucially used by our algorithm for maintaining a maximal matching in $G$, so it is important to maintain them correctly.

In order to overcome this difficulty, we slightly change the definition of the subgraph system, to a \emph{multi-level subgraph system}. The new definition allows us to avoid the difficulty outlined above, so we do not need to update the lists $L(v)$ for vertices that lie in the initial set $A$. We then slightly modify the algorithm for maintaining a maximal matching in $G$, so that it can work with this modified definition of the subgraph system.

We now provide the description of the 2-level subgraph system. In order to provide intuition and motivation for its construction, consider the initial $z_1$-subgraph system $\sset=(S,B,U,M,\Lambda,L)$, and the (incomplete) new $z_2$-subgraph system $\sset'=(S',B',U',M')$ that we constructed above (recall that we did not construct the required data structures $\Lambda'$ and $L'$).

The $2$-level $z$-subgraph system  for a graph $G$ is defined very similarly to the basic $z$-subgraph system: it consists of a partition $(A,B,U)$ of $V(G)$, where we denote by $S=A\cup B$, and a set $M$ of edges of $G$. As before, for every vertex $v\in V(G)$, we denote by $m(v)$ the number of edges of $M$ incident to $v$. We require that, for every vertex $v\in S$, $z-1\leq m(v)\leq z$ holds, and for every vertex $u\in U$, the following three properties hold: $m(u)\leq z$; $|N_G(u)\cap U|\leq z$; and $|N_G(u)\cap B|\leq 2z$.
For every vertex $u\in U$, the subgraph system also needs to contain the list $\Lambda(u)=N_G(u)\cap (B\cup U)$ of vertices, whose length must be $O(z)$. So far, the definition of the $2$-level $z$-subgraph system is essentially identical to that of the basic $z$-subgraph system. The main new ingredient is that the 2-level subgraph system must contain a partition $(A_1,A_2)$ of $A$ (for intuition, we can think of $A_1$ as the set $A$ of vertices in the initial subgraph system $\sset$, and of $A_2$ as the set $A'\setminus A$ of all vertices that were added to $A$ when constructing $\sset'$). Additionally, the subgraph system must contain a subset $N_1\subseteq A_2\cup B$ of vertices, such that, for every vertex $v\in A_1$, for every edge $(v,v')\in M$ that is incident to $v$, it must be the case that $v'\in A_1\cup N_1$. Intuitively, $N_1$ corresponds to the set $B$ of vertices in the initial subgraph system $\sset$. We then denote by $R_1=V(G)\setminus (A_1\cup N_1)$, and require that, for every vertex $v\in A_1$, the list $L(v)$ contains all vertices in $N_G(v)\cap R_1$. Intuitively, $R_1$ corresponds to the initial set $U$ of vertices in the subgraph system $\sset$, and this definition is designed so that the lists $L(v)$ for vertices $v\in A_1$ do not need to undergo major updates when constructing the $z_2$-subgraph system $\sset'$ from the $z_1$-subgraph system $\sset$. For every vertex $v\in A_2$, we require, as before, that, for every edge $(v,v')\in M'$ incident to $v$, $v'\in S$ holds, and that $L(v)$ contains all vertices of $N_G(v)\cap U$. For consistency of notation, we denote $N_2=B$ and $R_2=U$. This definition ensures that the algorithm we outlined above can indeed be used in order to construct a $z_2$-subgraph system from a $z_1$-subgraph system in time $O(n^{1+o(1)}\cdot z_1)$, though the resulting $z_2$-subgraph system will be a $2$-level system. We then adapt our algorithm for maintaining a maximal matching from a subgraph system, so that it works with a $2$-level subgraph system.

The latter algorithm remains largely unchanged, except that Invariant \ref{inv: vertices of S}  now implies that, in addition to Invariant \ref{inv: vertices of A}, the following invariant also holds:

\begin{properties}[2]{I}
	\item	\label{inv: vertices of A1} Throughout the phase, the number of vertices of $A_1$ that may be matched by $M^*$ to vertices of $R_1$ is bounded by $2\tau=\frac{64r}{z}$.
\end{properties}

The remainder of the algorithm remains unchanged, except in how it handles rematching the vertices of $A$. The idea here is that it is easy to rematch vertices of $B\cup U$ using the same algorithm as before. Whenever we need to rematch a vertex $a\in A_2$, we also employ the same algorithm as before, that may delete an edge that unmatches a vertex of $B\cup U$, which is then rematched as before. However, rematching vertices of $A_1$ is done slightly differently. In order to rematch a vertex $a\in A_1$, we scan the first $2\tau+1$ vertices of the list $L(a)=N_G(a)\cap R_1$, until we encounter the first vertex $v$ that is not matched to a vertex of $A_1$. From Invariant \ref{inv: vertices of A1}, some vertex $v$ from among the first $2\tau+1$ vertices of $L(a)$ must have this property. We then insert the edge $(a,v)$ into $M^*$, and, if another edge $(v,v')$ incident to $v$ currently lies in $M^*$, we delete it from $M^*$ (and from $M_1$ if it lies there). Notice that, from the choice of $v$, it may not lie in $A_1$, so we can rematch it using one of the procedures outlined above.

In order to optimize the final running time of the algorithm, we use a natural extension of the above $2$-level construction to an arbitrary number $k$ of levels. Our final algorithm employs $k=O(\log n)$ levels, and parameters $z_1>z_2>\cdots>z_k$, where $z_1=n$, $z_k$ is close to $\sqrt{n}$, and for all $1< i\leq k$, $z_i=z_{i-1}/2$. As before, we partition the update sequence into a number of level-$1$  phases of an appropriate length, and, at the beginning of every level-$1$ phase, we construct a basic $z_1$-subgraph system using the $\tilde O(m+n)$-time algorithm outlined above. Generally, for all $1<i\leq k$, every level-$(i-1)$ phase is partitioned into two level-$i$ phases of roughly identical length, and, at the beginning of every level-$i$ phase, we construct an $i$-level $z_i$-subgraph system using the $(i-1)$-level $z_{i-1}$ subgraph system constructed at the beginning of the current level-$(i-1)$ phase. This way we obtain, at the beginning of every level-$k$ phase, a $k$-level $z_k$-subgraph system. Our recursive construction ensures that $z_k$ is close to $\sqrt{n}$, and that every $k$-level phase spans roughly $n$ adversarial updates. We then employ the algorithm we outlined above in order to maintain a maximal matching in $G$ over the course of a level-$k$ phase, using the $k$-level $z_k$-subgraph system for the graph $G$ at the beginning of that phase.

\section{Preliminaries}
\label{sec: prelims}

All logarithms in this paper are to the base of $2$. Given an integer $k\geq 1$, we denote by $[k]=\set{1,\ldots,k}$. 

By default, all graphs considered in this paper are undirected, unless stated otherwise. Given a graph $G=(V,E)$, for every vertex $v\in V$, we denote by $\delta_G(v)$ the set of all edges of $E$ that are incident to $v$, by $N_G(v)=\set{u\in V\mid (u,v)\in E}$ the set of neighbors of $v$ in $G$, and by $\deg_G(v)=|\delta_G(v)|=|N_G(v)|$ the degree of $v$ in $G$.

Given a graph $G=(V,E)$, a \emph{matching} in $G$ is a subset $M\subseteq E$ of edges, such that, for every vertex $v\in V$, at most one edge in $M$ is incident to $v$. We say that a vertex $v$ is \emph{matched} by $M$, if $M$ contains an edge incident to $v$, and we say that it is \emph{not matched}, or \emph{unmatched} by $M$ otherwise. A matching $M$ in $G$ is \emph{maximal} if, for every edge $e\in E\setminus M$, the edge set $M\cup \set{e}$ is not a valid matching; equivalently, for every pair $u,v\in V$ of vertices that are not matched by $M$, $(u,v)\not\in E$ must hold.

For a \emph{directed} graph $G=(V,E)$ and a vertex $v\in V$, the vertices in set $N^{-}_{G}(v)=\{u\in V\mid (u,v)\in E\}$ are called the \emph{in-neighbors of $v$}, and the vertices in  set  $N^{+}_{G}(v)=\{u\in V\mid (v,u)\in E\}$ are called the \emph{out-neighbors of $v$}.

Throughout the paper, all lists are implemented as binary search trees to support $O(\log n)$-time insertion, deletion and lookup operations. All dynamic graphs are stored in the adjacency-list representation, with each vertex maintaining its neighbors using a binary search tree.

\paragraph{Fully dynamic graphs.} A fully dynamic graph is a graph $G$ that undergoes an online sequence of adversarial updates, where every update either deletes a single edge from $G$ or inserts an edge into $G$. We usually denote by $G\attime[0]$ the \emph{initial graph $G$}, before any updates, and, for an integer $\tau>0$, we denote by $G\attime$ the graph $G$ at time $\tau$, that is, after the first $\tau$ adversarial updates are applied to it.

\subsection{Settled Vertices and Maximal Matching}
Our algorithm uses the notion of settled vertices that we define next.

\begin{definition}[Settled vertices]\label{def: settled}
	Let $G$ be a graph and let $M$ be a matching in $G$. We say that a vertex $v\in V(G)$ is \emph{settled} with respect to $M$, if either (i) $M$ contains an edge incident to $v$; or (ii) for every vertex $y\in N_G(v)$, $M$ contains an edge incident to $y$.
\end{definition}

It is immediate to verify that a matching $M$ in a graph $G$ is maximal if and only if every vertex of $G$ is settled with respect to $M$. We will also use the following simple observation.

\begin{observation}\label{obs: settled if neighbors settled}
	Let $G$ be a graph, let $M$ be a matching in $G$, and let $v\in V(G)$ be a vertex of $G$. Assume further that every vertex in $N_G(v)$ is settled with respect to $M$. Then $v$ is settled with respect to $M$.
\end{observation}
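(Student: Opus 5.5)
The plan is a short case analysis on whether $v$ itself is matched, using the definition of settled vertices (Definition~\ref{def: settled}) directly. If $M$ contains an edge incident to $v$, then $v$ is settled by condition (i), and we are done. So the only case requiring work is when $v$ is not matched by $M$. In that case we must verify condition (ii), namely that every neighbor $y\in N_G(v)$ is matched by $M$.

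Fix such a neighbor $y\in N_G(v)$. By assumption, $y$ is settled with respect to $M$, so either (i) $M$ contains an edge incident to $y$, or (ii) every vertex in $N_G(y)$ is matched by $M$. In the first case we directly obtain what we need for $y$. In the second case, note that $v\in N_G(y)$ because $(v,y)\in E(G)$ and the graph is undirected. Condition (ii) for $y$ would then force $v$ to be matched by $M$, contradicting our assumption that $v$ is unmatched. Hence alternative (ii) cannot be the one certifying that $y$ is settled, and $y$ must be matched by $M$.

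Since $y$ was an arbitrary neighbor of $v$, every vertex of $N_G(v)$ is matched by $M$. This is exactly condition (ii) for $v$, so $v$ is settled. The degenerate case $N_G(v)=\emptyset$ is covered automatically, since condition (ii) then holds vacuously. There is no real obstacle here. The only point needing care is the symmetry step: $v\in N_G(y)$ whenever $y\in N_G(v)$, which is what rules out the second alternative for unmatched $v$.
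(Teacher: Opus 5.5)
Your proof is correct and uses the same key step as the paper: an unmatched neighbor $y$ of $v$ can only be settled via condition (ii), which forces $v$ to be matched. The paper splits cases on whether all neighbors of $v$ are matched rather than on whether $v$ itself is matched, but this is the same argument in contrapositive form.
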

\begin{proof}
	Clearly, if every vertex in $N_G(v)$ is matched by $M$, then $v$ is settled with respect to $M$. Assume now that this is not the case, so some vertex $y\in N_G(v)$ is not matched by $M$. Since $y$ is settled with respect to $M$ but it is not matched by $M$, every neighbor of $y$ must be matched by $M$, so in particular $v$ must be matched by $M$.
\end{proof}

Lastly, consider the scenario where we are given a graph $G$ and a matching $M$ that is maximal in $G$. Assume now that the matching $M$ undergoes a sequence of edge deletions and insertions, while the graph $G$ remains fixed, and denote the resulting matching by $M'$. Let $E_D$ be the set of edges that were deleted from $M$, and let $V_D$ be the set of all vertices that serve as endpoints of $E_D$. In the following simple observation we prove that, if every vertex of $V_D$ is settled with respect to $M'$, then $M'$ must be a maximal matching.

\begin{observation}\label{obs: maximal matching transformation}
	Let $G$ be a graph and let $M,M'$ be a pair of matchings in $G$, where $M$ is a maximal matching. Denote by $E_D=M\setminus M'$, and by $V_D$ the set of all vertices that serve as endpoints of the edges of $E_D$. Assume further that every vertex in $V_D$ is settled with respect to $M'$. Then $M'$ is a maximal matching in $G$.
\end{observation}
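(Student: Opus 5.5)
By the remark following \Cref{def: settled}, $M'$ is maximal if and only if every vertex of $G$ is settled with respect to $M'$. So I will fix an arbitrary vertex $v\in V(G)$ and show that it is settled with respect to $M'$, splitting into cases according to its status under $M$ and $M'$.

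\emph{Case 1: $v\in V_D$.} Then $v$ is settled by assumption, and there is nothing to prove.

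\emph{Case 2: $v\notin V_D$ and $v$ is matched by $M$.} Let $e\in M$ be the edge incident to $v$. If $e\notin M'$, then $e\in E_D$ and so $v\in V_D$, a contradiction. Hence $e\in M'$, $v$ is matched by $M'$, and $v$ is settled.

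\emph{Case 3: $v$ is not matched by $M$.} Since $M$ is maximal, every $y\in N_G(v)$ is matched by $M$. If such a $y$ is not in $V_D$, then by the argument of Case 2, $y$ is matched by $M'$, and so in particular it is settled. If $y\in V_D$, it is settled by assumption. Thus every neighbor of $v$ is settled with respect to $M'$, and \Cref{obs: settled if neighbors settled} gives that $v$ is settled.

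Combining the three cases, every vertex is settled with respect to $M'$, so $M'$ is maximal. The only step needing any care is Case 3: there one must use \Cref{obs: settled if neighbors settled} rather than trying to show directly that all neighbors of $v$ are matched in $M'$, since a neighbor $y\in V_D$ may be settled only because all of its own neighbors are matched.
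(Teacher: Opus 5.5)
Your proof is correct and uses the same two facts as the paper's, packaged differently. The paper argues by contradiction on a single edge $(u,v)$ with both endpoints unmatched by $M'$. Maximality of $M$ makes one endpoint, say $u$, matched by $M$, so $u\in V_D$ (your Case~2 reasoning). Since $u$ is settled but unmatched in $M'$, $v$ must be matched, a contradiction. Your vertex-by-vertex verification routes these facts through Observation~\ref{obs: settled if neighbors settled}, whose proof is exactly this contradiction, so the paper's version is simply shorter.

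Your closing remark is slightly off. The direct approach does work once you condition on $v$ being unmatched by $M'$ (if $v$ is matched it is trivially settled). In that case, a neighbor $y\in V_D$ that is unmatched by $M'$ could only be settled via condition (ii), which would force $v$ to be matched. Hence every neighbor of $v$ is matched by $M'$, and this is precisely the paper's argument.
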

\begin{proof}
	Assume otherwise. Then there is an edge $(u,v)\in E(G)$, such that both $u$ and $v$ are unmatched by $M'$. Since matching $M$ is maximal in $G$, at least one of the vertices $u,v$ is matched by $M$; assume w.l.o.g. that it is $u$. Since $u$ is not matched by $M'$, the unique edge of $M$ incident to $u$ lies in $E_D$, and $u\in V_D$ must hold. Since vertex $u$ is settled with respect to $M'$, but it is not matched by $M'$, every vertex in $N_G(u)$ must be matched by $M'$. But then $v$ must be matched by $M'$, a contradiction.
\end{proof}

\subsection{Edge Coloring} 
Given a graph $G$, an \emph{edge coloring of $G$ with $k$ colors} is an assignment of a color $c(e)\in \set{1,\ldots,k}$ to every edge $e\in E(G)$, such that, for every vertex $v\in V(G)$, all its incident edges are assigned distinct colors. Equivalently, for all $1\leq i\leq k$, the set $\set{e\in E(G)\mid c(e)=i}$ of edges must define a matching. 

Vizing's theorem \cite{Viz64} states that any simple $n$-vertex and $m$-edge graph with maximum degree $\Delta$ can be edge-colored using at most $\Delta+1$ colors, and the algorithmic proof yields a deterministic $O(mn)$-time algorithm for finding such a coloring. Subsequent work has focused on improving this running time, culminating in a randomized $O(m\log \Delta)$-time algorithm \cite{edge-coloring} and a very recent deterministic $O(m^{1+o(1)})$-time algorithm \cite{ABB+26}. We use the latter algorithm, summarized in the following theorem.

\begin{theorem}[Theorem 1.1 of \cite{ABB+26}] \label{thm: edge coloring}
There is a deterministic algorithm, that, given a simple $n$-vertex and $m$-edge graph $G$ with maximum vertex degree $\Delta$, computes an edge coloring of $G$ with $(\Delta+1)$ colors. The running time of the algorithm is $O\left(m\cdot 2^{O(\sqrt{\log \Delta})}\cdot \log n\right ) \leq O\left(m\cdot n^{o(1)}\right )$.
\end{theorem}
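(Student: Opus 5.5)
This is Theorem~1.1 of \cite{ABB+26}, which the paper imports as a black box. I would therefore not reprove it inside this paper; the plan below is only how I would reconstruct its proof if asked. The overall strategy is divide and conquer by degree halving, followed by a repair step that brings the palette back down to $\Delta+1$. The running time should come out to $m$ times the per-level repair overhead, summed over $O(\log\Delta)$ levels, and the target is to keep the accumulated overhead at $2^{O(\sqrt{\log\Delta})}$.

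\textbf{Step 1 (splitting).} Add a dummy vertex joined to every odd-degree vertex and compute an Euler tour in $O(m)$ time. Assigning the tour's edges alternately to two classes splits $E(G)$ into $E_1,E_2$, each inducing a subgraph of maximum degree at most $\lceil\Delta/2\rceil$.

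\textbf{Step 2 (recursion and combining).} Color each half recursively with disjoint palettes of size $\lceil\Delta/2\rceil+1$. Their union is a proper coloring with roughly $\Delta+2$ colors, so one or two color classes too many.

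\textbf{Step 3 (removing the extra classes).} Take a smallest surplus color class, which has at most $O(m/\Delta)$ edges, uncolor all of its edges, and reinsert them one at a time using Vizing's fan-and-alternating-path recoloring within a palette of $\Delta+1$ colors.

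\textbf{Main obstacle.} The difficulty lies entirely in Step 3. A single Vizing alternating path can have length $\Theta(n)$, so naively the repair at one level costs $O(m/\Delta\cdot n)$ time, which is too much. The fix has two parts.
\begin{itemize}
\item First, process the uncolored edges in batches. For each batch, choose the pair of colors $(\alpha,\beta)$ of the alternating paths so that the paths in the batch are collectively short. A counting argument over the $\Delta+1$ choices of color pair, against the fixed budget of $m$ colored edges, should show that the average path length is about $m/(\Delta\cdot(\#\text{uncolored edges}))$. Deterministically, this would be done by bucketing the uncolored edges by their missing colors and always extending the currently cheapest available pair.
\item Second, relax the recursion so it does not insist on exactly $\Delta/2+1$ colors per half. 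Let each of the $O(\log\Delta)$ levels use a small additive slack of $\Delta/2^{\Theta(\sqrt{\log\Delta})}$ colors, and strip the slack only every $\Theta(\sqrt{\log\Delta})$ levels. A stripping round then uncolors an $\eps$-fraction of the edges and recolors them at amortized $\poly(1/\eps)$ cost each.
\end{itemize}
With $\eps=2^{-\Theta(\sqrt{\log\Delta})}$, balancing the number of stripping rounds against the per-round cost gives the stated $m\cdot2^{O(\sqrt{\log\Delta})}\cdot\log n$ bound; the $\log n$ factor accounts for the binary-search-tree data structures.

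The piece I would tackle first, and expect to be the hardest to make rigorous, is the deterministic short-path argument in the batched Vizing step. In particular, the paths chosen within one batch must not interfere with one another, meaning they must not share edges, because a shared edge would invalidate the amortized length bound.
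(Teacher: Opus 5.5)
Your treatment matches the paper's: the paper gives no proof, it imports the statement as Theorem~1.1 of \cite{ABB+26}, and citing it (together with the trivial bound $2^{O(\sqrt{\log\Delta})}\cdot\log n\le n^{o(1)}$, which holds because $\Delta\le n$) is all that is needed here. Your reconstruction sketch is extraneous and should not be read as a proof, since its decisive step, deterministically recoloring the surplus edges at amortized $2^{O(\sqrt{\log\Delta})}$ cost, is exactly the content of the cited result and is only asserted; smaller details are also off, for example an Euler split need not give parts of maximum degree $\lceil\Delta/2\rceil$, as a triangle already shows.
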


We remark that the amortized update time of our algorithm for dynamic maximal matching can be improved from $O(n^{1/2+o(1)})$ to $\tilde O(n^{1/2})$ (against an adaptive adversary) by using the randomized $(\Delta+1)$-edge coloring algorithm of \cite{edge-coloring} instead.

\section{An Algorithm for Decremental Graphs with $\tilde O\left(n^{2/3}\right)$ Amortized Update Time}
\label{sec: simple alg}

In this section we provide a relatively simple deterministic algorithm for the dynamic Maximal Matching problem in a decremental $n$-vertex graph that undergoes an online sequence of at least $n^{4/3}$ edge deletions; the amortized update time of the algorithm is $\tilde O\left(n^{2/3}\right )$ --- somewhat higher than that required by \Cref{thm: main}. We  provide this algorithm in order to introduce some of our ideas and to provide the intuition and the motivation for the more complex data structures that are used in the proof of \Cref{thm: main}, that is provided in \Cref{sec: main alg}. The setting of decremental graphs that undergo a sufficiently long sequence of edge deletions captures most of the complexity of the problem, and extending this algorithm to the general fully dynamic setting is not difficult. This simpler setting allows us to introduce some of our central notions and ideas and to provide intuition in a clean setting.

 The main combinatorial object that we use  is a \emph{subgraph system}, that we define below. This object can be thought of as being somewhat similar to the EDCS data structure that was used by \cite{BBKS25}. Like in \cite{BBKS25}, our algorithm is partitioned into phases, each of which spans a limited number of updates by the adversary. Unlike the algorithm of \cite{BBKS25}, that maintains the EDCS data structure of the dynamic graph $G$ over the entire algorithm, we simply compute the subgraph system from scratch at the beginning of every phase. We then exploit the subgraph system within each phase in order to maintain a maximal matching in $G$ over the course of the phase. While our subgraph system can be thought of as being similar to EDCS, it is designed specifically so as to make the task of maintaining a maximal matching in $G$  easier. We now formally define the subgraph system. We note that in \Cref{sec: main alg} we define a more general object, called a \emph{multi-level subgraph system}, so to avoid confusion, the subgraph system used in this section is called \emph{basic}.

\begin{definition}[Basic Subgraph System]\label{def: basic subgraph structure}
	Let $G$ be an $n$-vertex graph, and let $1\leq z\leq n$ be an integral parameter. A \emph{basic $z$-subgraph system for $G$} consists of the following ingredients:
	\begin{itemize}
		\item A subset $M\subseteq E(G)$ of edges. For every vertex $v\in V(G)$, we denote by $m(v)$ the number of edges of $M$ that are incident to $v$, and we require that the following property holds:
		
		\begin{properties}{K}
			\item \label{prop: small degrees in M basic} For every vertex $v\in V(G)$, $m(v)\leq z$;
		\end{properties}
		
		\item A partition $(S,U)$ of $V(G)$, for which the following properties hold:
		\begin{properties}[1]{K}
			\item \label{prop: no edge of M w both endpoints in U basic} No edge of $M$ has both endpoints in $U$;
			
			\item \label{prop: edges inc to S basic} For every vertex $v\in S$, $m(v)=z$; and
			\item \label{prop: degrees in U basic} For every vertex $u\in U$, $|N_G(u)\cap U|\leq z$.
		\end{properties}
		\item A partition $(A,B)$ of $S$ that satisfies the following properties:
		\begin{properties}[4]{K}
			\item \label{prop: U neighbors in B basic} For every vertex $u\in U$, $|N_G(u)\cap B|\leq 2z$; and
			\item \label{prop: neighbors of A in M in S basic} For every vertex $v\in A$, for every edge $(u,v)\in M$ incident to $v$, $u\in S$ must hold.
		\end{properties}
		\item For every vertex $u\in U$, a list $\Lambda(u)=N_G(u)\cap (B\cup U)$, whose length must be $O(z)$; and
		
		\item For every vertex $v\in A$, the list $L(v)=N_G(v)\cap U$.
\end{itemize}
	
	We denote the basic $z$-subgraph system by $\sset = \left(S,B,U,M,\Lambda,L\right )$, where $\Lambda=\set{\Lambda(u)}_{u\in U}$ and $L=\set{L(v)}_{v\in A}$.
\end{definition}

As noted already, the basic $z$-subgraph data structure is somewhat similar to the \emph{Edge-Degree Constrained Subgraph} (EDCS) data structure that was used in the algorithm of  \cite{BBKS25} for fully dynamic maximal matching, but it is designed specifically for the task of maintaining a maximal matching in $G$. While the EDCS data structure satisfies some analogues of Properties  \ref{prop: small degrees in M basic}--\ref{prop: degrees in U basic}, it may not satisfy Properties \ref{prop: U neighbors in B basic} and \ref{prop: neighbors of A in M in S basic}, which are crucially exploited in our algorithm. 
We provide a more detailed comparison of the two data structures in \Cref{subsec: techniques}.

Next, we provide two main algorithmic tools that are key ingredients of our algorithm. 
The first tool, summarized in the following theorem, is an efficient algorithm for constructing a basic $z$-subgraph system for a graph $G$.

\begin{theorem}\label{thm: constructing simple subgraph system}
	There is a deterministic algorithm, that, given an $n$-vertex and $m$-edge graph $G$, and an integral parameter $1\leq z\leq n$, constructs a basic $z$-subgraph system $\sset = \left(S,B,U,M,\Lambda,L\right )$ for $G$, in time $\tilde O(n+m)$.
\end{theorem}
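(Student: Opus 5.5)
The construction has two steps, as outlined in \Cref{subsec: techniques}: first build a structure satisfying every property except possibly \ref{prop: U neighbors in B basic}, then repair it.

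\textbf{Step 1.} Compute a maximal set $M\subseteq E(G)$ with $m(v)\leq z$ for all $v$. Scan the edges once and add $(u,v)$ to $M$ whenever $m(u)<z$ and $m(v)<z$; this takes $O(n+m)$ time. Set $S=\set{v: m(v)=z}$ and $U=V(G)\setminus S$. Then:
\begin{itemize}
\item Property \ref{prop: small degrees in M basic} holds by construction, and Property \ref{prop: edges inc to S basic} holds by the definition of $S$.
\item By maximality of $M$, an edge $(u,u')$ with $u,u'\in U$ is in $M$ exactly when it was added. If it was not added, one endpoint had degree $z$ at scan time and hence lies in $S$. So every edge of $G[U]$ belongs to $M$. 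This gives Property \ref{prop: degrees in U basic}, since $|N_G(u)\cap U|\leq m(u)\leq z$.
\item Property \ref{prop: no edge of M w both endpoints in U basic} fails as stated, because edges of $G[U]$ are in $M$. I would fix this by removing from $M$ every edge with both endpoints in $U$. This lowers $m$ only for vertices of $U$, so the bound $|N_G(u)\cap U|\leq z$ is unaffected.
\end{itemize}
Finally, put $v\in S$ into $A$ if all its $M$-neighbours lie in $S$, and into $B$ otherwise. This gives Property \ref{prop: neighbors of A in M in S basic}, and every $b\in B$ has at least one $M$-edge into $U$.

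\textbf{Step 2 (enforcing \ref{prop: U neighbors in B basic}).} Process the vertices $u\in U$ one at a time while maintaining invariants \ref{prop: small degrees in M basic}--\ref{prop: degrees in U basic} and \ref{prop: neighbors of A in M in S basic}. When $u$ is processed and $|N_G(u)\cap B|\geq z$:
\begin{itemize}
\item Delete all $M$-edges incident to $u$. Their other endpoints are in $S$.
\item Add edges from $u$ to $z$ distinct vertices $b\in B$. For each such $b$, to keep $m(b)=z$, delete one $M$-edge $(b,u')$ with $u'\in U$; such an edge exists since $b\in B$.
\item Move $u$ into $B$.
\end{itemize}
The vertices that lost edges need attention. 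A vertex of $B$ that now has no $M$-edge into $U$ moves to $A$. A vertex $s\in S$ whose $M$-edge to $u$ was deleted has $m(s)<z$, so it must be refilled. I would keep $m(s)=z$ instead by "swapping": pick some $b$ among its neighbours, or simply demote $s$ to $U$. I will refine this when writing it out, though demotion threatens termination.

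The cleanest choice is to first ensure that the vertices of $U$ have no $M$-edges at all. Since Step 1 deleted $G[U]$ edges from $M$ and $U$-vertices have no other $M$-edges, this holds, so the first bullet is vacuous. The only degree losses are then at the $u'$ vertices, and $u'\in U$, so $m(u')\leq z$ and all of $U$'s constraints are preserved. Neighbourhood sets only shrink in $U$ because $U$ only shrinks, so Properties \ref{prop: degrees in U basic} and \ref{prop: U neighbors in B basic} are monotone: once $u$ is checked, $N_G(u)\cap B$ can grow only when new vertices enter $B$. This is the main obstacle.

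To handle it, I would use the threshold $2z$. Declare $u$ bad when $|N_G(u)\cap B|>2z$, and promote it as above. Keep for each $u\in U$ a counter $c(u)=|N_G(u)\cap B|$. When a vertex joins $B$, scan its neighbours and increment their counters, and push any $u$ crossing $2z$ onto a queue. When a vertex leaves $B$ for $A$, decrement its neighbours' counters. Each vertex enters $B$ at most once, since vertices move only $U\to B\to A$. Each vertex's adjacency list is therefore scanned $O(1)$ times, so the total counter work is $O(n+m)$. Each promotion costs $O(z\log n)$ plus $O(\deg(u))$, which is $\tilde O(n+m)$ in total. When the queue empties, every $u\in U$ has $c(u)\leq 2z$, so Property \ref{prop: U neighbors in B basic} holds. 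Since $U$ only shrinks, Property \ref{prop: degrees in U basic} is still valid.

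At the end, build $\Lambda(u)=N_G(u)\cap(B\cup U)$ and $L(v)=N_G(v)\cap U$ by one pass over all adjacency lists, in $O(n+m)$ time. We have $|\Lambda(u)|\leq 3z$, as required. Using binary search trees throughout, the total running time is $\tilde O(n+m)$.
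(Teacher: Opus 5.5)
Your Step 1 coincides with the paper's, but the promotion step of Step 2 rests on a false claim. You settle the degree-deficit problem by asserting that after Step 1 ``$U$-vertices have no other $M$-edges''. From this you conclude that deleting $u$'s $M$-edges is vacuous and that $u$ can receive $z$ fresh edges.

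Step 1 removes only the edges with both endpoints in $U$. A vertex $u\in U$ may still have up to $z-1$ edges of $M$ going to $S$. These are exactly the edges that place vertices in $B$, and your own swap, which deletes $(b,u')$ with $u'\in U$, needs them to exist. So the step fails as written, in one of two ways:
\begin{itemize}
\item If you keep $u$'s old edges and add $z$ new ones, then $m(u)>z$, violating Property \ref{prop: small degrees in M basic}.
\item If you delete them, their other endpoints (which lie in $B$) drop below $z$, violating Property \ref{prop: edges inc to S basic}.
\end{itemize}
This is precisely the issue you had flagged as unresolved.

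The missing idea is simply to leave $u$'s existing edges alone. Add only $r=z-m(u)$ new edges, to vertices $b\in N_G(u)\cap B$ with $(u,b)\notin M$. Enough such $b$ exist, because at most $m(u)$ of $u$'s $B$-neighbours are already $M$-adjacent to it. For each chosen $b$, swap out an edge $(b,a)$ with $a\in U\setminus\set{u}$. This is the paper's \procprocessU. For your claimed $O(z\log n)$ cost per promotion, you also need bookkeeping that finds such an $a$ quickly, e.g.\ the paper's sets $Z(b)$.

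After this fix, every $M$-neighbour of the promoted $u$ lies in $S$, so $u$ belongs in $A$, not $B$. Your ``no $M$-edge into $U$'' rule would move it there at once. Hence no vertex ever enters $B$ during Step 2, and $|N_G(u)\cap B|$ never grows. Your $2z$-threshold counter-and-queue scheme is therefore correct but unnecessary. The paper makes a single pass over $U$ with threshold $z$. It then notes that only former $U$-vertices could join $B$, while each $u$ has at most $z$ neighbours in $U$, which gives the $2z$ bound.
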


The  proof of Theorem \ref{thm: constructing simple subgraph system} appears in \Cref{subsec: constructing basic}; a high-level overview of the proof can be found in \Cref{subsec: techniques}.

Our second main tool is an algorithm that, given a decremental graph $G$ and a basic $z$-subgraph system for the initial graph $G\attime[0]$, maintains a maximal matching in $G$.  The algorithm is summarized in the following theorem.

\begin{theorem}\label{thm: matching from basic subgraph system}
	There is a deterministic algorithm, whose input consists of an initial $n$-vertex graph $G$,  together with integral parameters  $1\leq z\leq n$ and $r>0$, and a basic $z$-subgraph system $\sset$ for $G$. The algorithm maintains a maximal matching in $G$, as it undergoes an online sequence of at most $r$ edge deletions, in total time $\tilde O\left(n^{1+o(1)}\cdot z+rz+\frac{(n+r)^2}{z}\right)$.
\end{theorem}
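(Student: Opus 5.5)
We follow the outline of \Cref{subsec: techniques}. First, we apply \Cref{thm: edge coloring} to the graph $(V(G),M)$. By Property \ref{prop: small degrees in M basic}, its maximum degree is at most $z$. In time $O(n^{1+o(1)}\cdot z)$ this yields a partition $(M_1,\ldots,M_{z+1})$ of $M$ into matchings. By Property \ref{prop: edges inc to S basic}, every $v\in S$ is missed by at most one $M_i$, so $\sum_i |S\setminus V(M_i)|\le |S|$. The adversary's deletions remove at most $r$ edges from $M_2\cup\cdots\cup M_{z+1}$ in total. Hence, at any time, averaging shows that some $M_i$ with $i\ge 2$ leaves at most $\tau:=O((n+r)/z)$ vertices of $S$ unmatched. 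We choose $M_1$ initially to be such a matching.

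We split the $r$ deletions into subphases of $\lfloor (n+r)/z\rfloor$ updates each, giving $O(z)$ subphases. We maintain the following invariant: at the start of each subphase, at most $\tau$ vertices of $S$ are unmatched by $M_1$, and each deletion causes at most two deletions from $M_1$. Together these give Invariants \ref{inv: vertices of S} and \ref{inv: vertices of A} with bound $O(\tau)$ throughout.

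We maintain a maximal matching $M^*\supseteq M_1$ using the following data structures. The set $\hat S$ holds the unmatched vertices of $S$. A directed graph $H$ on $B\cup U$ contains, for each unmatched $u\in U$, edges from $u$ to all of $\Lambda(u)$. Maintaining $H$ costs $\tilde O(z)$ each time a $U$-vertex changes matched status. We initialize $M^*=M_1$ and extend it greedily over $U$ using $\Lambda$ and $\hat S$, in time $\tilde O(nz+n\tau)$.

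When an edge of $M^*$ is deleted, we rematch each endpoint $x$ by case. If $x\in U$, we scan $\Lambda(x)$ and $\hat S$. If $x\in B$, we check an in-edge of $H$ and otherwise scan $\hat S$. These are correct by Properties \ref{prop: degrees in U basic} and \ref{prop: U neighbors in B basic}: a $U$-neighbor of $x\in B$ that is unmatched would have an $H$-edge into $x$, and unmatched $A$-neighbors lie in $\hat S$. If $x\in A$, we scan the first $O(\tau)+1$ entries of $L(x)$ for a $u$ not matched to $A$. This search succeeds by Invariant \ref{inv: vertices of A} together with Property \ref{prop: neighbors of A in M in S basic}. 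We then match $(x,u)$, evicting $u$'s partner, which lies in $B\cup U$ and is rematched by the cheaper procedures, so the cascade has depth $O(1)$. If no such $u$ exists, all of $L(x)$ is matched and we scan $\hat S$.

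Correctness follows from \Cref{obs: maximal matching transformation}. Each update costs $\tilde O(z+\tau)$, totaling $\tilde O(rz+r(n+r)/z)\le\tilde O(rz+(n+r)^2/z)$.

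The main obstacle is restoring the invariant at each subphase boundary without rebuilding $M^*$ and $H$. We plan to pick an index $i\ge2$ such that $M_i$ (restricted to surviving edges) misses at most $\tau$ vertices of $S$. We then augment $M_1$ along alternating paths in $M_1\cup M_i$ that start at vertices of $S$ unmatched by $M_1$. This needs care because $M_1$ was perturbed and may contain arbitrary edges of $M$. The key claim is that the symmetric-difference components of $M_1\oplus M_i$ are paths and cycles. Moreover, only $O(\tau)$ of these components contain a vertex of $S$ that is unmatched in $M_1$ but matched in $M_i$, and flipping such a path strictly reduces the number of unmatched $S$-vertices. Path endpoints in $U$ may become unmatched, but each flip changes the matched status of only two vertices. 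The difficulty is that paths can be long, so flipping an entire path could be expensive. We will charge a path's length to the edges of $M_1$ that disagree with $M_i$. These are at most $O(n)$ per boundary, or we can flip only the whole relevant component. This keeps the cost at $\tilde O(n)$ plus $\tilde O((z+\tau)\tau)$ for re-settling the $O(\tau)$ affected vertices (endpoints change status; interior vertices stay matched, so \Cref{obs: maximal matching transformation} applies to the removed edges). Summed over $O(z)$ boundaries, this gives $\tilde O(nz+(n+r)^2/z\cdot(1+z\tau/(n+r)))$. This is within the claimed bound after noting $z\tau^2=O((n+r)^2/z)$.
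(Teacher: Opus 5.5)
Your proposal is correct and follows essentially the same route as the paper. Both edge-color $M$ into $z+1$ matchings and keep a maximal $M^*\supseteq M_1$ together with $\hat S$ and the directed graph $H$. Both rematch $A$-vertices by scanning a length-$O(\tau)$ prefix of $L(x)$ and evicting a partner in $B\cup U$, and both restore the invariant at subphase boundaries by augmenting $M_1$ along the paths of $M_1\cup M_i$ for a good index $i\geq 2$.

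One small inaccuracy: flipping a path whose far endpoint lies in $S$ and whose last edge is in $M_1$ does not reduce the number of unmatched $S$-vertices; the paper calls such start vertices problematic and simply skips them. This is harmless, because if you flip all such paths, every $S$-vertex left unmatched by $M_1$ is also unmatched by $M_i$, so the $O(\tau)$ bound still holds.
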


The proof of \Cref{thm: matching from basic subgraph system} is deferred to \Cref{sec: basic-subgraph-system-to-matching}; 
a high-level overview of the proof can be found in \Cref{subsec: techniques}.

\paragraph{Completing the algorithm.}
We are now ready to complete our algorithm for maximum matching with $\tilde O\left (n^{2/3}\right )$ amortized update time. We assume that we are given a decremental $n$-vertex graph $G$, that undergoes an online sequence of at least $n^{4/3}$ edge deletions. We let $r=\floor{n^{4/3}}$ and $z=\floor{n^{2/3}}$, and we partition timeline into \emph{phases}, each of which spans exactly $r$ edge deletions by the adversary, except for the last phase, that may contain fewer updates. 
Consider now some phase $\Phi$, and let $\tau$ be the time when the phase started. At the beginning of the phase $\Phi$, we use the algorithm from \Cref{thm: constructing simple subgraph system} in order to compute a basic $z$-subgraph system $\sset$ of $G\attime$. 
We then apply the algorithm from \Cref{thm: matching from basic subgraph system} to graph $G\attime$ and the $z$-subgraph system $\sset$ in order to maintain a maximal matching in $G$ over the course of the entire phase $\Phi$. Recall that the running time of the algorithm from \Cref{thm: constructing simple subgraph system}  is bounded by $\tilde O(n^2)$, while the running time of the algorithm from \Cref{thm: matching from basic subgraph system}  is bounded by:

\[ \tilde O\left(n^{1+o(1)}\cdot z+rz+\frac{(n+r)^2}{z}\right)\leq  \tilde O\left(n^{2}\right),  \]

since $r=\floor{n^{4/3}}$ and $z=\floor{n^{2/3}}$. Therefore, the total running time of our algorithm for a single phase is $\tilde O\left(n^{2}\right)$. Since every phase except for possibly the last one spans $\Omega(n^{4/3})$ updates to $G$, and since we assumed that the total length of the update sequence is  $\Omega(n^{4/3})$, we get that the amortized  update time of our algorithm is $\tilde O\left (n^{2/3}\right )$.

\subsection{Proof of \Cref{thm: constructing simple subgraph system}}
\label{subsec: constructing basic}

In this proof, we say that a $z$-subgraph system $\sset = \left(S,B,U,M,\Lambda,L\right )$ for $G$ is \emph{weak}, if it has all required properties, except for, possibly, Property \ref{prop: U neighbors in B basic}. A proper (non-weak) $z$-subgraph system is called \emph{strong} in this proof, to avoid confusion.

	Our algorithm consists of two steps. In the first step, we construct a {\bf weak} basic $z$-subgraph system  $\sset = \left(S,B,U,M,\Lambda,L\right )$. Then in the second step we then ``fix'' it so it becomes  a {\bf strong} subgraph system. We now describe each of the two steps in turn.
	
\subsection*{Step 1: a Weak Subgraph System}
	We start by constructing an initial maximal set $M'\subseteq E(G)$ of edges with the following property: every vertex of $G$ must be incident to at most $z$ edges of $M'$.
	In order to do so efficiently, we start by setting $M'=\emptyset$, and then gradually add edges to $M'$. We maintain, for every vertex $v\in V(G)$, a counter $m'(v)$, that counts the number of edges of $M'$ incident to $v$. All such counters are initialized to $0$ initially. We then consider every edge $e\in E(G)$ one by one. When $e$ is considered, if both $m'(v),m'(u)<z$, then we insert $e$ into $M'$ an increase each of the counters $m'(v),m'(u)$ by $1$. Clearly, this algorithm can be implemented in time $O(m+n)$.

	Next, we set $S=\set{v\in V(G)\mid m'(v)=z}$ and $U=V(G)\setminus S$, and we let $M$ be the set of edges obtained from $M'$, by deleting from it all edges with both endpoints in $U$. For every vertex $v\in V(G)$, we denote by $m(v)$ the number of edges of $M$ incident to $v$. Notice that, if $v\in S$, then $m(v)=m'(v)=z$, and so Property \ref{prop: edges inc to S basic} now holds. It is also immediate to verify that Properties \ref{prop: small degrees in M basic} and \ref{prop: no edge of M w both endpoints in U basic} hold as well. Consider now a vertex $u\in U$. We claim that  $|N_G(u)\cap U|\leq z$ must hold. Indeed, otherwise, since $m'(u)<z$ holds by the definition of the set $U$, there must be an edge $(u,u')\in E(G)\setminus M'$ with $u'\in U$; so in particular, $m'(u')<z$, contradicting the maximality of $M'$. Therefore, Property \ref{prop: degrees in U basic} holds as well.

We define the partition $(A,B)$ of $S$ as follows: set $A$ contains all vertices  $v\in S$, such that, for every edge $(u,v)\in M$ incident to $v$, $u\in S$ holds; we then set $B=S\setminus A$.  This ensures that Property	\ref{prop: neighbors of A in M in S basic} holds as well.

Lastly, we compute, for every vertex $u\in U$, a list $\Lambda(u)=N_G(u)\cap (B\cup U)$, and, for every vertex $v\in A$, the list $L(v)=N_G(v)\cap U$.
This completes the construction of the initial subgraph system. From our discussion, it has all required properties, except, possibly, for Property \ref{prop: U neighbors in B basic}.
It is immediate to verify that the running time of the algorithm so  far is $O(m+n)$.

\subsection*{Step 2: a Strong Subgraph System}	

So far we have computed a weak basic $z$-subgraph system $\sset = \left(S,B,U,M,\Lambda,L\right )$ for $G$; in other words, $\sset$ has all required properties, except for, possibly, Property \ref{prop: U neighbors in B basic}. In this step we ``fix'' the subgraph system, to ensure that it has all required properties.
Throughout the algorithm, we maintain, for every vertex $v\in V(G)$, the counter $m(v)$, that counts the number of edges of $M$ incident to $v$. All these counters can be initialized in time $O(m)$, and it is easy to maintain without increasing the asymptotic running time of the algorithm, so we do not further describe the process for maintaining them. 
For every vertex $v\in B$, we also maintain a counter $n(v)$, that counts the number of edges $(u,v)\in M$ with $u\in U$ and the set $Z(v)$ of all vertices $u\in U$ with $(u,v)\in M$. We initialize all these counters $n(v)$ and vertex sets $Z(v)$ at the beginning of Step 2 in time $O(|M|)\leq O(m)$, and then maintain them explicitly.

Our algorithm will gradually modify the subgraph system $\sset$, using the following actions, that we refer as \emph{allowed modifications}:

\begin{itemize}
	\item {\bf Edge Swap:} An edge $e=(v,u)\in M$ with $v\in B$ and $u\in U$ is deleted from $M$, and, instead, another edge $e'=(v,u')$ with $u'\in U$ is inserted into $M$. The edge swap may only be performed if $m(u')<z$; note that it does not affect $m(v)$.
		
	\item {\bf Promotion of vertices from $U$ to $S$:} If a vertex $u\in U$ has  exactly $z$ edges of $M$ incident to it, then we may move it from $U$ to $S$, where it joins the set $B$. If, additionally, for every edge $(u,v)\in M$ incident to $u$, the other endpoint $v$ lies in $S$, then we move $u$ to $A$ instead.
	
	\item {\bf Promotion of vertices from $B$ to $A$:} If, due to any of the modifications, for some vertex $v\in B$, $n(v)=0$ (or, equivalently, every vertex in $\set{v'\mid (v,v')\in M}$ lies in $S$), then we move $v$ to $A$.
\end{itemize}

We use the following immediate observation.

\begin{observation}\label{obs: properties preserved}
	Let $\sset = \left(S,B,U,M,\Lambda,L\right )$ be a weak $z$-subgraph system for a graph $G$, and let $\sset'$ be obtained from $\sset$ via one of the allowed modifications. Then $\sset'$ is a valid weak $z$-subgraph system; in other words, it has all properties of the $z$-subgraph system, except, possibly, for Property \ref{prop: U neighbors in B basic}. 
\end{observation}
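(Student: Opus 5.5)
The plan is a direct case analysis. For each of the three allowed modifications we check that Properties \ref{prop: small degrees in M basic}, \ref{prop: no edge of M w both endpoints in U basic}, \ref{prop: edges inc to S basic}, \ref{prop: degrees in U basic} and \ref{prop: neighbors of A in M in S basic} still hold afterwards; Property \ref{prop: U neighbors in B basic} is not required of a weak system. The lists $\Lambda$ and $L$ are deterministic functions of $G$ and of the partition $(A,B,U)$. We regard them as recomputed from these once Step 2 terminates, in time $O(m)$, so the observation is really about the structural properties. Throughout, the graph $G$ never changes, and each modification touches only $O(1)$ edges of $M$ or moves a single vertex. It therefore suffices to examine the affected vertices and edges.

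\emph{Edge swap.} An edge $(v,u)$ with $v\in B$, $u\in U$ is replaced by $(v,u')$ with $u'\in U$ and $m(u')<z$.
\begin{itemize}
\item Property \ref{prop: small degrees in M basic}: $m(v)$ is unchanged, $m(u)$ decreases, and $m(u')$ rises to at most $z$.
\item Property \ref{prop: no edge of M w both endpoints in U basic}: the new edge has its endpoint $v$ in $B$, so it does not have both endpoints in $U$.
\item Property \ref{prop: edges inc to S basic}: the only vertex of $S$ touched is $v$, and $m(v)$ is unchanged.
\item Property \ref{prop: degrees in U basic}: the partition and $G$ are unchanged.
\item Property \ref{prop: neighbors of A in M in S basic}: the inserted edge has no endpoint in $A$, and deletions cannot violate it.
\end{itemize}

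\emph{Promotion of $u$ from $U$ to $S$.} This requires $m(u)=z$.
\begin{itemize}
\item Property \ref{prop: small degrees in M basic}: $M$ is unchanged.
\item Property \ref{prop: no edge of M w both endpoints in U basic}: $U$ only shrinks, so no edge of $M$ can newly have both endpoints in $U$.
\item Property \ref{prop: edges inc to S basic}: it holds for $u$ because $m(u)=z$.
\item Property \ref{prop: degrees in U basic}: for every remaining $w\in U$, the set $N_G(w)\cap U$ can only shrink.
\item Property \ref{prop: neighbors of A in M in S basic}: if $u$ joins $A$, this is exactly the condition under which we place it there. For a vertex already in $A$, its $M$-neighbours were in $S$ and remain in $S$, since $S$ only grows.
\end{itemize}
If $u$ joins $B$ instead, no constraint on $B$ other than Property \ref{prop: U neighbors in B basic} is involved.

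\emph{Promotion of $v$ from $B$ to $A$.} This requires $n(v)=0$. Then every $M$-neighbour of $v$ lies in $S$, which is precisely Property \ref{prop: neighbors of A in M in S basic} for $v$. Nothing else changes: $M$, $U$ and $S$ are all unchanged, so the remaining properties are untouched.

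The only point needing care, which I expect to be the main subtlety, is the bookkeeping: the counters $n(\cdot)$ and sets $Z(\cdot)$ must correctly reflect membership in $U$ at the time each promotion to $A$ is triggered. In particular, when $u$ is promoted from $U$ to $S$, the counters $n(v)$ for its $M$-neighbours $v\in B$ must be decremented. This is needed so that "$n(v)=0$" really certifies that all $M$-neighbours of $v$ are in $S$. Once that is ensured, the observation follows from the case checks above.
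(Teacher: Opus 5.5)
Your proposal is correct and is essentially the paper's argument. Both proofs rest on three facts:
\begin{itemize}
\item $M$ changes only through edge swaps, which respect the degree bound and never touch edges incident to $A$;
\item $U$ only shrinks and $S$ only grows;
\item a vertex enters $A$ only when Property~\ref{prop: neighbors of A in M in S basic} holds for it.
\end{itemize}
The paper organizes the check by property instead of by modification, and it likewise leaves the lists $\Lambda$ and $L$ to be recomputed at the end.
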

\begin{proof}
	Observe first that the edge set $M$ may only be modified by the Edge Swap. The edge swap may only increase $m(u')$ value for a single vertex $u'\in U$ for which $m(u')<z$ held before this procedure, and it decreases $m(u)$ value for a single vertex $u\in U$. Additionally, we may only move vertices from $U$ to $S$ but not in the other direction. This ensures that Properties 
\ref{prop: small degrees in M basic}--\ref{prop: degrees in U basic}  hold in $\sset'$. Observe that the edge swap may not affect edges of $M$ incident to vertices of $A$, and that we may never remove vertices from $S$. Moreover, a vertex may only be added to the set $A$ when Property \ref{prop: neighbors of A in M in S basic} holds for it. Since vertices can be moved from $U$ to $S$ but not in the other direction, this ensures that Property \ref{prop: neighbors of A in M in S basic} continues to hold as well.
\end{proof}

Our algorithm does not maintain the vertex sets $\set{\Lambda(u)}_{u\in U}\cup \set{L(v)}_{v\in A}$ explicitly; instead it simply recomputes them at the end.
Lastly, our algorithm ensures that the following invariant always holds:

\begin{properties}{I}
	\item\label{inv: in B} If $v\in B$ then $n(v)>0$.
\end{properties}

We are now ready to describe our algorithm for Step 2. Our algorithm considers every vertex $u\in U$ one by one. When a vertex $u$ is considered, we inspect the set $N_G(u)$ to check whether $|N_G(u)\cap B|\leq z$ holds. If so, then no further processing of $U$ is necessary. Assume now that $|N_G(u)\cap B|>z$, and denote by $r=z-m(u)$. In this case, we call Procedure $\procprocessU(u)$, that is described in Figure \ref{fig:procprocesss2}. Note that there must be a collection 
$\set{v_1,\ldots,v_r}$ of $r$ vertices of $N_G(u)\cap B$ such that, for all $1\leq i\leq r$, $(v_i,u)\not\in M$. The procedure selects any such collection of vertices by inspecting $N_G(u)$. For each such vertex $v_i$, it then selects an arbitrary vertex $a_i\in Z(v_i)$ (so $a_i\in U$ and $(v_i,a_i)\in M$ must hold). It then performs an edge swap between the edge $(u,v_i)$, that is inserted into $M$, and $(v_i,a_i)$, that is deleted from $M$. After this step, we are  guaranteed that $m(v)=z$, and $v$ is promoted from $U$ to $S$; all data structures are then updated accordingly.

\program{Procedure \procprocessU}{fig:procprocesss2}
{
	{\bf Input:} a vertex $u\in U$ with $|N_G(u)\cap B|>z$.
	
	\begin{itemize}
		\item Denote: $r=z-m(u)$.
		
		\item Select an arbitrary set $\set{v_1,\ldots,v_r}$ of $r$ vertices of $N_G(u)\cap B$ such that, for all $1\leq i\leq r$, $(v_i,u)\not\in M$, by inspecting the vertex set $N_G(u)$;
		
			\item For all $1\leq i\le r$, select an arbitrary vertex $a_i\in Z(v_i)$ (recall that $a_i\in U$ and $(v_i,a_i)\in M$ must hold).
			\item For all $1\leq i\leq r$:
			\begin{itemize}
				\item insert the edge $(v_i,u)$ into $M$;
				\item delete the edge $(v_i,a_i)$ from $M$;
				\item delete $a_i$ from $Z(v_i)$;
			\end{itemize}
			\item Move $u$ from $U$ to $S$ (note that $m(u)=z$ now holds);
		\item If, for every edge $(u,v)\in M$ incident to $u$, $v\in S$ holds: add $u$ to $A$.
		
		\item Otherwise:
		add $u$ to $B$, and
		initialize $Z(u)=\emptyset$ and $n(u)=0$.

		\item For every vertex $v\in B$ with $u\in Z(v)$: (note that each such vertex $v$ lies in $N_G(u)$): delete $u$ from $Z(v)$ and decrease $n(v)$ by $1$; if it reaches $0$, move $v$ from $B$ to $A$.
		
			\item For all $1\leq i\leq r$, decrease $n(v_i)$ by $1$, and, if it reaches $0$, move $v_i$ from $B$ to $A$.
		\end{itemize}
}

It is easy to verify that Procedure $\procprocessU(u)$ can be implemented to run in time $\tilde O(\deg_G(u))$.
Note that, if we applied Procedure $\procprocessU$ to vertex $u$, then it is moved from $U$ to $S$, and it remains in $S$ until the end of the algorithm. Otherwise, at the time when $u$ was processed,  $|N_{G}(u)\cap B|\leq z$ held.
From Property \ref{prop: degrees in U basic}, $|N_G(u)\cap U|\leq z$ holds as well. Note that the only vertices that may join the vertex set $B$ over the course of Step 2 of the algorithm are  vertices that lie in $U$. Therefore, Property \ref{prop: U neighbors in B basic} must hold for every vertex $u\in U$ at the end of the algorithm.

Our last step is to compute, for every vertex $u\in U$, the list $\Lambda(u)=N_G(u)\cap (B\cup U)$, whose length now must be $O(z)$. We also compute, for every vertex $v\in A$, the list $L(v)=N_G(v)\cap U$. It is immediate to verify that the running time of Step 2 is $\tilde O(m+n)$, and so the running time of the entire algorithm is $\tilde O(m+n)$.

\subsection{From Basic Subgraph System to Fully Dynamic Maximal Matching: Proof of \Cref{thm: matching from basic subgraph system}}
\label{sec: basic-subgraph-system-to-matching}

In this section we prove \Cref{thm: matching from basic subgraph system}. We assume that we are given as input an initial $n$-vertex graph $G$, together with
 an integral parameter $1\leq z\leq n$, and a basic $z$-subgraph system  $\sset = \left(S,B,U,M,\Lambda,L\right )$ for $G$. Our goal is to maintain a maximal matching in $G$, as it undergoes an online sequence of $r$ edge deletions.

Throughout the algorithm, the vertex sets $S,B,U$ and $A$ remain unchanged. The edge set $M$ is decremental: whenever the adversary deletes an edge $e\in M$ from $G$, this edge is deleted from $M$ as well, but no new edges may be inserted into $M$.
The lists in $\Lambda$ and $L$ may be updated by our algorithm, as described below. Throughout the algorithm, we denote by $\rho=n+r$.

\subsubsection{Data Structures and Invariants}

Recall that, as part of the subgraph system $\sset$, we are given, for every vertex $u\in U$, a vertex list $\Lambda(u)=N_G(u)\cap (B\cup U)$, whose length is $O(z)$. Over the course of the algorithm, we will update the list $\Lambda(u)$ to ensure that $\Lambda(u)=N_{G}(u)\cap (B\cup U)$ always holds. Specifically, whenever the adversary deletes an edge $(u,v)$ from $G$ with $v\in \Lambda(u)$,  we delete $v$ from $\Lambda(u)$.
Similarly, as part of the subgraph system, we are given, for every vertex $v\in A$, the list $L(v)=N_G(v)\cap U$. 
Over the course of the algorithm, we will update the list $L(v)$ to ensure that $L(v)=N_{G}(v)\cap U$ always holds, as follows: whenever the adversary deletes an edge $(u,v)$ from $G$ with $u\in L(v)$, we delete $u$ from $L(v)$.

Let $\hat G$ be the subgraph of $G$ induced by the edges of $M$. Note that the maximum vertex degree in $\hat G$ is bounded by $z$. We start by computing an edge-coloring of $\hat G$ with $z+1$ colors using the algorithm from \Cref{thm: edge coloring}, in time $O\left (|E(\hat G)|\cdot n^{o(1)}\right )\leq O\left (n^{1+o(1)}\cdot z\right )$.
This coloring naturally defines a partition $(M_1,\ldots,M_{z+1})$ of $M$, where for all $1\leq i\leq z+1$, $M_i$ is a matching, consisting of all edges of $\hat G$ that were assigned the color $i$. For all $1\leq i\leq z+1$, we compute the number $m_i$ of vertices of $S$ that are not matched in $M_i$.  Over the course of the algorithm, when an edge of $M_i$ is deleted by the adversary, we delete it from $M_i$ and update $m_i$ accordingly.

Note that, from Property \ref{prop: edges inc to S basic} of the basic subgraph system,  every vertex $v\in S$ is incident to exactly $z$ edges of $M$. Therefore, there are at most one index $1\leq i\leq z+1$, for which $v$ is not matched in $M_i$.  Altogether, $\sum_{i=1}^{z+1} m_i \leq |S|\leq n$ must hold, and there must exist an index $1\leq i^*\leq z+1$ with $m_{i^*}\le \frac{n}{z+1}\leq \frac{2n}{z}$. We reindex the matchings so that $i^*=1$ holds.
We partition the algorithm into \emph{phases},  where each phase spans exactly $\ceil{ \frac{r}{z}}$ edge deletions by the adversary. Since the total number of edge deletions is bounded by $r$, we get that the number of phases is bounded by $2z$. Throughout the algorithm, we will maintain a maximal matching $M^*$ with $M_1\subseteq M^*$,  
while matchings $M_2,\dots,M_{z+1}$ are used in order to ``repair" $M_1$ when necessary. The algorithm may modify the matching $M_1$, but it always ensures that  $M_1\subseteq M$ holds.  Recall that we denoted $\rho=n+r$.
We ensure that, at the beginning of every phase, the following invariant holds:

\begin{properties}{R}
	\item \label{prop: matching M1 for basic} After the initialization of every phase, all but at most $\frac{32\rho}{z}$ vertices of $S$ are matched in $M_1$.
\end{properties}

Our algorithm also maintains a {\bf directed} graph $H$, that allows us to rematch the vertices of $U\cup B$ efficiently, and that we define next. 

\paragraph{Graph $H$.} Graph $H$ is a directed graph that is used in order to quickly rematch vertices of $B\cup U$. The vertex set of $H$ is $V(H)=B\cup U$. For every vertex $u\in U$ that is currently unmatched by $M^*$, there is a directed edge in $H$ from $u$ to every vertex in $N_{G}(u)\cap (B\cup U)$ (recall that all vertices in $N_{G}(u)\cap (B\cup U)$ lie in $\Lambda(u)$, and that $|\Lambda(u)|\leq O(z)$). Equivalently, for every vertex $v\in B\cup U$, the set of the in-neighbors of $v$ in $H$ is exactly the set of all unmatched vertices $u\in U$ with $(u,v)\in E(G)$.

\paragraph{Phase execution: a high-level overview.}
Over the course of a phase, if an edge $e\in M_1$ is deleted by the adversary, it is deleted from $M_1$ as well; recall that at most $\frac{r}{z}$ such edge deletions may occur over the course of a phase. Additionally, our algorithm may choose to delete up to $\frac{r}{z}$ edges from $M_1$ over the course of the phase. Each edge deletion from $M_1$ may cause at most two vertices of $S$ to become unmatched in $M^*$. By Property \ref{prop: matching M1 for basic}, immediately after the initialization of a phase, at most $\frac{32\rho}{z}$ vertices of $S$ are unmatched by $M_1$. Therefore, we will ensure that the following invariant must hold throughout the phase:

\begin{properties}[1]{R}
	\item \label{prop during subphase: matching M1 for basic} At all times, the number of vertices of $S$ that are not matched by $M_1$ is bounded by $\frac{64\rho}{z}$.
\end{properties}

Recall that, from Property 
\ref{prop: neighbors of A in M in S basic} of the basic subgraph system, for every vertex $v\in A$, for every edge $(u,v)\in M$ incident to $v$, $u\in S$ must hold. Since $M_1\subseteq M$, we conclude that $M_1$ may not match a vertex of $A$ to a vertex of $U$. Therefore, the only way that a vertex $v\in A$ is matched to a vertex of $U$ in $M^*$ is if $v$ is not matched by $M_1$. Combining this with Invariant \ref{prop during subphase: matching M1 for basic}, we get that the following invariant also holds	 throughout the phase:

\begin{properties}[2]{R}
	\item \label{prop during subphase: few vertices of S'' matched to U for basic} At all times, at most $\frac{64\rho}{z}$ vertices of $A$ may be matched to vertices of $U$ in $M^*$.
\end{properties}

Our algorithm will also explicitly maintain the set $\hat S\subseteq S$ of all vertices $v\in S$ that are currently not matched by $M^*$. We say that a vertex $v\in V(G)$ \emph{switches its status in $M^*$} whenever it switches from being matched to unmatched in $M^*$, or the other way around.
Throughout the algorithm, we say that a vertex $x\in V(G)$ is \emph{settled}, if it is settled with respect to $M^*$ (see Definition \ref{def: settled}).

\subsubsection{Update Procedures}

Next, we describe the main procedures that our algorithm uses in order to maintain all data structures correctly. 
We start with Procedure
\procupdate, that is called whenever a vertex of $G$ switches its status in $M^*$.
The purpose of the procedure is to ensure that all data structures are consistent with this change of status.

\subsection*{Procedure \procupdate}

The input to Procedure \procupdate is a vertex $v\in V(G)$. The procedure checks if $v$ is currently matched in $M^*$, and updates the vertex set $\hat S$ and the graph $H$ accordingly. We provide the description of \procupdate in \Cref{fig:procupdatesimple}.

\program{Procedure \procupdate}{fig:procupdatesimple}
{
	{\bf Input:} a vertex $v\in V(G)$.
	
	\begin{itemize}
		\item If $v$ is matched in $M^*$:
            \begin{itemize}
                \item if $v\in \hat S$, delete $v$ from $\hat S$.
                \item if $v\in U$, delete all edges of $H$ leaving $v$ from $H$.
            \end{itemize}
        \item Otherwise ($v$ is unmatched in $M^*$):
            \begin{itemize}
                \item if $v\in S$, insert $v$ into $\hat S$.
                \item if $v\in U$, for every vertex $v'\in \Lambda(v)$, insert the edge $(v,v')$ into $H$.
            \end{itemize}
	\end{itemize}
}

Procedure $\procupdate(v)$ is called whenever a vertex $v\in V(G)$ changes status in $M^*$, and it ensures that the following properties hold:

\begin{properties}{Q}
    \item \label{prop: set S basic} At all times, $\hat S$ is exactly the set of all vertices of $S$ that are not matched by $M^*$.
	\item \label{prop: graph H basic} At all times, for every vertex $v\in B\cup U$, the set of all in-neighbors of $v$ in $H$ is the set of all vertices of  $N_{G}(v)\cap U$ that are currently not matched by $M^*$.
\end{properties}

Since, for every vertex $v\in U$, $|\Lambda(v)|=O(z)$ holds, and since the set of all out-neighbors of $v$ in $H$ is contained in $\Lambda(v)$,  the following observation follows immediately from the description of the procedure.
\begin{observation} \label{obs: procupdate time basic}
The running time of $\procupdate(v)$ is $\tilde O(z)$.
\end{observation}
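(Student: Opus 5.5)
The statement is \Cref{obs: procupdate time basic}: $\procupdate(v)$ runs in time $\tilde O(z)$. The plan is to go through the two branches of the procedure in \Cref{fig:procupdatesimple} and bound each one. All lists, adjacency structures and the set $\hat S$ are binary search trees, so every single insertion, deletion or lookup costs $O(\log n)$.

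First, the procedure checks whether $v$ is matched in $M^*$. Since $M^*$ is a matching, we can store for each vertex a pointer to its matching edge, so this test takes $O(1)$ time, or at worst $O(\log n)$. Updating $\hat S$ is then a single insertion or deletion in a search tree, which costs $O(\log n)$. The membership tests $v\in S$ and $v\in U$ are constant time, because the partition $(S,U)$ is fixed throughout the algorithm and can be stored as a per-vertex label.

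The only step that is not constant-size is the update of $H$, which happens when $v\in U$. If $v$ has just become unmatched, we insert the edge $(v,v')$ into $H$ for every $v'\in\Lambda(v)$. The number of such edges is $|\Lambda(v)|=O(z)$ by the definition of the basic subgraph system, and we maintain $\Lambda(v)=N_G(v)\cap(B\cup U)$ only by deletions, so its length stays $O(z)$. Each insertion into the out-list of $v$ and the in-list of $v'$ costs $O(\log n)$, so this case costs $O(z\log n)$ in total. If $v$ has just become matched, we delete all edges of $H$ that leave $v$. Every such edge was created earlier from an element of $\Lambda(v)$, so there are at most $O(z)$ of them, and each deletion again costs $O(\log n)$.

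The one point to check is that the out-neighborhood of $v$ in $H$ really stays within $\Lambda(v)$. This holds because out-edges of $v$ are only ever inserted from $\Lambda(v)$. When the adversary deletes an edge $(v,v')$ of $G$, the corresponding $H$-edge should also be removed, at $O(\log n)$ cost, to keep Property \ref{prop: graph H basic}. Summing the two cases, the total running time is $O(z\log n)=\tilde O(z)$.
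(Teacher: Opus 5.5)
Your proof is correct and follows the paper's argument. You observe that the only non-constant work is updating $H$, and that it is bounded because $|\Lambda(v)|=O(z)$ and the out-neighbors of $v$ in $H$ always lie within $\Lambda(v)$. You add the implementation details, namely $O(\log n)$ binary-search-tree operations and constant-time membership tests, that the paper leaves implicit.
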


Next, we describe procedure $\procrematchBU$, whose purpose is to attempt to rematch a vertex of $B\cup U$ that became unmatched in $M^*$.

\subsection*{Procedure \procrematchBU}

The input to Procedure $\procrematchBU$ is a vertex $u\in B\cup U$ that is currently not matched by $M^*$. The procedure attempts to rematch $u$ to one of its neighbors in $G$, and then updates all data structures. We provide the description of \procrematchBU in \Cref{fig:procrematchBUbasic}. 

\program{Procedure \procrematchBU}{fig:procrematchBUbasic}
{
	{\bf Input:} a vertex $u\in B\cup U$ that is  not matched by $M^*$.

    \begin{enumerate}
        \item \label{step: H basic} If $H$ contains any incoming edge incident to $u$, let $(v,u)$ be any such edge:
            \begin{itemize}
                \item Insert $(u,v)$ into $M^*$, and call \procupdate for $u$ and $v$.
                \item Return SUCCESS.
            \end{itemize}
        \item  \label{step: check S basic} For every vertex $v\in \hat S$, if $(u,v)\in E(G)$:
            \begin{itemize}
                \item Insert $(u,v)$ into $M^*$, and call \procupdate for $u$ and $v$.
                \item Return SUCCESS.
            \end{itemize}

        \item Return FAILURE.
    \end{enumerate}

}

The following observation summarizes the properties of the procedure.

\begin{observation} \label{obs: procrematchBU basic}
The running time of Procedure $\procrematchBU(u)$ is $\tilde O\left(z+\frac{\rho}{z}\right)$. After the procedure terminates, vertex $u$ becomes settled.
Moreover, for every vertex $v\in V(G)$, if $v$ was settled before the call to the procedure, it remains so after the execution of the procedure.
\end{observation}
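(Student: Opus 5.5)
The plan is to verify the three claims of Observation~\ref{obs: procrematchBU basic} by walking through the two steps of Procedure \procrematchBU, relying on Properties~\ref{prop: set S basic} and~\ref{prop: graph H basic}, which \procupdate maintains. I also assume the phase invariants: Invariant~\ref{prop during subphase: matching M1 for basic} together with the fact that $M^*$ differs from $M_1$ only by edges added on top of it, giving $|\hat S|\le O(\rho/z)$.

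\emph{Running time.} Step~\ref{step: H basic} only needs to find some in-edge of $u$ in $H$. Storing in-neighbor lists as balanced trees, this costs $O(\log n)$. Step~\ref{step: check S basic} scans $\hat S$, and each adjacency test $(u,v)\in E(G)$ costs $O(\log n)$ because adjacency lists are binary search trees. Since $\hat S$ is the set of unmatched vertices of $S$ (Property~\ref{prop: set S basic}), and $M_1\subseteq M^*$ matches all but $O(\rho/z)$ vertices of $S$ by Invariant~\ref{prop during subphase: matching M1 for basic}, this step costs $\tilde O(\rho/z)$. At most two calls to \procupdate are made, each costing $\tilde O(z)$ by Observation~\ref{obs: procupdate time basic}. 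The total is $\tilde O(z+\rho/z)$.

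\emph{$u$ becomes settled.} If either step succeeds, $u$ is matched. I must also check that the inserted edge keeps $M^*$ a matching, i.e.\ that $v$ is unmatched. In Step~\ref{step: H basic} this holds because Property~\ref{prop: graph H basic} makes in-neighbors of $u$ unmatched vertices of $N_G(u)\cap U$. In Step~\ref{step: check S basic} it holds because $v\in\hat S$ is unmatched. If the procedure returns FAILURE, take any $y\in N_G(u)$. If $y\in U$, then $y$ must be matched: otherwise $y$ would be an in-neighbor of $u$ in $H$ by Property~\ref{prop: graph H basic}, using $u\in B\cup U$. If $y\in S$ and $y$ were unmatched, then $y\in\hat S$ and Step~\ref{step: check S basic} would have found it. So every neighbor of $u$ is matched, and $u$ is settled.

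\emph{Monotonicity.} The procedure only inserts edges into $M^*$ and never deletes any. Hence every matched vertex stays matched, and every vertex settled via condition (ii) stays settled. Thus settledness is preserved for all vertices.

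The only delicate point is the $|\hat S|$ bound. It depends on the invariant holding at the moment of the call, so it must be justified from the phase invariants rather than proved inside the procedure. The remaining steps are routine.
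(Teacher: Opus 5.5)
Your proposal is correct and follows essentially the same route as the paper. The paper bounds Step~1 trivially and Step~2 by $|\hat S|=O(\rho/z)$ via the invariant and $M_1\subseteq M^*$; it derives settledness on FAILURE from the two properties maintained by \procupdate, and gets monotonicity from the fact that no edge is ever deleted from $M^*$. Your additional check that each inserted edge joins two currently unmatched vertices, so that $M^*$ remains a matching, is a minor detail the paper leaves implicit.
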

\begin{proof}
	We first analyze the running time of the procedure, excluding the time spent on calls to Procedure \procupdate.
	Observe that Step \ref{step: H basic} can be implemented in time $O(1)$, since it only needs to check whether $u$ has an incoming edge in $H$. Step \ref{step: check S basic} can be implemented in time $O(|\hat S|)\leq  O(\rho/z)$ (from  Property~\ref{prop during subphase: matching M1 for basic} and since $M_1\subseteq M^*$), since it only requires scanning the vertices of $\hat S$ and checking, for each such vertex, whether it is a neighbor of $u$ in $G$.  Additionally, the algorithm may call Procedure \procupdate at most twice; the running time for each such call is bounded by $\tilde O\left(z\right)$ from Observation \ref{obs: procupdate time basic}. Altogether, the running time of Procedure $\procrematchBU$ is bounded by $\tilde O\left(z+\frac{\rho}{z}\right)$.

Note that, if the procedure returned SUCCESS, then it inserted an edge incident to $u$ into $M^*$, so vertex $u$ becomes settled.
Assume now that the procedure returned FAILURE, so it did not insert an edge incident to $u$ into $M^*$. We claim that every vertex in $N_G(u)$ must then be matched by $M^*$. Indeed, assume for contradiction that this is not the case, so there exists a vertex $v\in N_G(u)$ that is currently unmatched by $M^*$. If $v\in U$, then by Property \ref{prop: graph H basic}, edge $(v,u)$ must lie in $H$, so $u$ should have been matched in Step \ref{step: H basic}. Otherwise, $v\in S$, and, by Property \ref{prop: set S basic}, $v\in \hat S$ must hold. Then $u$ should have been matched in Step \ref{step: check S basic},  a contradiction. We conclude that, at the end of the procedure, vertex $u$ becomes settled. Lastly, it is immediate to verify that, for every vertex $v\in V(G)$, if $v$ was settled before the call to the procedure, it remains so after the execution of the procedure, since the procedure never deleted any edges from $M^*$.
\end{proof}

Finally, we describe procedure \procrematchA, whose purpose is to attempt to rematch a vertex of $A$ that became unmatched in $M^*$.

\subsection*{Procedure \procrematchA}

The input to Procedure $\procrematchA$ is a vertex $v\in A$ that is currently not matched by $M^*$. The procedure first attempts to match $v$ with some vertex $u\in U$. If successful, it may delete a single edge $e=(u,u')$ from $M^*$ (and possibly from $M_1$), but it guarantees that $u'\in B\cup U$ holds; it then attempts to rematch $u'$ by calling Procedure $\procrematchBU(u')$. If this attempt to match $v$ fails, the procedure then tries to match $v$ with its other neighbors in $G$ that lie in $\hat S$. We provide the description of \procrematchA in Figure \ref{fig:procrematchAbasic}.

\program{Procedure \procrematchA}{fig:procrematchAbasic}
{
	{\bf Input:} a vertex $v\in A$ that is currently not matched by $M^*$.
	
	\begin{enumerate}
		\item \label{step: checking Lv basic} Process the first $\frac{64\rho}{z}+1$ vertices of $L(v)$. 
		 When a vertex $u\in L(v)$ is processed, if $u$ is currently not matched by $M^*$, or if it is matched by $M^*$ to a vertex of $B\cup U$:
		\begin{itemize}
			\item If some edge $e=(u,u')$ incident to $u$ lies in $M^*$:  (note that $u'\in B\cup U$ must hold)
			\begin{itemize}
				\item Delete $e$ from $M^*$. If $e\in M_1$, delete $e$ from $M_1$.
                \item Insert $(u,v)$ into $M^*$, and call \procupdate for $u'$ and $v$.
               
                \item Call $\procrematchBU(u')$.
			\end{itemize}
    		\item Otherwise (if $u$ is not currently matched by $M^*$): Insert $(u,v)$ into $M^*$, and call \procupdate for $u$ and $v$.
			\item Return SUCCESS.
		\end{itemize}
    \item \label{step: checking S for A basic} For every $v'\in \hat S$, if $(v,v')\in E(G)$:
            \begin{itemize}
                \item Insert $(v,v')$ into $M^*$, and call \procupdate for $v$ and $v'$.
                \item Return SUCCESS.
            \end{itemize}
	\item Return FAILURE
	\end{enumerate}
}

The following observation summarizes the properties of Procedure \procrematchA.

\begin{observation} \label{obs: procrematchA basic}
The running time of Procedure $\procrematchA(v)$ is $\tilde O\left(z+\frac{\rho}{z}\right)$. The procedure may delete a single edge $(u,u')$ from $M^*$ and from $M_1$, and in this case $u,u'\in U\cup B$ must hold, and vertices $u$ and $u'$ are settled at the end of the procedure. Moreover, at the end of the procedure, vertex $v$ is settled. Lastly, every vertex $x\in V(G)$ that was settled before the call to the procedure, remains so.
\end{observation}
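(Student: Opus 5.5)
We verify the four claims of Observation~\ref{obs: procrematchA basic} by analyzing the two successful branches and the failure branch of the procedure, using Invariants \ref{prop during subphase: matching M1 for basic} and \ref{prop during subphase: few vertices of S'' matched to U for basic}, Properties \ref{prop: set S basic} and \ref{prop: graph H basic}, and Observation~\ref{obs: procrematchBU basic}.

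\textbf{Running time.} Step~\ref{step: checking Lv basic} inspects at most $\frac{64\rho}{z}+1$ entries of $L(v)$. For each entry we check in $O(\log n)$ time whether it is matched and whether its partner lies in $B\cup U$, so this scan costs $\tilde O(\rho/z)$. On success, the step makes at most two calls to \procupdate, costing $\tilde O(z)$ by Observation~\ref{obs: procupdate time basic}, and at most one call to \procrematchBU, costing $\tilde O(z+\rho/z)$ by Observation~\ref{obs: procrematchBU basic}. Step~\ref{step: checking S for A basic} scans $\hat S$, which has size $O(\rho/z)$ by Invariant~\ref{prop during subphase: matching M1 for basic} together with Property~\ref{prop: set S basic} and $M_1\subseteq M^*$. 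It also makes two \procupdate calls. The total is therefore $\tilde O(z+\rho/z)$.

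\textbf{The deleted edge.} Edges are removed only in Step~\ref{step: checking Lv basic}, and only when $u\in L(v)\subseteq U$ is matched by $M^*$ to some $u'\in B\cup U$. Hence $u,u'\in U\cup B$. After the swap, $u$ is matched to $v$, so $u$ is settled. The vertex $u'$ becomes settled through the call $\procrematchBU(u')$, by Observation~\ref{obs: procrematchBU basic}, and that observation also guarantees the call does not unsettle anything.

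\textbf{$v$ is settled.} If either step succeeds, $v$ is matched. The main point is the failure case, where we must show every neighbor of $v$ is matched.

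First consider neighbors $u\in N_G(v)\cap U$. These are exactly the vertices of $L(v)$, since the list is maintained under deletions. If $|L(v)|\le \frac{64\rho}{z}$, Step~\ref{step: checking Lv basic} inspected all of them. Since it did not succeed, each one is matched to a vertex of $A$ and in particular is matched. If $|L(v)|>\frac{64\rho}{z}$, then by Invariant~\ref{prop during subphase: few vertices of S'' matched to U for basic} at most $\frac{64\rho}{z}$ of the inspected vertices can be matched to vertices of $A$. So some inspected vertex is unmatched or matched into $B\cup U$, and the step must have succeeded, which is a contradiction.

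Next consider neighbors in $S$. Any unmatched one would lie in $\hat S$ by Property~\ref{prop: set S basic}, and Step~\ref{step: checking S for A basic} would then have succeeded. So all neighbors of $v$ are matched, and $v$ is settled.

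\textbf{Settled vertices stay settled.} The only operation that can unsettle a vertex is the deletion of $(u,u')$. Its effect is limited to the endpoints: $u$ is immediately rematched, and $u'$ is settled afterwards, as shown above. Consider any other vertex $x$ that was settled before the procedure.
\begin{itemize}
\item If $x$ was matched, it remains matched, since its matching edge was not deleted.
\item If $x$ was unmatched with all neighbors matched, then its only neighbor that could lose its match is $u'$ (not $u$, which is rematched at once). If $x$ is adjacent to $u'$, then at the moment $\procrematchBU(u')$ is called, $u'$ has an unmatched neighbor $x$, so $u'$ will become matched.
\end{itemize}
Edge insertions never unsettle anyone. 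Together with Observation~\ref{obs: procrematchBU basic}, this shows that every previously settled vertex remains settled.
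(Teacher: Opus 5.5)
Your proof is correct and follows essentially the same route as the paper. You use the same running-time accounting. You make the same point that the only deleted edge $(u,u')$ has $u\in L(v)\subseteq U$ and $u'\in B\cup U$, with $u'$ then settled by \procrematchBU. In the failure case you use $\hat S$ together with the invariant bounding the number of vertices of $A$ matched into $U$, applied to the first $\frac{64\rho}{z}+1$ entries of $L(v)$, exactly as the paper does. The only cosmetic difference is in the last claim: you argue directly that $u'$ is the only neighbor of a previously unmatched settled vertex that can lose its partner, whereas the paper routes this through the observation that a vertex all of whose neighbors are settled is itself settled.
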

\begin{proof}
	We first analyze the running time of Procedure $\procrematchA(v)$ excluding the calls to Procedures $\procrematchBU$ and $\procupdate$.
	Step \ref{step: checking Lv basic} of Procedure $\procrematchA(v)$ only needs to process the first $O\left(\frac{\rho}{z}\right )$ vertices of $L(v)$, and the time required for processing each such vertex (excluding the calls to Procedures $\procrematchBU$ and $\procupdate$) is $O(1)$. Therefore, the running time of this step is $O\left(\frac{\rho}{z}\right )$.
		Step \ref{step: checking S for A basic} can be implemented in time $O(|\hat S|)\leq  O\left (\frac {\rho} z\right )$ (from  Property~\ref{prop during subphase: matching M1 for basic} and since $M_1\subseteq M^*$), since it only requires scanning the vertices of $\hat S$ and checking, for each such vertex, whether it is a neighbor of $v$ in $G$.  
	Overall, the running time of Procedure $\procrematchA(v)$ excluding calls to Procedures $\procrematchBU$ and $\procupdate$, is bounded by $\tilde O\left(z+\frac{\rho}{z}\right)$. Note that the procedure may call Procedure \procupdate at most twice; the running time for each such call is bounded by $\tilde O\left(z\right)$ from Observation \ref{obs: procupdate time basic}. 
	Additionally, it may call Procedure \procrematchBU, whose running time is bounded by $\tilde O\left(z+\frac{\rho}{z}\right)$ from Observation \ref{obs: procrematchBU basic}, at most once. Altogether, the running time of Procedure $\procrematchA(v)$, including calls to Procedures $\procrematchBU$ and $\procupdate$, is $\tilde O\left(z+\frac{\rho}{z}\right)$.

Observe that Procedure $\procrematchA(v)$ may only delete an edge from $M^*$ or from $M_1$ in step \ref{step: checking Lv basic}, and, if an edge is deleted from $M_1$, it is also deleted from $M^*$. Therefore,  Procedure $\procrematchA(v)$ may delete at most one edge from $M^*$, and at most one edge from $M_1$. 
If the procedure deletes an edge $(u,u')$ from $M^*$, then it inserts the edge $(v,u)$ into $M^*$. Therefore, prior to the call to $\procrematchBU(u')$, vertices $v$ and $u$ are settled. From Observation \ref{obs: procrematchBU basic}, after the call to $\procrematchBU(u')$, vertices $u,v$ and $u'$ are all settled.

Assume now that Procedure $\procrematchA(v)$ did not delete edges from $M^*$. We claim that $v$ must be settled at the end of the procedure. It is enough to prove that, once the procedure terminates, either $v$ is matched by $M^*$, or every vertex in $N_G(v)$ is matched by $M^*$. Assume for contradiction that this is not the case; in other words, at the end of the procedure, $v$ is not matched by $M^*$, and there is a vertex $u\in N_{G}(v)$ that is not matched by $M^*$. Since we assumed that $v$ is not matched at the end of the procedure, no edges were inserted into $M^*$ by the procedure. Note that it is impossible that $u\in S$, since then, from Invariant \ref{prop: set S basic}, $u\in \hat S$ must hold, and so $v$ must have been matched in Step \ref{step: checking S for A basic}. Therefore, $u\in U$ must hold. Since $L(v)=N_{G}(v)\cap U$ holds throughout the algorithm, $u\in L(v)$ must hold. Since vertex $v$ was not matched in Step \ref{step: checking Lv basic}, we get that $u$ is not among the first $\frac{64\rho}{z}+1$ vertices of $L(v)$; moreover, every one of the first  $\frac{64\rho}{z}+1$ vertices of $L(v)$ must be matched to a vertex of $A$ by $M^*$. Since $L(v)\subseteq U$, this contradicts Invariant \ref{prop during subphase: few vertices of S'' matched to U for basic}. We conclude that $v$ must be settled at the end of the procedure.

 Consider now some vertex $x\in V(G)$ that was settled before the call to Procedure $\procrematchA(v)$. Assume first that $x$ was matched by $M^*$ before the call to the procedure. If an edge incident to $x$ was deleted from $M^*$ by the procedure, then we already showed that $x$ must be settled at the end of the procedure. Otherwise, $x$ remains settled at the end of the procedure. 
	
	Lastly, assume that $x$ was not matched by $M^*$ before the call to Procedure $\procrematchA(v)$, but, for every vertex $y\in N_G(x)$, $y$ was matched by $M^*$. From our discussion above, each such vertex $y$ must remain settled at the end of the procedure. But then, from 
	Observation \ref{obs: settled if neighbors settled}, $x$ is also settled at the end of the procedure.
\end{proof}

We are now ready to complete the description of our algorithm. We start by describing the initialization procedure, which is slightly different for the first phase and for the remaining phases. Then we describe the algorithm for handling edge  deletions.

\subsubsection{Initialization Algorithm for the First Phase}

We initialize the maximal matching $M^*$ and the graph $H$ as follows. 
Recall that,  at the beginning of the algorithm,
we have computed matchings $M_1,\ldots,M_{z+1}$, and we enured that the number of vertices of $S$ that are not matched in $M_1$ is at most $\frac{2n}{z}\leq \frac{2\rho}{z}$.
We start by setting $M^* = M_1$, and by computing the set $\hat S\subseteq S$ of vertices that are not matched in $M^*$, so that $|\hat S|\leq  \frac{2\rho}{z}$ and Property \ref{prop: set S basic} holds.
 We initialize the graph $H$ to contain all vertices of $B\cup U$; for every vertex $u\in U$ that is currently unmatched by $M^*$, for every vertex $v\in \Lambda(u)$, we insert a directed edge $(u,v)$ into $H$. This ensures that Property \ref{prop: graph H basic} holds as well. Since, for every vertex $u\in U$, $|\Lambda(u)|\leq O(z)$, the running time of the initialization algorithm so far is $O(nz)$.

We extend $M^*$ to a maximal matching in the remainder of the initialization procedure, as follows.
Let $X$ contain all vertices of $G$ that are currently not matched in $M^*$. We consider every vertex $x\in X$ in turn. When $x\in X$ is considered, if $x$ is still not matched by $M^*$ at this time: if $x\in B\cup U$, we invoke Procedure $\procrematchBU(x)$, and otherwise, we invoke Procedure $\procrematchA(x)$. 
From Observations \ref{obs: procrematchBU basic} and \ref{obs: procrematchA basic}, once we finish processing $x$, it becomes settled, and it remains so until the end of the initialization procedure.
Therefore, once every vertex in $X$ is processed, matching $M^*$ is guaranteed to be maximal. From Observations \ref{obs: procrematchBU basic} and \ref{obs: procrematchA basic}, the running time of this step is $\tilde O\left(n\left(z+\frac{\rho}{z}\right)\right)$.

Recall that initially,  at most $\tfrac{2 \rho}{z}$ vertices of $S$ were not matched by $M_1$, so $|X\cap S|\leq \tfrac{2 \rho}{z}$ held. Therefore, Procedure \procrematchA was called  at most $\frac{2 \rho}{z}$ times. From Observation \ref{obs: procrematchA basic}, each such such call to \procrematchA may lead to the deletion of at most one edge from $M_1$. Therefore, the total number of vertices of $S$ that are not matched by $M_1$ at the end of this step is bounded by $ \tfrac{2 \rho}{z} + 2\cdot \frac{2\rho}{z}\leq \frac{32\rho}{z}$; in particular, Property~\ref{prop: matching M1 for basic} is guaranteed to hold.
This completes the description of the initialization algorithm for the first phase. Its total running time, including the time required to compute the initial collection $\set{M_1,\ldots,M_{z+1}}$ of matchings, is bounded by $\tilde O\left(n^{1+o(1)}\cdot z+\frac{\rho\cdot n}{z}\right )$, and at the end of this procedure, Property \ref{prop: matching M1 for basic} holds.

\subsubsection{Initialization Algorithm for Subsequent Phases}

We now provide an algorithm for initializing a phase that is not the first phase. 
We let $\hat S'$ be the set of all vertices $v\in S$ that are currently not matched by $M_1$; note that this set may be different from the set $\hat S$ of all vertices that are currently not matched by $M^*$, and that $\hat S\subseteq \hat S'$ holds. Over the course of the initialization procedure, we may update the matching $M_1$, but the set $\hat S'$ of vertices remains fixed and contains all vertices of $S$ that were not matched by $M_1$ at the beginning of the procedure.
If, at the beginning of the phase, Property \ref{prop: matching M1 for basic}  holds, then no further initialization is needed. Therefore, we assume from now on that Property \ref{prop: matching M1 for basic} does not hold. Note that, from Property \ref{prop during subphase: matching M1 for basic}, all but at most $\frac{64\rho}{z}$ vertices of $S$ are currently matched by $M_1$. Therefore, we can assume from now on that $\frac{32\rho}{z}<|\hat S'|\leq \frac{64\rho}{z}$ holds.

Recall that at the beginning of the algorithm, we computed a collection $\set{M_1, \dots, M_{z+1}}$ of disjoint matchings, and, for all $1\leq i\le z+1$ we maintain the counter $m_i$ of the number of vertices of $S$ that are unmatched by $M_i$ throughout the algorithm. Over the course of the algorithm, matching $M_1$ may be modified by the algorithm, but for all $2\leq i\leq z+1$, the only modification that the matching $M_i$ undergoes is the deletion of edges that the adversary deletes from $G$. Let $\Pi$ be the collection of all pairs $(i,v)$, where $2\leq i\leq z+1$ and $v\in S$ is a vertex that is not matched by $M_i$. Clearly, $|\Pi|=\sum_{i=2}^{z+1}m_i$. At the beginning of the algorithm, from Property 
\ref{prop: edges inc to S basic} of the subgraph system, for every vertex $v\in S$, there is at most one index $2\leq i\leq z+1$ such that $v$ is not matched by $M_i$, so $|\Pi|\le |S|\leq n$ held at that time. Since there total number of adversarial edge deletions is bounded by $r$, and since each such edge deletion may affect at most two vertices of $S$, we get that, throughout the algorithm, $|\Pi|\le 2r+n < 4\rho$ always holds (recall that $\rho=n+r$).
Therefore, at the beginning of the current phase, there must exist an index $2\leq i\leq z+1$, with:

\[m_i\leq\frac{\sum_{j=2}^{z+1}m_j}{z}=\frac{|\Pi|}{z}\leq \frac{4\rho}{z}.\]

We fix such an index $2\leq i\leq z$ from now on; from our discussion, the number of vertices of $S$ that are not matched by $M_i$ is at most $\frac{4\rho}{z}$.

We construct the graph $G^*=M_1\cup M_i$ in time $O(n)$. Since $M_1$ and $M_i$ are both matchings, every vertex in $G^*$ has degree at most $2$, and so every connected component of $G^*$ is either a simple path or a cycle (we view isolated vertices as paths of length $0$). 
Consider a vertex $v\in \hat S'$; since $v$ is not matched by $M_1$, its degree in $G^*$ is either $0$ or $1$. Therefore, the connected component of $G^*$ containing $v$ must be a path, that we denote by $P(v)$. Let $v'$ be the other endpoint of the path $P(v)$. We say that vertex $v$ is \emph{problematic} if (i) $v=v'$; or (ii) $v'\in S$ and the last edge on $P(v)$ lies in $M_1$. Notice that, if $v$ is problematic, then $v'$ is not matched by $M_i$. 
Since the number of vertices of $S$ that are not matched by $M_i$ is bounded by $\frac{4\rho}{z}$, at most $\frac{4\rho}{z}$ vertices of $\hat S'$ may be problematic. Next, we process every non-problematic vertex of $\hat S'$ one by one. Over the course of this step, we may delete some edges from $M^*$. We maintain the set $X\subseteq V(G)$ containing every vertex $v\in V(G)$ that was matched by $M^*$ at the beginning of the initialization procedure but now becomes unmatched. We initialize $X=\emptyset$.

\paragraph{Processing a non-problematic vertex of $\hat S'$.}
Let $v\in \hat S'$ be a vertex of $\hat S'$ that is not problematic. We augment the matching $M_1$ along the path $P(v)$ as follows. For every edge $e\in E(P)$, if $e\in M_1$ then we delete $e$ from $M_1$, and otherwise we insert it into $M_1$. It is easy to verify that after this augmentation, $M_1$ remains a valid matching, and vertex $v$ is now matched by $M_1$. The only vertex that was previously matched by $M_1$ and may become now unmatched is $v'$ (if the last edge on $P(v)$ lies in $M_1$), and in this case, $v'\not\in S$ holds. 
Notice that it is also possible that $v'\in \hat S'$ (in which case the last edge on $P(v)$ must lie in $M_i$); in this case, both $v$ and $v'$ become matched in $M_1$, and there is no need to process $v'$ separately later. 
Next, we modify the matching $M^*$ in order to ensure that $M_1\subseteq M^*$ and that $M^*$ remains a valid matching; we also update the vertex set $X$. Specifically, for every edge $e\in E(P)$, if $e$ was deleted from $M_1$ then we delete it from $M^*$ as well, and if it was inserted into $M_1$ then we insert it into $M^*$ as well. Note that every inner vertex $u\in P(v)$ was initially matched in $M_1$ and  in $M^*$, and it remains so; only the endpoints $v,v'$ of $P(v)$ may have changed their status from matched to unmatched with respect to $M_1$, or the other way around. 
For every endpoint $x$ of $P(v)$, if $x$ was matched by $M_1$ but now becomes unmatched (notice that $x=v'$ must hold in this case, and $x$ is now unmatched by $M^*$ as well), we add $x$ to $X$.
For every endpoint $x$ of $P(v)$, if $x$ was unmatched by $M_1$ but now becomes matched by it, if $M^*$ now contains two edges incident to $x$, we delete the edge that is incident to $x$ and does not lie in $M_1$ from $M^*$, and we add the other endpoint of this edge into $X$.

Once very non-problematic vertex of $\hat S'$ is processed, the only vertices of $S$ that may remain unmatched by $M_1$ are the problematic vertices of $\hat S'$, and their number is bounded by $\frac{4\rho}{z}$. Vertex set $X$ now contains every verex $v$ that was matched by $M^*$ at the beginning of the procedure but is currently unmatched by it. 
It is immediate to verify that every non-problematic vertex $v\in\hat S'$ may contribute at most $O(1)$ vertices to set $X$, so $|X|\leq O\left (\frac{\rho}{z}\right )$ must hold.
Lastly, we invoke Procedure $\procupdate(u)$ for every vertex $u$ whose status in $M^*$ changed between matched and unmatched. Since, for every vertex $v\in \hat S'$, at most $O(1)$ vertices  change status while processing $P(v)$, from Observation \ref{obs: procupdate time basic}, the running time of this step is bounded by $\tilde O(|\hat S'|\cdot z)=\tilde O(\rho)$. Overall, the running time of the initialization procedure so far is bounded by $O(n)+\tilde O\left(\frac{\rho}{z}\cdot \left(z+\frac{\rho}{z}\right)\right)\leq \tilde O\left(\rho+\frac{\rho^2}{z^2}\right)$.
From our discussion, all but at most $\frac{4\rho}{z}$ vertices of $S$ are now matched by $M_1$. In the remainder of the initialization procedure, we extend $M^*$ to a maximal matching. This part of the algorithm is similar to that of the initialization procedure for the first phase.

\paragraph{Extending $M^*$ to a maximal matching.} We consider every vertex $x\in X$ in turn. When $x\in X$ is considered, if $x$ is still not matched by $M^*$ at this time: if $x\in B\cup U$, we invoke Procedure $\procrematchBU(x)$, and otherwise, we invoke Procedure $\procrematchA(x)$. 
From Observations \ref{obs: procrematchBU basic} and \ref{obs: procrematchA basic}, the time required to process a single vertex of $X$ is 
 $O\left(z+\frac{\rho}{z}\right)$, and the total running time of this step is bounded by:

\[ \tilde O\left(|X|\cdot \left(z+\frac{\rho}{z}\right)\right)\leq \tilde O\left(\rho+\frac{\rho^2}{z^2}\right).\]
In the following observation we prove that, once all vertices in $X$ are processed, $M^*$ becomes a maximal matching.

\begin{observation}\label{ob: maximal matching at the end}
	After all vertices in $X$ are processed, the matching $M^*$ becomes maximal.
	\end{observation}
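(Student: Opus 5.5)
We will derive the claim from Observation \ref{obs: maximal matching transformation}, together with the settledness guarantees of Observations \ref{obs: procrematchBU basic} and \ref{obs: procrematchA basic}. Let $M^*_0$ be the matching $M^*$ at the start of the initialization procedure. We take as given that $M^*_0$ is maximal in the current graph $G$: it was maximal at the end of the previous phase, since the deletion handling restores maximality after every update. Let $M^*_f$ be the matching at the end of the procedure. By Observation \ref{obs: maximal matching transformation}, it is enough to show that every endpoint of an edge in $M^*_0\setminus M^*_f$ is settled with respect to $M^*_f$.

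First, we identify the vertices that lie in $V_D$, the set of endpoints of $M^*_0\setminus M^*_f$. Take any $x\in V_D$. If $x$ is matched by $M^*_f$, it is settled and there is nothing to prove. Otherwise $x$ was matched by $M^*_0$ but is unmatched at the end. Edges are removed from $M^*$ in three ways: during the augmentation step, when an edge of $M_1$ on a path $P(v)$ is removed; when an edge conflicting with a newly $M_1$-matched endpoint is removed; and inside calls to \procrematchA. In the first two cases we need to check that every vertex that loses its $M^*$-edge and stays unmatched at the end of the augmentation step was added to $X$. Inner vertices of $P(v)$ stay matched. The endpoint $v'$ is added to $X$ if it becomes unmatched. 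In the conflict case, the other endpoint of the removed edge is added to $X$. One subtlety here: a vertex could become unmatched and then be rematched later in the augmentation step, or the reverse. This does not matter, since membership in $X$ only requires being matched initially and unmatched at the end of the augmentation step, and the text explicitly maintains $X$ to be exactly that set. Any vertex of $V_D$ that is matched after the augmentation step but unmatched later must have lost its edge inside a call to \procrematchA. By Observation \ref{obs: procrematchA basic}, such vertices ($u$ and $u'$) are settled at the end of that call.

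Next, we show that settledness persists. When $x\in X$ is processed and is still unmatched, the appropriate procedure is called, and by Observations \ref{obs: procrematchBU basic} and \ref{obs: procrematchA basic}, $x$ is settled afterward. If $x$ is already matched when it is considered, it is settled at that moment. The same observations also guarantee that every vertex settled before a call remains settled after it. So by induction over the processing order, once all of $X$ has been handled, every vertex of $X$ is settled. The same holds for every vertex whose edge was removed inside \procrematchA, since it is settled at the end of that call and stays settled through all later calls.

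The main obstacle is the validity of the invariants the procedures rely on during this step. \procrematchA needs Invariant \ref{prop during subphase: few vertices of S'' matched to U for basic} and the bound on $|\hat S|$, and both procedures need Properties \ref{prop: set S basic} and \ref{prop: graph H basic}. The two properties hold because \procupdate is called on every vertex that changes status. Invariant \ref{prop during subphase: matching M1 for basic} holds because after augmentation at most $\frac{4\rho}{z}$ vertices of $S$ are unmatched by $M_1$, and each call to \procrematchA (at most $|X|=O(\rho/z)$ of them) removes at most one $M_1$ edge. This keeps the count well within $\frac{64\rho}{z}$, with constants to be checked. With these in place, all vertices of $V_D$ are settled with respect to $M^*_f$, and Observation \ref{obs: maximal matching transformation} gives that $M^*_f$ is maximal.
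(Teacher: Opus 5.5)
Your proof is correct and is essentially the paper's argument: you reduce via Observation \ref{obs: maximal matching transformation} to settledness of the endpoints of removed $M^*$-edges, split these into vertices of $X$ and vertices that lost their edge inside \procrematchA, and invoke persistence of settledness; your handling of vertices rematched during augmentation matches the paper's more careful multi-level version. One minor fix to your side remark on invariants: bound the number of direct calls to \procrematchA by $|X\cap S|\le 4\rho/z$, as the paper does right after the observation, rather than by $|X|$, since $|X|$ is only bounded by about $2|\hat S'|\le 128\rho/z$ and that cruder count would not keep the number of vertices of $S$ unmatched by $M_1$ below $64\rho/z$.
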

\begin{proof}
	Let $E'$ be the set of all edges that lied in $M^*$ before the start of the initialization procedure, but do not lie in the final matching $M^*$, and let $V'$ be the set of all vertices that serve as endpoints of the edges in $E'$. From Observation \ref{obs: maximal matching transformation}, it is enough to show that, at the end of the initialization algorithm, every vertex in $V'$ is settled. Consider now a vertex $v\in V'$. If $v\in X$ then,  from Observations \ref{obs: procrematchBU basic} and \ref{obs: procrematchA basic}, when $v$ was processed, it became settled, and it remains so until the end of the algorithm. Otherwise, the only way that $v$ may lie in $V'$ is if the unique edge of $M^*$ that was incident to $v$ at the start of the initialization procedure was deleted by Procedure $\procrematchA$. However, from Observation \ref{obs: procrematchA basic}, once that procedure terminated, $v$ became settled, and, from  Observations \ref{obs: procrematchBU basic} and \ref{obs: procrematchA basic}, it remains so until the end of the algorithm.
\end{proof}

Recall that, after the first step of the initialization procedure,  at most $\tfrac{4 \rho}{z}$ vertices of $S$ were not matched by $M_1$, so $|X\cap S|\leq \tfrac{4 \rho}{z}$ held. Therefore, Procedure \procrematchA was called at most $\frac{4 \rho}{z}$ times. From Observation \ref{obs: procrematchA basic}, each such call to \procrematchA may lead to the deletion of at most one additional edge from $M_1$. Therefore, the total number of vertices of $S$ that are not matched by $M_1$ at the end of this step is bounded by $ \tfrac{4 \rho}{z} + 2\cdot \frac{4\rho}{z}\leq \frac{32\rho}{z}$; in particular, Property~\ref{prop: matching M1 for basic} is guaranteed to hold.

This completes the description of the initialization algorithm for a phase. Its total running time is $\tilde O\left(\rho+\frac{\rho^2}{z^2}\right)$, and, at the end of this procedure, Property \ref{prop: matching M1} holds.

Recall that the number of phases in the algorithm is bounded by $O(z)$. Therefore, the total time that the algorithm spends on initialization of all phases, including the first one, is bounded by: 

\[
  \tilde O\left(n^{1+o(1)}\cdot  z+\frac{\rho n}{z}\right )+ \tilde O\left(z\cdot \left(\rho+\frac{\rho^2}{z^2}\right )\right ) \leq  \tilde O\left(n^{1+o(1)}\cdot z+\rho z+\frac{\rho^2}{z}\right)
\]

(since $\rho\geq n$).

\subsubsection{Processing an Edge Deletion}

We now provide an algorithm for processing a single edge deletion by the adversary.

Let $e=(u,v)$ be an edge that the adversary deleted. 
If $u\in U$ and $v\in \Lambda(u)$, then we delete $v$ from $\Lambda(u)$. If $u\in A$ and $v\in L(u)$, then we delete $v$ from $L(u)$. Similarly, if $u\in \Lambda(v)$ or $u\in L(v)$, we delete $u$ from the corresponding lists.
If $e$ lies in some matching $M_i$, for $2\leq i\leq z+1$, then we delete $e$ from $M_i$ and we update the counter $m_i$ as needed. Finally, if either of the edges $(u,v)$ or $(v,u)$ lies in $H$, we delete it from $H$.

If $e\not\in M^*$, then no further updates are needed. Therefore, we assume that $e\in M^*$ from now on. We delete $e$ from $M^*$, and call \procupdate for $u$ and for $v$. By Observation \ref{obs: procupdate time basic}, the running time of the procedure is $\tilde O\left(z\right)$. If $e$ lies in $M_1$, then it is also deleted from $M_1$. For every vertex $x\in \set{u,v}$, if $x\in B\cup U$, then we invoke Procedure $\procrematchBU(x)$, and otherwise we invoke Procedure $\procrematchA(x)$. From Observations \ref{obs: procrematchBU basic} and \ref{obs: procrematchA basic}, the time we spend on processing each of the vertices $u$, $v$ is $\tilde O\left(z+\frac{\rho}{z}\right)$, and so the total running time of the entire update algorithm is $\tilde O\left(z+\frac{\rho}{z}\right)$.

We claim that the resulting matching $M^*$ is maximal. In order to prove this, it is enough to show that every vertex in $G$ is settled with respect to this final matching. Indeed, consider any vertex $a\in V(G)$. If $a\in \set{u,v}$, then, from Observations \ref{obs: procrematchBU basic} and \ref{obs: procrematchA basic}, once we finish processing $a$, it becomes settled, and it remains so until the end of the algorithm. Assume now that $a\not\in \set{u,v}$. Assume first that the original matching $M^*$ contained an edge $e'$ incident to $a$. Then either $e'$ lies in the final matching $M^*$, or it was deleted by Procedure $\procrematchA$. In the latter case, from Observation \ref{obs: procrematchA basic}, once that procedure terminated, $a$ became settled, and, from Observations \ref{obs: procrematchBU basic} and \ref{obs: procrematchA basic}, it remains so until the end of the algorithm. Therefore, if the initial matching $M^*$ contained an edge incident to $a$, then $a$ is settled with respect to the final matching $M^*$. Lastly, assume that the original matching $M^*$ did not contain an edge incident to $a$. Since $M^*$ was a maximal matching, every vertex in $N_G(a)$ was matched by $M^*$. From the above discussion, every vertex in $N_G(a)$ must be settled in the final matching $M^*$. From Observation \ref{obs: settled if neighbors settled}, it then follows that $a$ is settled in the final matching $M^*$.

Note that if $e\in M_1$, then $e$ is deleted from $M_1$. Moreover, the update procedure calls Procedure \procrematchA at most twice, and each such call may lead to the deletion of at most one additional edge from $M_1$. Recall that by Property \ref{prop: matching M1 for basic}, after the initialization procedure of the current phase, at most $\frac{32\rho}{z}$ vertices of $S$ are  not matched by $M_1$. Since the total number of edge deletions by the adversary in a phase is bounded by $\frac{\rho}{z}$, and since each such edge deletion may lead to up to $2$ additional deletions of edges from $M_1$, we then get that, throughout the phase, the total number of vertices of $S$ that are not matched by $M_1$ remains bounded by:

\[\frac{32\rho}{z}+6\cdot \frac{\rho}{z}\leq \frac{64\rho}{z}.\] 

Therefore, Properties \ref{prop during subphase: matching M1 for basic} and \ref{prop during subphase: few vertices of S'' matched to U for basic} hold throughout the phase.

\paragraph{Bounding the Running Time.}
The total time the algorithm spends on initializing the data structures at the beginning of every phase, including on computing the initial matchings, is bounded by:

$$\tilde O\left(n^{1+o(1)}\cdot z+\rho z+\frac{\rho^2}{z}\right).$$

 The time required to process each adversarial edge deletion is bounded by $ O\left(z+\frac{\rho}{z}\right)$. Since the total number of such edge deletions over the course of the algorithm is bounded by $r\leq \rho$, we get that the total running time of the algorithm is bounded by:

\[\tilde O\left(n^{1+o(1)}\cdot z+\rho z+\frac{\rho^2}{z}\right)+\tilde O\left(\rho z+\frac{\rho^2}{z}\right)\leq \tilde O\left(n^{1+o(1)}\cdot z+rz+\frac{(n+r)^2}{z}\right),\]

since $\rho=n+r$.

\section{Main Algorithm: Proof of \Cref{thm: main}}
\label{sec: main alg}

In this section we provide our main result -- the proof of  \Cref{thm: main}, with some technical details deferred to the following sections. We start by describing the main new technical tool that we use -- the multi-level subgraph system, and the two key subroutines used in our algorithm. The first subroutine allows us to efficiently compute a multi-level subgraph system with appropriate parameters. The second subroutine allows us to maintain a maximal matching in a fully dynamic graph $G$, over a limited number of updates, given such a system. 
We only state the theorems summarizing these two subroutines in this section, and we defer their proofs to the following sections.
We then complete the proof of \Cref{thm: main} using these tools.

\subsection{A Multi-Level Subgraph System}

The main new tool that we use is the \emph{multi-level subgraph system}, that we define next.

\begin{definition}[Multi-Level Subgraph System]\label{def: subgraph structure}
	Let $G$ be an $n$-vertex graph, and let $k>0$ and $1\leq z\leq n$ be integral parameters. A \emph{$k$-level $z$-subgraph system for $G$} consists of the following ingredients:
	\begin{itemize}
		\item A subset $M\subseteq E(G)$ of edges. For every vertex $v\in V(G)$, we denote by $m(v)$ the number of edges of $M$ that are incident to $v$, and we require that the following property holds:
		
		\begin{properties}{P}
			\item \label{prop: small degrees in M} For every vertex $v\in V(G)$, $m(v)\leq z$;
		\end{properties}

        \item A partition $(S,U)$ of $V(G)$, for which the following properties hold:
            \begin{properties}[1]{P}
            		\item \label{prop: no edge of M w both endpoints in U} No edge of $M$ has both endpoints in $U$;
            		
                \item \label{prop: edges inc to S} For every vertex $v\in S$, $m(v)\geq z-k+1$; and
                \item \label{prop: degrees in U} For every vertex $u\in U$, $|N_G(u)\cap U|\leq z$.
            \end{properties}
		\item A partition $(A_1,\dots,A_k,B)$ of $S$ that satisfies the following property:
		\begin{properties}[4]{P}
            \item \label{prop: U neighbors in Ak} For every vertex $u\in U$, $|N_G(u)\cap B|\leq 2z$.
		\end{properties}
        \item For every vertex $u\in U$, a list $\Lambda(u)=N_G(u)\cap (B\cup U)$, whose length must be $O(z)$;

        For all $1\leq i\leq k$, we denote by $A^{\leq i}=A_1\cup\dots\cup A_{i}$ and by $A^{\geq i}=A_i\cup \dots\cup A_k$. For consistency, we also denote $A^{\leq 0}=A^{\geq k+1}=\emptyset$.
        \item For all $1\leq i\leq k$, a vertex set $N_i\subseteq A^{\geq i+1}\cup B$, and the vertex set $R_i=(A^{\geq i+1}\cup B\cup U)\setminus N_i$ such that the following properties hold:
        \begin{properties}[5]{P}
            \item \label{prop: neighbors of Ai in M in Ni} For every vertex $v\in A_i$, for every edge $(u,v)\in M$ incident to $v$, $u\in N_i\cup A^{\le i}$; and
            \item \label{prop: containment of Rs}  $U= R_k\subseteq R_{k-1} \subseteq \dots \subseteq R_1$.
		\end{properties}
        \item For all $1\le i\leq k$, for every vertex $v\in A_i$, the list $L(v)=N_G(v)\cap R_i$.
		\end{itemize}
We denote the $k$-level $z$-subgraph system by $\sset = \left(S,B,U,M,\set{A_i,N_i,R_i}_{i=1}^k,\Lambda,L\right )$, where $\Lambda=\set{\Lambda(u)}_{u\in U}$ and $L=\set{L(v)}_{v\in A}$.
\end{definition}

Note that, from Property \ref{prop: containment of Rs}, $R_k=U$ must hold, and so $N_k=B$, and $A^{\leq k}\cup B=S$  must hold. 
Requirement \ref{prop: neighbors of Ai in M in Ni} for the vertices of $A_k$ is then equivalent to the requirement that, for every vertex $v\in A_k$,  for every edge $(u,v)\in M$ incident to $v$, $u\in S$ must hold.
It is then immediate to verify that the definition of the $1$-level $z$-subgraph system is identical to the definition of basic $z$-subgraph system used in \Cref{sec: simple alg} (see Definition \ref{def: basic subgraph structure}).

For intuition for the definition of the multi-level subgraph system, consider the sets $A_1,\ldots,A_k,B,U$ as being ``stacked'' on top of each other (see Figure \ref{fig: stacking}), so for $1\leq i\leq k$, we view the sets $A_1,\ldots,A_{i-1}$ as lying ``above'' $A_i$, and $A_{i+1}\cdots,A_k,B,U$ as lying ``below'' it. The set of all vertices that lie below $A_i$ is partitioned into two subsets $N_i$ and $R_i$, with $U\subseteq R_i$. The requirement is that, if an edge $(u,v)\in M$ connects a vertex $u\in A_i$ to a vertex $v$ that lies below $A_i$, then $v\in N_i$ must hold; so in particular $v\not\in U$ (see Figure \ref{fig: edges}). Moreover,
from Property \ref{prop: containment of Rs}, $M$ may not contain edges connecting vertices that lie above $A_i$ to vertices of $R_i$.
On the other hand, every vertex $v\in A_i$ must store a list $L(v)=N_G(v)\cap R_i$; these lists may be arbitrarily long. Lastly, the set $B$ and $U$ must have the property that, for every vertex $u\in U$, the number of its neighbors in $G$ that lie in $B\cup U$ is small (at most $O(z)$), and all such neighbors must be stored in list $\Lambda(u)$; note that $u$ may have arbitrarily many neighbors in $S\setminus B$. We note that we could have defined, for all $1\leq i\leq k$, $N_i=A^{\geq i+1}\cup B$ and $R_i=U$. The main problem with this definition is that, since the lists $L(v)$ for $v\in S\setminus U$ may be arbitrarily long, it is challenging to construct them efficiently enough, when building a $k$-level subgraph system from a $(k-1)$-level subgraph system. With the current definition, we can allow the sets $N_i$ and the lists $L(v)$ for $v\in A_i$, for all $1\leq i\leq k-1$, to remain unchanged when constructing a $k$-level subgraph system from a $(k-1)$-level one, resulting in a much more efficient algorithm.

\begin{figure}[t]
    \centering
    \begin{subfigure}[t]{0.45\textwidth}
        \centering
        \begin{tikzpicture}[x=1cm,y=1cm, font=\small]
  \def\i{3}
  \def\k{6}
  \def\w{5.0}
  \def\h{0.7}
  \def\hell{1.0}
  \def\hU{2.0}
  \def\labeldx{0.35}

  \newcommand{\toprect}[5]{
    \path (#2,#3) coordinate (#1NW);
    \path (#2+#4,#3-#5) coordinate (#1SE);
    \draw (#1NW) rectangle (#1SE);
  }
  \newcommand{\labelright}[3]{
    \node[#2] at ($ (#1NW)!0.5!(#1SE) $) {#3};
  }

  \coordinate (cur) at (0,0);

  \draw[thick, black] (cur) rectangle ++(\w,-\h);
  \path (cur) ++(0,-\h -0.01) coordinate (cur);
  \path (0,0) coordinate (AoneNW);
  \path (\w,-\h) coordinate (AoneSE);
  \labelright{Aone}{black}{$A_1$}
  
  \draw[thick, black] (cur) -- ++(0,-\hell) -- ++(\w,0) -- ++(0,\hell);
  \node[black] at ($ (cur) + (\w/2,-\hell*0.4) $) {\Large \vdots};
  \path (cur) ++(0,-\hell) coordinate (cur);

  \draw[thick, black] (cur) -- ++(0,-\h) -- ++(\w,0) -- ++(0,\h);
  \path (cur) coordinate (AiNW);
  \path ($ (cur) + (\w,-\h) $) coordinate (AiSE);
  \labelright{Ai}{black}{$A_i$}
  \path (cur) ++(0,-\h) coordinate (cur);

  \draw[thick, black] (cur) -- ++(0,-\hell) -- ++(\w,0) -- ++(0,\hell); 
  \path (cur) coordinate (NiTopLeft);
  \node[black] at ($ (cur) + (\w/2,-\hell*0.4) $) {\Large \vdots};
  \path (cur) ++(0,-\hell) coordinate (cur);

  \draw[thick, black] (cur) -- ++(0,-\h) -- ++(\w,0) -- ++(0,\h);
  \path (cur) coordinate (AkNW);
  \path ($ (cur) + (\w,-\h) $) coordinate (AkSE);
  \labelright{Ak}{black}{$A_k$}
  \path (cur) ++(0,-\h-0.02) coordinate (cur);

  \draw[thick, blue] (cur) rectangle ++(\w,-\h);
  \path (cur) coordinate (BNW);
  \path ($ (cur) + (\w*0.4,-\h)$) coordinate (NiBotRight);
  \path ($ (cur) + (\w,-\h) $) coordinate (BSE);
  \labelright{B}{blue}{$B$}
  \path (cur) ++(0,-\h-0.02) coordinate (cur);

  \draw[thick, red] (cur) rectangle ++(\w,-\hU);
  \path (cur) coordinate (UNW);
  \path ($ (cur) + (\w,-\hU) $) coordinate (USE);
  \labelright{U}{red}{$U$}

\end{tikzpicture}
        \caption{Stacking the sets $A_1,\ldots,A_k,B$ and $U$. Every vertex $u\in U$ satisfies $|N_G(u)\cap (U\cup B)|\leq O(z)$.}\label{fig: stacking}
    \end{subfigure}
    ~
    \begin{subfigure}[t]{0.45\textwidth}
        \centering
        \begin{tikzpicture}[x=1cm,y=1cm, font=\small]
    \def\i{3}
    \def\k{6}
    \def\w{5.0}
    \def\nw{2.1}
    \def\h{0.7}
    \def\hell{1.0}
    \def\hU{2.0}
    \newcommand{\labelright}[3]{
        \node[#2] at ($ (#1NW)!0.5!(#1SE) $) {#3};
    }

    \coordinate (cur) at (0,0);

    \fill[blue!25] (cur) rectangle ++(\w,-\h);
    \draw[thick, black] (cur) rectangle ++(\w,-\h);
    \path (cur) ++(0,-\h -0.01) coordinate (cur);
    \path (0,0) coordinate (AoneNW);
    \path (\w,-\h) coordinate (AoneSE);
    \labelright{Aone}{black}{$A_1$}

    \fill[blue!25] (cur) rectangle ++(\w,-\hell);
    \draw[thick, black] (cur) -- ++(0,-\hell) -- ++(\w,0) -- ++(0,\hell);
    \node[black] at ($ (cur) + (\w/2,-\hell*0.4) $) {\Large \vdots};
    \path (cur) ++(0,-\hell) coordinate (cur);

    \fill[blue!25] (cur) rectangle ++(\w,-\h);
    \draw[thick, black] (cur) rectangle ++(\w,-\h);
    \path (cur) coordinate (AiNW);
    \path ($ (cur) + (\w,-\h) $) coordinate (AiSE);
    \labelright{Ai}{black}{$A_i$}
    \path (cur) ++(0,-\h) coordinate (cur);

    \draw[thick, black] (cur) -- ++(0,-\hell) -- ++(\w,0) -- ++(0,\hell); 
    \path (cur) coordinate (NiTopLeft);
    \node[black] at ($ (cur) + (\w/2,-\hell*0.4) $) {\Large \vdots};
    \path (cur) ++(0,-\hell) coordinate (cur);

    \draw[thick, black] (cur) -- ++(0,-\h) -- ++(\w,0) -- ++(0,\h);
    \path (cur) coordinate (AkNW);
    \path ($ (cur) + (\w,-\h) $) coordinate (AkSE);
    \labelright{Ak}{black}{$A_k$}
    \path (cur) ++(0,-\h) coordinate (cur);

    \draw[thick, black] (cur) rectangle ++(\w,-\h);
    \path (cur) coordinate (BNW);
    \path ($ (cur) + (\nw,-\h)$) coordinate (NiBotRight);
    \path ($ (cur) + (\w,-\h) $) coordinate (BSE);
    \labelright{B}{black}{$B$}
    \path (cur) ++(0,-\h) coordinate (cur);

    \draw[thick, black] (cur) rectangle ++(\w,-\hU);
    \path (cur) coordinate (UNW);
    \path ($ (cur) + (\w,-\hU) $) coordinate (USE);
    \labelright{U}{black}{$U$}

    \draw[fill = red, fill opacity=0.3, thick, red] ($(NiTopLeft)+(\nw,0)$) -- ($(NiTopLeft)+(\w,0)$) -- (USE) -- ($(cur) + (0,-\hU)$) -- (UNW) -- (NiBotRight) -- ($(NiTopLeft)+(\nw,0)$);
    \node[red] at ($(NiBotRight)+(\w*0.15, -\hU*0.25)$) { \Large $R_i$};
  
    \draw[fill=green!40, fill opacity=0.5, draw=green!60!black, thick, rounded corners=2pt]
    (NiTopLeft) rectangle (NiBotRight);
    \node[green!40!black]
    at ($(NiTopLeft)!0.5!(NiBotRight) $) {\large $N_i$};

    \draw ($(AiNW) + (\w*0.25,-\h*0.4)$) node(1)[circle, draw, fill=black!50,
	inner sep=0pt, minimum width=4pt, label = 180:{$v$}] {};
  
    \draw ($(AiNW) + (\w*0.18,-\h - \hell *0.6)$) node(2)[circle, draw, fill=black!50,
	inner sep=0pt, minimum width=4pt] {};

    \draw ($(BNW) + (\w*0.35,-\h *0.5)$) node(3)[circle, draw, fill=black!50,
	inner sep=0pt, minimum width=4pt] {};
  
    \draw ($(AiNW) + (\w*0.8,-\h*0.7)$) node(4)[circle, draw, fill=black!50,
	inner sep=0pt, minimum width=4pt] {};
    \draw ($(AoneNW) + (\w*0.7,-\h-\hell*0.3)$) node(5)[circle, draw, fill=black!50,
	inner sep=0pt, minimum width=4pt] {};
    \draw ($(AoneNW) + (\w*0.6,-\h*0.5)$) node(6)[circle, draw, fill=black!50,
	inner sep=0pt, minimum width=4pt] {};

    \draw [semithick] (1) to (2);
    \draw [semithick] (1) to (3);
    \draw [semithick] (1) to (4);
    \draw [semithick] (1) to (5);
    \draw [semithick] (1) to (6);

\end{tikzpicture}
        \caption{$N_i$ is the green region and $R_i$ is the red region. For every edge $e=(u,v)\in M$ with $v\in A_i$, its other endpoint $u$ must lie in the green or the blue region. No edges of $M$ may connect vertices in the blue region to vertices in the red region}\label{fig: edges}
    \end{subfigure}
	\caption{Illustration of the subgraph system}\label{fig: illustration}
\end{figure}

\subsection{From Subgraph System to Fully Dynamic Maximal Matching}
A central ingredient of our approach is an algorithm that, given a fully dynamic graph $G$ and a $k$-level $z$-subgraph system $\sset$ for an initial graph $G$ (that is {\bf not} assumed to be empty), maintains a maximal matching in $G$, as it undergoes an online sequence of (a limited number) of edge insertions and deletions. The algorithm is summarized in the following theorem, whose proof appears in Section \ref{sec: subgraph-system-to-matching}.

\begin{theorem}\label{thm: matching from subgraph system}
There is a deterministic algorithm, whose input consists of an initial $n$-vertex graph $G$ given in the adjacency-list representation,  together with integral parameters $0<k\leq \log n$ and $1\leq z\leq n$, and a  $k$-level $z$-subgraph system $\sset$ for $G$. The algorithm maintains a maximal matching in $G$, as it undergoes an online sequence of at most $n$ edge insertions and deletions, in total time $O\left(n^{1+o(1)}\cdot \left(z+\tfrac{n}{z}\right)\right)$.
\end{theorem}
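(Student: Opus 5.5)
The plan is to generalize the proof of \Cref{thm: matching from basic subgraph system} in two directions: multi-level rematching of $A$-vertices, and edge insertions. Let $\rho=2n$. First, compute a $(z+1)$-edge-coloring of the graph $\hat G$ induced by $M$ using \Cref{thm: edge coloring}, in time $O(n^{1+o(1)}z)$, obtaining matchings $M_1,\ldots,M_{z+1}$. By Property \ref{prop: edges inc to S}, each $v\in S$ is unmatched in at most $k\le\log n$ of them. Hence $\sum_i m_i\le kn$, and there is an index with $m_{i^*}=O(kn/z)$. All thresholds of the basic proof ($\frac{32\rho}{z}$, $\frac{64\rho}{z}$) get multiplied by $k=O(\log n)$, which is absorbed in the $n^{o(1)}$ factor.

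Throughout, keep $S,B,U,A_i,N_i,R_i$ fixed and keep $M^*\supseteq M_1$ maximal. The data structures are those of the basic proof: the set $\hat S$ of unmatched $S$-vertices, and the directed graph $H$ on $B\cup U$, where each unmatched $u\in U$ points to $\Lambda(u)$. Procedure \procrematchBU is unchanged. The new ingredient is \procrematchA for $v\in A_i$. We scan the first $O(k\rho/z)+1$ vertices of $L(v)=N_G(v)\cap R_i$, looking for some $u$ that is unmatched or matched to a vertex outside $A^{\le i}$. If $u$ is matched to $u'$, we steal $u$ and recursively rematch $u'$.

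Correctness of this scan rests on the invariant that few vertices of $R_i$ are matched to $A^{\le i}$. Edges of $M_1\subseteq M$ never join $A^{\le i}$ to $R_i$: for $j\le i$, Property \ref{prop: neighbors of Ai in M in Ni} forces the endpoint into $N_j\cup A^{\le j}$, and $R_i\subseteq R_j$ is disjoint from these by Property \ref{prop: containment of Rs}. So any $A^{\le i}$--$R_i$ edge of $M^*$ uses an $A$-vertex unmatched in $M_1$, and there are $O(k\rho/z)$ of those.

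The stolen partner $u'$ lies in $A^{\ge i+1}\cup B\cup U$, so the recursion goes strictly deeper in the stack. It terminates after at most $k+1$ levels, ending with \procrematchBU. Each level deletes at most one edge from $M_1$, so each update deletes $O(k)$ edges of $M_1$. This forces phases of length $\Theta(\rho/(kz))$ updates, giving $O(kz)$ phases, and all bounds still carry only polylog overhead. Settledness is argued exactly as in Observations \ref{obs: procrematchBU basic} and \ref{obs: procrematchA basic}, by induction along the chain. Per-call cost is $\tilde O(k(z+k\rho/z))$.

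Phase initialization repeats the augmenting-path repair along $M_1\cup M_i$. Among $M_2,\ldots,M_{z+1}$, total unmatched $S$-pairs stay at most $kn+2r$, so some $M_i$ has $O(k\rho/z)$ unmatched $S$-vertices. The number of non-problematic repairs is $O(k\rho/z)$, each costing $\tilde O(z+k\rho/z)$ per affected vertex.

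The main obstacle is edge insertions, which break the subgraph-system invariants: an inserted edge inside $U\cup B$ can exceed the $O(z)$ bound on $\Lambda(u)$, and an insertion $(v,u)$ with $v\in A_i$, $u\in R_i$ must go into $L(v)$. To handle this, I will keep the inserted edges in a separate set $E_I$ with $|E_I|\le n$, kept outside the system. Each vertex checks its incident $E_I$-edges when rematching; this costs $\deg_{E_I}$, which is not bounded per vertex, so I will amortize. I would first try treating a vertex whose $E_I$-degree exceeds $z$ as ``heavy'' and handle its inserted neighbors via its own list, scanned only $O(k\rho/z)$-deep, much like $L(v)$. Alternatively, I would add $E_I$-neighbors in $U\cup B$ to $H$ only from the low-$E_I$-degree side. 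Every vertex has at most $n/z$ heavy-side scans of cost $z$, so the total over $n$ updates is $O(n^{1+o(1)}(z+n/z))$. Summing the per-update cost $\tilde O(z+n/z)$ over $n$ updates, together with initialization, then gives the claimed bound.
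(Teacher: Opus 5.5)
Your treatment of the decremental part matches the paper's proof: the $(z+1)$-edge-coloring with the $k$-fold slack from $m(v)\ge z-k+1$, the fact that $M_1\subseteq M$ never joins $A^{\le i}$ to $R_i$, the truncated stealing scan of $L(v)$ with recursion into strictly deeper levels, and the augmenting-path repair along $M_1\cup M_i$. The gap is edge insertions. There you list tentative options, and none of them works as stated.

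Your first option is to scan a heavy vertex's list of $E_I$-neighbours only $O(k\rho/z)$ deep, ``like $L(v)$''. This has no invariant behind it. The truncated scan of $L(v)$ is sound only because at most $O(k\rho/z)$ vertices of $R_i$ can be matched into $A^{\le i}$, and every other vertex of $R_i$ can be stolen, since its partner lies strictly lower in the stack. For inserted edges neither fact holds. All $E_I$-neighbours of a heavy vertex may be matched, and stealing one can unmatch an arbitrary vertex (say in $A_1$). That breaks both the level-monotone termination of the recursion and the $O(k)$ bound on deletions from $M_1$, so a FAILURE after a truncated scan certifies nothing.

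Your second option lives inside $H$ on $U\cup B$, so heavy vertices of $A$ are not covered. It also never detects an inserted edge joining two heavy vertices, since neither endpoint registers it. The closing sentence about ``$n/z$ heavy-side scans of cost $z$'' does not describe a concrete procedure, so the per-update bound for insertions is asserted rather than proved.

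The missing idea is a counting bound on heavy vertices, used directly in the data structure. Call $v$ bad once at least $z$ insertions incident to it have occurred. Since there are at most $n$ insertions, at most $2n/z$ vertices are ever bad. So whenever any vertex $v$ changes its matched status, you can afford to scan all bad vertices in $\tilde O(n/z)$ time, and add or remove $v$ as an in-neighbour of every bad $v'$ with $(v,v')\in E_I$. This is the paper's directed graph $\tilde H$. The rest follows:
\begin{itemize}
\item a bad vertex finds a free inserted neighbour in $O(1)$ by taking any in-edge of $\tilde H$;
\item a good vertex scans its fewer than $z$ incident edges of $E_I$;
\item a vertex that turns bad initializes its in-edges from its $O(z)$ current $E_I$-edges;
\item an inserted edge with two free endpoints goes straight into $M^*$.
\end{itemize}
This handles heavy--heavy and heavy--light inserted edges uniformly, keeps Procedure \procupdate within $\tilde O(z+n/z)$, and gives the per-update bound your final summation needs. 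You could instead repair your second option by letting a heavy vertex scan the $O(n/z)$ currently free heavy vertices, but that also rests on the $2n/z$ bound, which your proposal never states or uses.
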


\subsection{Algorithms for Constructing a Subgraph System}

We start with the following theorem, that provides an efficient algorithm for computing a $1$-level $z$-subgraph system for a graph $G$, for any given integral parameter $1\le z\leq n$. Since, as discussed already, a $1$-level $z$-subgraph system is equivalent to the basic subgraph system introduced in \Cref{sec: simple alg}, the theorem follows immediately from \Cref{thm: constructing simple subgraph system}.

\begin{theorem}\label{thm: level-1 constr of subgraph system}
	There is a deterministic algorithm, that, given an $n$-vertex and $m$-edge graph $G$ and a parameter $z\in [n]$, constructs a 1-level $z$-subgraph system $\sset = \left(S,B,U,M,A_1,N_1,R_1,\Lambda,L\right )$ for $G$, in time $\tilde O(n+m)$.
\end{theorem}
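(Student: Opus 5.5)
The plan is to run the algorithm of \Cref{thm: constructing simple subgraph system} on $G$ with parameter $z$, and then show that its output becomes a valid $1$-level $z$-subgraph system after a relabeling. That algorithm runs in time $\tilde O(n+m)$ and returns a basic $z$-subgraph system $(S,B,U,M,\Lambda,L)$. We keep $S,B,U,M,\Lambda$ as they are and set $A_1=A=S\setminus B$, $N_1=B$ and $R_1=U$. These last three sets can be written down in $O(n)$ additional time, so the total running time remains $\tilde O(n+m)$.

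We then verify the properties of Definition \ref{def: subgraph structure} with $k=1$ one at a time.

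\textbf{Degree and partition properties.} Properties \ref{prop: small degrees in M}, \ref{prop: no edge of M w both endpoints in U}, \ref{prop: degrees in U} and \ref{prop: U neighbors in Ak} are literally Properties \ref{prop: small degrees in M basic}, \ref{prop: no edge of M w both endpoints in U basic}, \ref{prop: degrees in U basic} and \ref{prop: U neighbors in B basic} of the basic system.

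\textbf{Degree lower bound on $S$.} Property \ref{prop: edges inc to S} asks for $m(v)\ge z-k+1=z$ for every $v\in S$. This follows from Property \ref{prop: edges inc to S basic}, which gives $m(v)=z$.

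\textbf{The sets $N_1$ and $R_1$.} Since $A^{\ge 2}=\emptyset$, the required containment $N_1\subseteq A^{\ge 2}\cup B$ reads $N_1\subseteq B$, which holds with equality. The set $R_1$ must equal $(B\cup U)\setminus N_1$, and indeed $(B\cup U)\setminus B=U$. This also gives Property \ref{prop: containment of Rs}, because $R_1=U=R_k$.

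\textbf{Matching edges at $A_1$.} Property \ref{prop: neighbors of Ai in M in Ni} asks that every $M$-neighbor of a vertex in $A_1$ lie in $N_1\cup A^{\le 1}=B\cup A=S$. This is exactly Property \ref{prop: neighbors of A in M in S basic}.

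\textbf{The lists.} For $v\in A_1$ the required list is $L(v)=N_G(v)\cap R_1=N_G(v)\cap U$, which is the list the basic system already stores. The lists $\Lambda(u)$ are identical in the two definitions, including the $O(z)$ length bound.

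No step here is a genuine obstacle. The only point needing care is getting the boundary conventions for $k=1$ right, namely $A^{\ge 2}=\emptyset$, which is what forces $N_1=B$ and $R_1=U$.
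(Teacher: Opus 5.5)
Your proposal is correct and is essentially the paper's argument. The paper also obtains this theorem by running the algorithm of Theorem~\ref{thm: constructing simple subgraph system}, observing that for $k=1$ Property~\ref{prop: containment of Rs} forces $R_1=U$ and $N_1=B$, so that a $1$-level $z$-subgraph system is identical to a basic one. Your property-by-property check just spells out that equivalence, which the paper states as immediate.
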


Note that the running time of the algorithm from \Cref{thm: level-1 constr of subgraph system} may be as high as $\Omega(n^2)$, if the input graph $G$ is dense. Therefore, we cannot afford to recompute the subgraph system too often. For example, if our goal is to obtain an algorithm for maximal matching whose amortized update time is $n^{1/2+o(1)}$, then we can only afford to recompute it every time the adversary makes roughly $n^{3/2}$ edge updates to the graph $G$. On the other hand, the running time of the algorithm from \Cref{thm: matching from subgraph system}, that maintains a maximal matching in $G$ given an initial subgraph system, is $O\left(n^{1+o(1)}\left(z+\tfrac{n}{z}\right)\right)$ for $n$ updates; therefore, if our goal is to achieve amortized update time $n^{1/2+o(1)}$, we should choose $z$ to be roughly $\sqrt{n}$. Lastly, if the parameter $z$ is not too large, then the initial subgraph system may become significantly damaged after a relatively small number of edge deletions by the adversary, so, for example, after $\omega(n)$ edge updates to $G$, the subgraph system computed for the initial graph $G$ may no longer be possible to use by the algorithm from  \Cref{thm: matching from subgraph system} in order to continue and maintain a maximal matching in $G$.

 In order to overcome these difficulties, we select a parameter $k=\Theta(\log n)$, and a sequence $z_1\geq z_2\geq\cdots\geq z_k$ of integral parameters, where $z_1$ is close to $n$ and $z_k$ is close to $\sqrt{n}$, while $\frac{z_{i-1}}{z_{i}}=2$ for all $1<i\leq k$. Our algorithm will construct a $1$-level $z_1$-subgraph system every once in a while (after a sufficiently large number of edge updates by the adversary). We will not use this subgraph system in order to maintain a maximal matching in $G$ directly; instead, we will use it in order to compute a $2$-level $z_2$-subgraph system. This subgraph system, in turn, will be used in order to compute a $3$-level $z_3$-subgraph system and so on. 

The key tool that we use is an algorithm that,  given an $h$-level $z$-subgraph system for a graph $G$, efficiently constructs an $(h+1)$-level $z'$-subgraph system for $G$, for a given parameter $z'<z$.  The main idea is that the $h$-level $z$-subgraph system can be exploited in order to speed up the construction of the $z'$-subgraph system (but unfortunately this speedup comes at the cost of increasing the number of levels in the subgraph system); the running time of the algorithm is bounded by $O\left(n^{1+o(1)}\cdot z\right )$, which may be significantly smaller than $|E(G)|$ if $G$ is sufficiently dense and the parameter $z$ is sufficiently small. This speedup allows us to compute the subgraph systems for higher levels $i$ more often, and this is what allows us to set the parameters $z_i$ to be progressively smaller (recall that the smaller the parameter $z$ of the subgraph system, the faster the adversary may destroy the subgraph system via edge deletions, so a subgraph system with a lower parameter $z$ needs to be recomputed more often; therefore, we need to ensure that its computation is more efficient).

For technical reasons, in our algorithm that computes an $(h+1)$-level $z'$-subgraph system from an $h$-level $z$-subgraph system, we assume that we are given, as part of input, a subset $E_D\subseteq E(G)$ of ``deleted'' edges, and we would like to ensure that the new $(h+1)$-level subgraph system only contains a limited number of such edges. 
Intuitively, these are edges that were deleted by the adversary since the $h$-level subgraph system was constructed, but we temporarily keep them in $G$ because their removal may have a large impact on the $h$-level subgraph system. Instead, we will ensure that the final $k$-level subgraph system only contains few such edges. We also assume that we are given a set $E_I$ of edges that do not lie in the input graph; these are edges that the adversary has inserted into $G$ since the $h$-level subgraph system was constructed but they were not yet incorporated into the $h$-level subgraph system (so for example there may be many such edges connecting a vertex $u\in U$ to other vertices in $U$). We require that these edges are incorporated into the new $(h+1)$-level subgraph system.
The following theorem, whose proof appears in \Cref{sec: recursive subgraph system}, provides an algorithm for constructing an $(h+1)$-level $z'$-subgraph system from an $h$-level $z$-subgraph system.

\begin{theorem}\label{thm: inductive constr of subgraph system}
There is a deterministic algorithm, whose input consists of:

\begin{itemize}
	\item an $n$-vertex graph $G$ given in the adjacency-list representation;
	\item an integral parameter $h>0$ and a parameter $1\leq z\leq n$ that is an integral power of $2$;
	\item an $h$-level $z$-subgraph system $\sset = \left(S,B,U,M,\set{A_i,N_i,R_i}_{i=1}^h,\Lambda,L\right )$ for $G$; 
	\item a subset $E_D\subseteq E(G)$ of edges; 
	\item a set $E_I\subseteq V(G)\times V(G)$ of edges that do not lie in $G$; and
	\item a parameter $1\leq z'<z$ that is an integral power of $2$.
\end{itemize}
The algorithm constructs a subset $E_D'\subseteq E_D$ of at most $\frac{|E_D|\cdot z'}{z}$ edges and an $(h+1)$-level $z'$-subgraph system $\sset' =\left(S',B',U',M',\set{A_i',N_i',R_i'}_{i=1}^{h+1},\Lambda',L'\right )$ for the graph $G'=(G\cup E_I)\setminus (E_D\setminus E_D')$. The running time of the algorithm is $\tilde O(n^{1+o(1)}\cdot z+|E_D|+|E_I|)$. The algorithm may modify the input adjacency-list representation of $G$ and the input representation of the $h$-level $z$-subgraph system $\sset$ for $G$.
\end{theorem}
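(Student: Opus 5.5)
We follow the outline sketched in \Cref{subsec: techniques}.

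\emph{Edge coloring and selection of $M'$.} First we apply \Cref{thm: edge coloring} to the graph $\hat G=(V(G),M)$, whose maximum degree is at most $z$. This takes $O(n^{1+o(1)}z)$ time and partitions $M$ into matchings $M_1,\ldots,M_{z+1}$. We compute $w_j=|M_j\cap E_D|$ for every $j$ in time $O(|E_D|)$, and reindex so that $w_1\le\cdots\le w_{z+1}$. We then choose $z'$ of these matchings as follows: set $M'=\bigcup_{j\le z'}M_j$ and $E'_D=E_D\cap M'$. Averaging gives $|E'_D|\le \frac{z'}{z+1}|E_D|\le\frac{z'}{z}|E_D|$. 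For every $v\in S$ we had $m(v)\ge z-h+1$, so $v$ misses at most $h$ of the matchings, and hence $m'(v)\ge z'-h$, i.e. $m'(v)\ge z'-(h+1)+1$. Also $m'(v)\le z'$ everywhere. So Properties \ref{prop: small degrees in M} and \ref{prop: edges inc to S} hold for level $h+1$ with parameter $z'$. Moreover every edge of $M'$ lies in $G'$: edges of $E_D\cap M'$ are kept, and $M'$ is disjoint from $E_D\setminus E'_D$.

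\emph{Tentative sets.} We start from $S'=S$, $U'=U$, $A'_i=A_i$, $N'_i=N_i$ and $R'_i=R_i$ for $i\le h$, and we leave the lists $L(v)$ for $v\in A^{\le h}$ untouched. Since $M'\subseteq M$, Property \ref{prop: neighbors of Ai in M in Ni} is inherited for $i\le h$. Edges deleted in $E_D\setminus E'_D$ only shrink neighborhoods; we fix the lists by scanning $E_D$ and deleting the affected entries, in $\tilde O(|E_D|)$ time. Edges of $E_I$ are also scanned. An inserted edge between a vertex of $A_i$ and a vertex of $R_i$ is appended to the corresponding $L$ list. Inserted edges touching $U$ are recorded as extra entries of $\Lambda$, so that $\Lambda(u)$ always equals $N_{G'}(u)\cap(B\cup U)$. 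The level-$(h+1)$ structure concerns only $B\cup U$: the new $A'_{h+1}$, $B'$ and $U'$ will partition $B\cup U$, and we set $N'_{h+1}=B'$, $R'_{h+1}=U'$. This gives $R'_{h+1}=U'\subseteq U=R_h$, so Property \ref{prop: containment of Rs} holds as long as vertices only leave $U$.

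\emph{Repair.} Next we run the analogue of Step 1 and Step 2 of \Cref{thm: constructing simple subgraph system}, restricted to $U$ and using the lists $\Lambda(u)$. The total size of these lists is $O(nz)+|E_I|$, and this bound is what keeps the running time below $|E(G)|$.

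\begin{enumerate}
\item \textbf{Greedy step.} We greedily add edges of $G'[U]$ to $M'$ until every vertex is saturated or its $U$-neighbors are.
\item \textbf{First promotion.} Vertices of $U$ with $m'=z'$ are promoted, as are vertices with $m'\ge z'-h$ whose remaining neighbors allow it. We then delete the $M'$-edges inside the new $U'$.
\item \textbf{Processing $U'$.} We run \procprocessU on each remaining $u\in U'$ with $|N(u)\cap B'|>z'$. It uses edge swaps with the counters $n(v)$ and the sets $Z(v)$, and promotes $v\in B'$ to $A'_{h+1}$ when $n(v)=0$.
\end{enumerate}

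This yields Properties \ref{prop: degrees in U}, \ref{prop: U neighbors in Ak} and \ref{prop: neighbors of A in M in Ni} for level $h+1$. New $\Lambda'$ lists are subsets of the old ones, plus entries coming from $E_I$. For $v\in A'_{h+1}\subseteq B\cup U$ we need $L'(v)=N_{G'}(v)\cap U'$. For $v\in U$, this is a subset of $\Lambda(v)$ together with $E_I$ edges. For $v\in B$, it is found by scanning $\Lambda(u)$ for all $u\in U'$, which costs $O(nz)$ in total.

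\emph{Main obstacle.} The hard part is making the greedy and promotion steps consistent with the relaxed lower bound $m'(v)\ge z'-h$ on old $S$-vertices. An edge swap on $v\in B'$ must not push any $A$-vertex below the bound or above $z'$. Another subtle point: a vertex of $B$ whose $M'$ edges all go to $S$ must be placed in $A'_{h+1}$ only if Property \ref{prop: neighbors of A in M in Ni} holds for it. This is a concern because the old $N_i$ requirements restrict only $A_i$, not $B$.

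The bound $|N_{G'}(u)\cap U'|\le z'$ after the greedy step needs a similar argument. It follows from maximality of the greedy set, provided vertices with $m'(u)<z'$ whose $U$-neighborhood is too large are promoted using the $\Lambda$ lists, exactly as in Step 1. I would check every allowed modification, in the style of \Cref{obs: properties preserved}, against all properties for levels $1,\ldots,h+1$.
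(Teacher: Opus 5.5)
Your plan is correct and is essentially the paper's proof. It makes the same choice of the $z'$ color classes with fewest $E_D$-edges, inherits levels $1,\ldots,h$ (with $N'_i=N_i$, $R'_i=R_i$ and the old lists $L(v)$), splits $B$ into $A'_{h+1}$ and $B'$, and repairs only inside $B\cup U$ via the $\Lambda$ lists with edge swaps and promotions. The one difference is organizational: the paper merges your global greedy pass and your swap pass into a single per-vertex procedure \procprocess, promoting a $U$-vertex as soon as $m'\ge z'-h$.

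The points you flag as the main obstacle are not real difficulties:
\begin{itemize}
\item Edge swaps leave $m'$ unchanged at every vertex of $S'$, and the greedy pass only touches edges inside $U$, so every vertex of the old $S$ keeps $m'\ge z'-h$.
\item The requirement for $A'_{h+1}$ is exactly that all its $M'$-edges end in $N'_{h+1}\cup A'_1\cup\cdots\cup A'_{h+1}=S'$, which is your placement rule.
\item The bound $|N_{G'}(u)\cap U'|<z'$ follows directly from greedy maximality once $U'$ consists only of unsaturated vertices. No extra promotion is needed, and promoting a vertex with $m'<z'-h$ would in fact violate the lower bound required on $S'$.
\end{itemize}
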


We note that generally, the size of the $(h+1)$-level $z'$-subgraph system $\sset'$ that the algorithm from \Cref{thm: inductive constr of subgraph system} constructs may be quite large (specifically, the lists in $L'(v)$ for $v\in A'$ may be very long), and so the algorithm may not be able to output it explicitly within its required running time. Instead, it modifies the input $h$-level $z$-subgraph system $\sset$ in order to transform it into the $(h+1)$-level $z'$-subgraph system $\sset'$ for $G'$.

\subsection{Completing the Proof of  \Cref{thm: main}}
\label{sec: high-level}

We are now ready to complete the proof of \Cref{thm: main}. We denote by $r$ the total number of adversarial updates to the input graph $G$, that may not be known to the algorithm in advance. It will be convenient for us to assume that $n=|V(G)|$ is an integral power of $2$. If this is not the case, then we add dummy vertices into $G$ until $n=|V(G)|$ becomes an integral power of $2$. Note that this may only increase $n$ by at most factor $2$.
Our algorithm works in phases. We partition the time horizon into phases $\Phi_0,\Phi_1,\Phi_2,\ldots$ as follows. The first phase, $\Phi_0$, starts immediately after the first edge is inserted into the initially empty graph $G$, and lasts for $r_0=\min\set{n,r}$ updates, after which Phase $\Phi_1$ starts.
For all $i\ge 1$, we denote by $m_i$ the number of edges in graph $G$ at the beginning of Phase $\Phi_i$. If $m_i\leq n^{3/2}$, then the $i$th phase lasts for exactly $r_i=n$ updates. In this case, we say that Phase $\Phi_i$ is of \emph{type 1}. Otherwise, the $i$th phase lasts for $r_i=m_i$ updates, and we say that it is of \emph{type 2}. For all $i\geq 0$, if the update sequence terminates before $r_i$ updates are completed since the beginning of Phase $\Phi_i$, it becomes the last phase of the algorithm.

The first phase $\Phi_0$ is special, and is implemented by a simple algorithm summarized in the following lemma; the proof is deferred to Section \ref{subsubsec: type 0 phase}. Intuitively, this algorithm is needed for the case where the length of the entire update sequence is relatively short, for example, below $n$.

\begin{lemma} \label{lem: small number of updates}
	There is a deterministic algorithm, that, given as input an $n$-vertex graph $G$ that initially contains no edges, and undergoes an online sequence of $r>0$ edge deletions and insertions, together with an integral parameter $t>0$, maintains a maximal matching in $G$. The running time of the algorithm is $\tilde O\left(r\left(t+\frac{r}{t}\right)\right )$.
\end{lemma}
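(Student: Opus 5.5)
We prove the lemma with a simple degree-threshold scheme. Since $G$ starts empty and undergoes only $r$ updates, at most $r$ edges exist at any time, so at most $2r/t$ vertices can ever have current degree above $t$. Call a vertex \emph{heavy} if its current degree exceeds $t$ and \emph{light} otherwise. Alongside the adjacency lists, we maintain for every vertex its degree and a flag recording whether it is matched by the current matching $M^*$. We also maintain the set $F$ of vertices that are currently unmatched by $M^*$ and heavy. For every vertex $x$ we keep the list $F(x)$ of its currently unmatched light neighbors; this list is stored implicitly through the neighbors, as described next. Every status change (matched/unmatched, light/heavy) is propagated to all neighbors of a light vertex, at $O(t)$ cost. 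Concretely, when a light vertex $y$ changes status between matched and unmatched, we insert $y$ into or delete it from the list $F(x)$ of each of its at most $t$ neighbors $x$, in time $\tilde O(t)$.

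The heavy/light transitions need care. A vertex becomes heavy or light only when one of its incident edges is inserted or deleted. When a light vertex $y$ becomes heavy, we remove it from all the lists $F(x)$, which costs $\tilde O(t)$ since it has about $t$ neighbors. When a heavy vertex becomes light, we add it to the lists of its at most $t$ neighbors, again in $\tilde O(t)$ time.

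To rematch an unmatched vertex $x$, we first check whether $F(x)$ is nonempty. If it is, we match $x$ to any member of $F(x)$. Otherwise, we scan $F$, which has size $O(r/t)$, and test adjacency in $O(\log n)$ time per test. If we find no candidate in either place, then every neighbor of $x$ is matched, so $x$ is settled.

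Updates are handled as follows. On an insertion of $(u,v)$, we update the degrees and lists and then match $u$ with $v$ if both are free. On a deletion of an edge not in $M^*$, nothing beyond bookkeeping is needed. On a deletion of a matched edge $(u,v)$, we unmatch both endpoints, propagate the status changes, and call the rematch procedure on $u$ and then on $v$. Correctness follows from \Cref{obs: maximal matching transformation} and \Cref{obs: settled if neighbors settled}: only the endpoints of the removed matching edge may become unsettled, and the rematch procedure settles each of them without deleting any matching edge. Each update performs $O(1)$ rematch calls and $O(1)$ status propagations, so its cost is $\tilde O(t + r/t)$. Summing over the $r$ updates gives $\tilde O(r(t + r/t))$.

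The main obstacle is the bookkeeping that keeps the lists $F(x)$ exactly equal to the set of unmatched light neighbors of $x$ across degree-threshold crossings. If a vertex oscillates around degree $t$, each crossing is still charged to the update that caused it, and each crossing costs only $\tilde O(t)$, so these oscillations stay within budget.
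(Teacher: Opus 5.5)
Your proof is correct, but it arranges the data structure in the opposite direction from the paper's. The paper calls a vertex bad once at least $t$ edges incident to it have been \emph{inserted}. This is a monotone notion: a bad vertex never becomes good again. For each bad vertex $b$ the paper stores the list $F(b)$ of all its currently unmatched neighbors. Whenever any vertex changes its matched status, it scans the whole set of at most $2r/t$ bad vertices to update their lists. A good vertex is rematched by scanning its at most $t$ neighbors, and a bad vertex is rematched in $O(1)$ time from $F(b)$.

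You instead use the current degree and have each light vertex push its status changes to its at most $t$ neighbors. You then pull the heavy candidates at rematch time by scanning the global set of unmatched heavy vertices, which has size at most $2r/t$ because $G$ never contains more than $r$ edges. The two cost terms therefore swap roles. In the paper, $t$ pays for neighborhood scans (and for initializing $F(u)$ when $u$ turns bad), while $r/t$ pays for propagating status changes. In yours, $t$ pays for propagation and for threshold crossings, while $r/t$ pays for the scan.

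Each approach buys something different. The paper's monotone definition spares it any bookkeeping for vertices that oscillate around the threshold. You must handle crossings in both directions explicitly, which you do correctly by charging each crossing $\tilde O(t)$ to the update that caused it. In exchange you get a single uniform rematch procedure for all vertices. That procedure closely mirrors the structure the paper itself uses later in \procrematchBU: unmatched low-degree vertices push edges into $H$, plus a scan of $\hat S$.
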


We will employ the algorithm from Lemma \ref{lem: small number of updates} for Phase $\Phi_0$ with parameter $t=n^{1/2}$. 
The algorithm for implementing type-1 phases is summarized in the following theorem, whose proof is deferred to \Cref{subsubsec: type 1 phase}.

\begin{theorem}\label{thm: type 1 phase}
	There is a deterministic algorithm, that, given an $n$-vertex and $m$-edge graph $G$ with $m\leq n^{3/2}$, that undergoes an online sequence of $r\leq n$ edge insertions and deletions, maintains a maximal matching in $G$ with total running time $ O\left(n^{3/2+o(1)}\right )$.
\end{theorem}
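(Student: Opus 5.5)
The approach is to construct a $1$-level subgraph system once at the start of the phase and then hand it to the maintenance algorithm of \Cref{thm: matching from subgraph system}. We set $z=\ceil{n^{1/2}}$, or more precisely the integral power of $2$ closest to $n^{1/2}$, so that $z\in[n]$. Let $G\attime[0]$ denote the graph at the start of the phase.

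The steps are as follows.

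\begin{enumerate}
\item Apply \Cref{thm: level-1 constr of subgraph system} to $G\attime[0]$ with this parameter $z$. This yields a $1$-level $z$-subgraph system $\sset$ for $G\attime[0]$ in time $\tilde O(n+m)\leq \tilde O(n^{3/2})$, using the assumption $m\leq n^{3/2}$. Building the adjacency-list representation required as input also costs only $O(n+m)$.
\item Invoke the algorithm of \Cref{thm: matching from subgraph system} with $k=1\leq \log n$, the same $z$, and the system $\sset$. Since the phase has $r\leq n$ updates, this matches the theorem's hypothesis of at most $n$ insertions and deletions. The algorithm maintains a maximal matching throughout the phase in total time
\[O\left(n^{1+o(1)}\left(z+\tfrac{n}{z}\right)\right)=O\left(n^{3/2+o(1)}\right).\]
\item Summing the two costs gives total time $O(n^{3/2+o(1)})$, as required.
\end{enumerate}

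The only point needing care is checking that the hypotheses of the two invoked results line up. A $1$-level system is exactly what \Cref{thm: matching from subgraph system} accepts. That theorem allows a nonempty initial graph and both insertions and deletions. Its update budget is $n$, which is at least $r$. None of this is deep, since the substantive difficulty lives entirely inside \Cref{thm: matching from subgraph system}.

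One degenerate case needs a remark. If $n$ is tiny, $z$ may need rounding, but any $z=\Theta(\sqrt n)$ with $1\leq z\leq n$ gives the same bound. If the update sequence ends early, the bound only improves.
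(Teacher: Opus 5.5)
Your proposal is correct and is essentially the paper's own proof: build a $1$-level $z$-subgraph system with $z=\ceil{\sqrt n}$ via \Cref{thm: level-1 constr of subgraph system} in $\tilde O(n+m)\leq \tilde O(n^{3/2})$ time, then run the algorithm of \Cref{thm: matching from subgraph system} with $k=1$ on the $r\leq n$ updates in $O\left(n^{1+o(1)}\left(z+\frac{n}{z}\right)\right)=O\left(n^{3/2+o(1)}\right)$ time. Rounding $z$ to a power of $2$ is unnecessary here, since neither invoked theorem requires it, but it does no harm.
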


The following theorem summarizes our algorithm for implementing type-2 phases; the proof is deferred to \Cref{subsubsec: type 2 phase}.

\begin{theorem}\label{thm: type 2 phase}
There is a deterministic algorithm, that, given an $n$-vertex and $m$-edge graph $G$ with $m> n^{3/2}$ and $n$ an integral power of $2$, that undergoes an online sequence of $r\leq m$ edge insertions and deletions, maintains a maximal matching in $G$ with total running time $O\left(m\cdot n^{1/2+o(1)}\right )$.
\end{theorem}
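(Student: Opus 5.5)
We prove \Cref{thm: type 2 phase} by a recursive multi-level scheme that combines \Cref{thm: level-1 constr of subgraph system}, \Cref{thm: inductive constr of subgraph system} and \Cref{thm: matching from subgraph system}.

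\emph{Parameters.} Let $z_1$ be the largest power of $2$ with $z_1\le m/n$; since $m>n^{3/2}$, we have $z_1\ge\sqrt n/2$. For $i>1$ set $z_i=z_{i-1}/2$, and stop at the first level $k$ with $z_k\le\sqrt n$. Then $k=O(\log n)$ and $z_k=\Theta(\sqrt n)$.

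\emph{Phase hierarchy.} The update sequence of length $r\le m$ is split into level-$1$ phases of $r_1\approx m$ updates. Each level-$(i-1)$ phase is split into two level-$i$ phases, so $r_i=r_1/2^{i-1}$. Because $z_i/r_i$ stays fixed at $z_1/r_1\approx 1/n$, we get $r_k\approx n z_k/z_1\cdot (m/n)/z_1$, i.e.\ $r_k=\Theta(n)$ up to the constant lost in rounding $z_1$. If needed, $k$ and the split factor are adjusted so that $r_k\le n$, as \Cref{thm: matching from subgraph system} requires.

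\emph{Construction at each level.} At the start of the unique level-$1$ phase, build a $1$-level $z_1$-subgraph system using \Cref{thm: level-1 constr of subgraph system}, in time $\tilde O(m)$. At the start of each level-$i$ phase, invoke \Cref{thm: inductive constr of subgraph system} on the $(i-1)$-level system stored for the enclosing phase. Its inputs are:
\begin{itemize}
\item $E_D$: the deleted edges not yet incorporated, namely those deleted since the $(i-1)$-level system was built together with the set $E'_D$ inherited from above;
\item $E_I$: the inserted edges not yet incorporated.
\end{itemize}
Its output is an $i$-level $z_i$-subgraph system for $G'=(G\cup E_I)\setminus(E_D\setminus E'_D)$, with $|E'_D|\le|E_D|z_i/z_{i-1}$.

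\emph{Restoring the system between phases.} Since the algorithm modifies its input in place, we first copy (or log and undo) the changes made to the $(i-1)$-level representation, so that the sibling level-$i$ phase can reuse it. Undoing costs no more than the modification time.

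\emph{Running time of the constructions.} A call at level $i$ costs $n^{1+o(1)}z_{i-1}+|E_D|+|E_I|$. The edge sets satisfy $|E_D|,|E_I|=O(\sum_{j<i}r_j z_i/z_j)+r_{i-1}=O(r_{i-1}\log n)$ by the halving. With $2^{i-1}$ calls at level $i$ and $z_{i-1}\approx z_1/2^{i-2}$, each level contributes $n^{1+o(1)}z_1+O(r\log n)=m\,n^{o(1)}$. Over $O(\log n)$ levels this totals $m\,n^{o(1)}$.

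\emph{Matching maintenance and total cost.} Within each level-$k$ phase, run \Cref{thm: matching from subgraph system} on $G'$ with the $k$-level $z_k$-system. It first feeds the pending deletions $E'_D$ as updates, bringing $G'$ to the true graph, and then processes the phase's $r_k$ real updates. The total is $O(n)$ updates, which is fine after scaling $r_k$ down by a constant. Each such phase costs $n^{1+o(1)}(z_k+n/z_k)=n^{3/2+o(1)}$, and there are about $r/r_k=O(m/n)$ of them. This gives $m\,n^{1/2+o(1)}$, as claimed.

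\emph{Main obstacle.} The delicate step is bounding the pending deletion set $E'_D$ that reaches the bottom level, so that $|E'_D|=O(n)$ and \Cref{thm: matching from subgraph system} is not overloaded. At each level the inherited set shrinks by the factor $z_i/z_{i-1}=1/2$, while new deletions accrue over a window of length $r_{i-1}$. This gives $|E'_D|\lesssim\sum_j r_j/2^{k-j}$. Since $r_j=r_1/2^{j-1}$, each term is about $r_k$, so the sum is $O(k\,r_k)=O(n\log n)$ rather than $O(n)$.

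To remove the $\log n$ factor, we choose $r_k=n/\Theta(\log n)$. This multiplies the number of bottom phases by $\log n$, which is absorbed into $n^{o(1)}$.

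A second point to verify is how insertions are handled. A freshly inserted edge must count toward $E_I$ only once per level and must never be double-counted. This holds because at each level both the construction algorithm and the maintenance algorithm accept insertions as input.
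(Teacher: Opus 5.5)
There is a genuine gap in the parameter bookkeeping, and it breaks the scheme as stated. You use one level-$1$ phase of length $r_1\approx m$, take $z_1\approx m/n$, and halve both $r_i$ and $z_i$ at each level. Your own invariant $z_i/r_i\approx 1/n$ then gives $r_k\approx n z_k\approx n^{3/2}$, not $\Theta(n)$; the expression you wrote for $r_k$ does not follow from that invariant. The hierarchy also has only $2^{k-1}=z_1/z_k\approx m/n^{3/2}$ bottom phases, not $m/n$.

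Each bottom phase would therefore have to handle $\Theta(n^{3/2})$ real updates plus a deferred set of size $\Theta(k\,r_k)$. This is far beyond the $n$-update limit of \Cref{thm: matching from subgraph system}. Re-choosing $k$, the split factor, or $r_k$ cannot fix this while you keep a single level-$1$ phase. Up to $r_2\approx m/2$ deletions occur during the top-level window, and an adaptive adversary can aim them at $M$. \Cref{thm: inductive constr of subgraph system} only guarantees that they shrink by the cumulative factor $z_k/z_1$. So about $(m/2)\cdot z_k/z_1\approx n^{3/2}/2$ of them can still be pending at the bottom, however the lower levels are split. A symptom of this is that your total construction cost came out as $m\,n^{o(1)}$, leaving the $\sqrt n$ factor of the budget unused there.

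The missing idea is that the ratio of phase length to $z$ must already be about $\sqrt n$ at level $1$; more precisely, it must be at most about $n/(k z_k)$. With $z_1\approx m/n$, this forces rebuilding the level-$1$ system from scratch roughly $\sqrt n$ times, and that is where the $m\sqrt n$ budget is spent. The paper takes $\eta\approx\sqrt n$ and makes every level-$1$ phase span $z_1\eta$ updates. There are then about $\eta$ such phases, costing $\tilde O(m\sqrt n)$ in total by \Cref{thm: level-1 constr of subgraph system}. Level-$i$ phases span $z_i\eta$ updates, and with this choice:
\begin{itemize}
\item the deferred set entering level $i$ has size at most $(i-1)z_i\eta$;
\item the level-$i$ rebuilds cost $n^{1+o(1)}z_{i-1}\cdot 2^{i-1}\eta=O(n^{1+o(1)}z_1\eta)$;
\item taking $z_k\approx\sqrt n/(4\log n)$ gives $|E^k_j|+z_k\eta\le k z_k\eta\le n$.
\end{itemize}
The $z_1\eta/z_k$ bottom phases then cost $n^{3/2+o(1)}$ each, so the total is $n^{1+o(1)}z_1\eta=m\,n^{1/2+o(1)}$. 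The rest of your outline matches the paper and goes through once the level-$1$ rebuild frequency is fixed: inheriting $E'_D$ into the next $E_D$, undoing in-place modifications for the sibling phase, and feeding $E'_D$ to the maintenance algorithm first.
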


We are now ready to complete the proof of \Cref{thm: main}. 
We start by employing the algorithm from Lemma \ref{lem: small number of updates} with parameter $t=\ceil{\sqrt{n}}$ over the course of Phase $\Phi_0$. Recall that the running time of the algorithm from Lemma \ref{lem: small number of updates} is bounded by:

\[
\tilde O\left(r_0\cdot \left(t+\frac{r_0}{t}\right)\right)\leq \tilde O\left(r_0\cdot \left(n^{1/2}+\frac{n}{n^{1/2}}\right)\right) \leq \tilde O\left(r_0\cdot n^{1/2}\right).
\]

If $\Phi_0$ is the only phase (that is, the number of adversarial updates is $r<n$), then the amortized update time of the algorithm is bounded by $\tilde O(n^{1/2})$, as required. Assume now that $r>n$. Let $q$ denote the index of the last phase.
For all $1\leq i\leq q$, let $\tau_i$ be the time when Phase $\Phi_i$ starts, and recall that $m_i$ is the number of edges in graph $G$ at time $\tau_i$. Recall also that $r_i$ is the number of updates that $G$ undergoes during Phase $\Phi_i$. If $m_i\leq n^{3/2}$ (that is, $\Phi_i$ is a type-1 phase), then we use the algorithm from \Cref{thm: type 1 phase} with the initial graph $G\attime[\tau_i]$, and the updates that $G$ undergoes during Phase $\Phi_i$,  to maintain a maximal matching in $G$ during the phase $\Phi_i$. Recall that, if $i<q$, then $r_i=n$, and the running time of the algorithm from \Cref{thm: type 1 phase} is $ O\left(n^{3/2+o(1)}\right )\leq  O\left(r_i\cdot n^{1/2+o(1)}\right ) $. If $m_i>n^{3/2}$ (so $\Phi_i$ is a type-2 phase), 
then we use the algorithm from \Cref{thm: type 2 phase} with the initial graph $G\attime[\tau_i]$, and the updates that $G$ undergoes during Phase $\Phi_i$, to maintain a maximal matching in $G$ during the phase $\Phi_i$. Recall that, if $i<q$, then $r_i=m_i$, and the running time of the algorithm from \Cref{thm: type 2 phase} is $ O\left(m_i\cdot n^{1/2+o(1)}\right )\leq  O\left(r_i\cdot n^{1/2+o(1)}\right ) $. If the last phase is of type 1, then its running time is $ O\left(n^{3/2+o(1)}\right )\leq  O\left(r\cdot n^{1/2+o(1)}\right ) $, since we assumed that $r\geq n$. Otherwise, the running time of the last phase is $O\left(m_q\cdot n^{1/2+o(1)}\right )\leq O\left(r\cdot n^{1/2+o(1)}\right ) $, since $r\geq m_q$ must hold, as the initial graph $G$ contained no edges. Overall, the total running time of the algorithm is bounded by:

\[ O\left(r\cdot n^{1/2+o(1)}\right )+\sum_{i=0}^{q-1}O\left(r_i\cdot n^{1/2+o(1)}\right )\leq  O\left(r\cdot n^{1/2+o(1)}\right ).\]

In order to complete the proof of \Cref{thm: main}, it is now enough to prove Lemma \ref{lem: small number of updates} and Theorems \ref{thm: type 1 phase} and \ref{thm: type 2 phase}, which we do in the remainder of the section.

\subsubsection{Implementing Phase $\Phi_0$: Proof of Lemma \ref{lem: small number of updates}}
\label{subsubsec: type 0 phase}
Recall that we are given as input an $n$-vertex graph $G$ that initially contains no edges, and then undergoes an online sequence of $r$ edge insertions and deletions. Throughout the algorithm, we say that a vertex $v\in V(G)$ is \emph{bad} if the number of edges \emph{inserted} by the adversary over the course of the algorithm that are incident to $v$ is at least $t$,
and we say that $v$ is \emph{good} otherwise. Note that the same edge may
be inserted and deleted numerous times, but once a vertex $v$ becomes bad it remains so until the end of the algorithm. Since the total number of updates is bounded by $r$, the total number of bad vertices is bounded by $\frac{2r}{t}$ at all times.

Our algorithm maintains a maximal matching $M^*$ in $G$, where initially $M^*=\emptyset$. Additionally, our algorithm maintains a set $B\subseteq V(G)$ of all bad vertices, and, for every bad vertex $b\in B$, a list  $F(b)$ of all vertices in $N_G(b)$  that are currently not matched by $M^*$, as a binary search tree data structure. At the beginning of the algorithm, $B=\emptyset$. We now provide algorithms for processing edge insertions and deletions. The algorithms consist of two parts. In the first part, we update the matching $M^*$; this part is different for the two update types. Then in the second part, we update the data structure $\set{F(b)\mid b\in B}$; this part is similar for the two update types.

\subsubsection*{Part 1 of Update Procedure: Processing the Update}

\paragraph{Processing an edge insertion.} Let $e=(u,v)$ be an edge that is inserted into $G$ by the adversary. We check whether $u$ and $v$ are both currently not matched in $M^*$; if this is the case, then we insert $e$ into $M^*$, and otherwise $M^*$ remains unchanged. 
Next, we process the vertex $u$ as follows. If $v\in B$ and $u$ is currently unmatched, then we insert $u$ into $F(v)$. Additionally, if 
 $u\not\in B$, but the number of edges inserted into $G$ by the adversary reaches $t+1$, then we add $u$ to $B$, and  initialize the list $F(u)$: for every vertex $y\in N_G(u)$, if $y$ is unmatched by $M^*$, we insert $y$ into $F(u)$. Since $\deg_G(u)\leq t+1$ at this time, this step takes $\tilde O(t)$ time. 
 We process $v$ similarly.
 Overall, the total time for processing an edge insertion is bounded by $\tilde O\left (t\right )$.

\paragraph{Processing an edge deletion.} Let $e=(u,v)$ be an edge deleted by the adversary from $G$. 
If $u\in B$ and $v\in F(u)$, then we delete $v$ from $F(u)$. Similarly, if $v\in B$ and $u\in F(v)$, then we delete $u$ from $F(v)$.
If $e\in M^*$, then we delete $e$ from $M^*$, and then try to rematch $u$ and $v$ by processing every vertex $x\in \{u,v\}$, as follows. 

Assume first that $x$ is a good vertex, so $\deg_G(x)\leq t$. Then we scan the set $N_G(x)$ of vertices in time $O(t)$, until we identify a vertex $y\in N_G(x)$ that is currently not matched (if such a vertex exists). If we find such a vertex $y$, then we insert the edge $(x,y)$ into $M^*$. Assume now that $x$ is a bad vertex. If $F(x)\neq \emptyset$, then we select the first vertex $y\in F(x)$, and insert the edge $(x,y)$ into $M^*$, in time $O(1)$. 

This finishes the description of Part 1 of the update procedure. It is easy to verify that its running time is $\tilde O(t)$, and that, if $M^*$ was a maximal matching before the update, it remains so after Part 1 of the udpate procedure terminates.

\subsubsection*{Part 2 of Update Procedure: Updating the Data Structure $\set{F(b)\mid b\in B}$}

Observe that there are at most $O(1)$ vertices $x$, whose status may have changed from matched to unmatched or the other way around in $M^*$. We process each such vertex $x$ as follows. 
Assume first that the status of $x$ changed from matched to unmatched.
Then we inspect every vertex $y\in B$, and, if $y\in N_G(x)$, we insert $x$ into $F(y)$. Since $|B|\leq O(r/t)$, this can be done in time $\tilde O(r/t)$.
Assume now that the status of $x$ changed from unmatched to matched. Then we inspect every vertex $y\in B$, and, if $y\in N_G(x)$, we delete $x$ from $F(y)$. As before, this can be done in time $\tilde O(r/t)$.

Overall, from the above discussion, processing each update by the adversary requires $\tilde O\left (t+\frac r t\right )$ time. Since the total number of updates is bounded by $r$, the total running time of the algorithm is bounded by $\tilde O\left(r\cdot \left(t+\frac{r}{t}\right)\right )$.

\subsubsection{Implementing a Type-1 Phase: Proof of \Cref{thm: type 1 phase}}
\label{subsubsec: type 1 phase}

We assume that we are given as input an $n$-vertex and $m$-edge graph $G$ with $m\leq n^{3/2}$, that undergoes an online sequence of $r\leq n$ edge insertions and deletions. We provide an algorithm for maintaining a maximal matching in $G$ with total running time $O\left(n^{3/2+o(1)}\right )$.

We use the parameter $z=\ceil{\sqrt{n}}$. We invoke the algorithm from \Cref{thm: level-1 constr of subgraph system} to construct, in time $ \tilde O(n+m)\leq \tilde O(n^{3/2})$, a 1-level $z$-subgraph system $\sset$ for $G$. Then we apply the algorithm from \Cref{thm: matching from subgraph system} to the graph $G$, and the 1-level $z$-subgraph system $\sset$ for $G$ that we have computed. As the graph $G$ undergoes an online sequence of $r\leq n$ updates, the algorithm from \Cref{thm: matching from subgraph system} is guaranteed to maintain a maximal matching in $G$, and its total running time is $ O\left(n^{1+o(1)}\cdot \left(z+\frac{n}{z}\right)\right)\le O\left(n^{3/2+o(1)}\right)$. Overall, the total running time of the algorithm is $ O\left(n^{3/2+o(1)}\right )$, as required.

\subsubsection{Implementing a Type-2 Phase: Proof of \Cref{thm: type 2 phase}}
\label{subsubsec: type 2 phase}
We start with a high-level intuitive overview of the proof.
Let $d=\frac{2m}{n}$ denote the average vertex degree in $G$; since $m>n^{3/2}$, $d\ge 2\sqrt n$ must hold.
For this dense regime, the algorithm similar to that used for type-1 phases is no longer efficient: constructing a $1$-level subgraph system from scratch in every phase requires $\Omega(m)$ time, that needs to amortized over a long enough sequence of updates. That would force us to set the length of every phase to contain $r\gg n$ updates, and to set the parameter $z$ to a value that is sufficiently high, leading to a high amortized update time in the algorithm for maintaining a maximal matching (from \Cref{thm: matching from subgraph system}). Instead, we use a multi-level hierarchical construction, that spreads the rebuild work across a hierarchy of increasingly finer
phases. 

We first fix a base scale $d\leq z_1< 2d$ that is a power of $2$, and define a sequence of parameters
$z_i=\frac{z_1}{2^{i-1}}$ for $i>1$. We also choose a coarse time parameter $\eta\approx \sqrt n$, that is a power of $2$.
The update sequence is partitioned into a hierarchy of nested phases so that a level-$i$ phase spans $z_i\cdot\eta$ adversarial updates, and is partitioned into two level-$(i+1)$ phases of equal length.
In particular, a level-$1$ phase spans $z_1\cdot \eta$ updates by the adversary. Since $z_1=\Theta(d)$, the number of level-$1$ phases is bounded by $\frac{m}{z_1\eta}\leq O\left(\frac{n}{\eta}\right )=O(\eta)$. For all $i>1$, the number of level-$i$ phases is $\Theta(2^i\cdot \eta)$. The last level of the hierarchy $k$ is chosen so that $z_k=\tilde \Theta(\eta)$, and so the number of level-$k$ phases is $\tilde O\left(\frac{m}{z_k\cdot \eta}\right) \leq \tilde O\left (\frac{m}{n}\right)\leq \tilde O(d)$.

Consider now some level $i> 1$.
At the beginning of every level-$i$ phase $\Phi^i_j$, the level-$i$ data structure must produce:
(i) an \emph{auxiliary graph} $G^i_j$ together with a small set $E^i_j$ of \emph{deferred deletions}, such that, if we denote by $\tau^i_j$ the start time of phase $\Phi^i_j$, then
$G^i_j\setminus E^i_j=G\attime[\tau^i_j]$; and
(ii) an $i$-level $z_i$-subgraph system $\sset^i_j$ for the graph $G^i_j$. 
Let $\Phi^{i-1}_{j'}$ be the level-$(i-1)$ phase that contains $\Phi^i_j$.
In order to produce the graph $G^i_j$ and the subgraph system $\sset^i_j$, we  apply the algorithm from \Cref{thm: inductive constr of subgraph system} to the graph $G^{i-1}_{j'}$ and the $(i-1)$-level subgraph system $\sset^{i-1}_{j'}$ that the level-$(i-1)$ data structure computed at the beginning of Phase $\Phi^{i-1}_{j'}$.
Intuitively, edge set $E^i_j$ contains some of the edges that the adversary deleted from the graph $G$, but that were not yet incorporated into the subgraph system $\sset^i_j$. Deferring such edge deletions for later allows us to avoid recomputing the subgraph system after every edge deletion.

Our analysis relies on maintaining two invariants. First, we ensure that, for every level $i\geq 1$, at the beginning of every level-$i$ phase $\Phi^i_j$,  the deferred set $E^i_j$ of edge deletions satisfies $|E^i_j|\le (i -1)\cdot z_i\cdot \eta$. Second, since $z_i=z_1/2^{\,i-1}$ and there are $O\left (2^{i}\cdot \eta\right )$ level-$i$
phases, the total rebuild cost across all phases at level $i$ is
$ O(n^{1+o(1)}\cdot z_{i-1}\cdot 2^{i}\cdot \eta)\leq O(z_1\cdot n^{1+o(1)}\cdot \eta)\leq O(n^{3/2+o(1)}\cdot z_1)$; summing over all levels $i$ adds only logarithmic factors.
Since $z_1=\Theta(d)$, the total time spent on rebuild computations across all levels is bounded by $O(n^{3/2+o(1)}\cdot d)\leq O(m\cdot n^{1/2+o(1)})$.

At the last level $k$ (where $z_k=\tilde\Theta(\sqrt n)$), each level-$k$ phase $\Phi^k_j$ has length $z_k\cdot \eta$. 
We employ the algorithm from \Cref{thm: matching from subgraph system} in order to maintain a maximal matching in $G$ over the course of the phase. We provide the algorithm from \Cref{thm: matching from subgraph system} with the graph $G^k_j$ and the $k$-level subgraph system $\sset^k_j$ that the level-$k$ data structure computed for it, as input. We also supply the algorithm with a sequence of updates, that starts with the set $E_j^k$ of at most $(k - 1)\cdot z_k\cdot \eta$ deferred deletions, and then follows the update sequence of the adversary for Phase $\Phi^k_j$. Thus the resulting update sequence on the
auxiliary graph $G^k_j$ has length $|E^k_j|+z_k\cdot \eta\le k\cdot z_k\cdot \eta\le n$. The running time of the algorithm from
\Cref{thm: matching from subgraph system} for this update sequence is bounded by
$ O\left(n^{1+o(1)}\cdot \left(z_k+\frac{n}{z_k}\right)\right)\leq O(n^{3/2+o(1)})$ per phase. Since the total number of level-$k$ phases is bounded by $\tilde O\left(d\right )$ as observed above, the total time required to maintain the maximal matching across all level-$k$ phases is bounded by $ O(m\cdot n^{1/2+o(1)})$.
Overall, combining this with the  $ O(m\cdot n^{1/2+o(1)})$ time bound for the rebuild computations across all levels yiels the claimed total
$ O\left (m\cdot n^{1/2+o(1)}\right )$ running time bound for the entire algorithm. 
This concludes the high level overview of the proof. We now provide a formal proof.

We assume that we are given as input an $n$-vertex and $m$-edge graph $G$ with $m>n^{3/2}$.
We denote by $d=\frac{2m}{n}$ the average vertex degree in $G$, so $d\geq 2\sqrt n$. Recall that $G$ undergoes an online sequence of at most $ m$ edge insertions and deletions. Our goal is to design a deterministic algorithm that maintains a maximal matching in $G$ and has total running time $O\left( m\cdot n^{1/2+o(1)} \right)$.

\paragraph{Parameter setting.}

We let $z_1$ be the unique integral power of $2$ with $d\leq z_1<2d$, so in particular $z_1\geq 2\sqrt{n}$, and,  since $n$ is an integral power of $2$, $z_1\leq n$. For $i\geq 2$, we set $z_i=\frac{z_1}{2^{i-1}}$, and we let $k$ to be the largest integer for which $z_k\geq \frac{\sqrt{n}}{4\log n}$. Clearly, $k\leq \log n$ and $z_k\leq \frac{\sqrt{n}}{2\log n}$.
We also let $\eta$ be the unique integral power of $2$ with $\sqrt{n}\leq \eta<2\sqrt{n}$.

Recall that $G$ undergoes an online sequence of $r\leq m$ updates. 
Since $z_1 \cdot \eta^2 \ge d\cdot(\sqrt{n})^2 \geq \frac{nd}{2} = m$, we get that $r \le z_1 \cdot \eta^2$.
It will be convenient for us to assume that the number of updates is equal to $r=z_1\cdot \eta^2$, so, if needed, we can just add some dummy updates at the end of the update sequence, that repeatedly insert and delete the same edge. Note that the number of edges in $G$ remains $O(m)$ throughout the algorithm.

\paragraph{Hierarchical partition of the timeline into phases.}
We now define a hierarchical partition of the timeline into phases. There are $q_1=\eta$ level-1 phases, $\Phi_1^1,\ldots,\Phi_{q_1}^{1}$, each of which spans exactly $z_1\cdot \eta$ updates by the adversary.

For $2\le i\le k$, we partition the timeline into $q_i=2^{i-1}\eta$ level-$i$ phases,
each of which spans exactly $\frac{z_1\cdot \eta^2}{2^{i-1}\cdot \eta}=z_i\cdot \eta$ updates.
Note that for all $1\leq i<k$, every level-$i$ phase is partitioned into exactly $2$ level-$(i+1)$ phases.
 
For all $1\leq i\leq k$ and $1\leq j\leq q_i$, we denote by $\tau^i_j$ the start time of Phase $\Phi^i_j$.

For all $1\leq i \leq k$, we now define a level-$i$ data structure, whose purpose is the following:

\begin{properties}{R}
\item\label{requirement from level i}	For all $1\leq j\leq q_i$, at time $\tau^i_j$, the level-$i$ data structure must produce an $n$-vertex graph $G^i_j$ in the adjacency-list representation, together with a set $E^i_j\subseteq E(G^i_j)$ of at most $(i-1)\cdot z_i\cdot \eta$ edges, such that $G^i_j\setminus E^i_j= G\attime[\tau^i_j]$. Additionally, it must produce, at time $\tau^i_j$, an $i$-level $z_i$-subgraph system $\sset^i_j$ for graph $G^i_j$.
\end{properties}

Intuitively, set $E_j^i$ contains some of the edges that the adversary deleted from $G$ over the course of the algorithm, but that we temporarily keep in the graph $G^i_j$ so as not to destroy the $(i-1)$-level $z_{i-1}$-subgraph system that was constructed by the level $i-1$ data structure. 

\paragraph{Level-$1$ Data Structure.}
Recall that there are $q_1=\eta$ level-$1$ phases, each of which spans exactly $z_1\cdot \eta$ updates by the adversary. For all $1\leq j\leq q_1$, at the beginning of the $j$th level-$1$ phase $\Phi^1_j$, we apply the algorithm from \Cref{thm: level-1 constr of subgraph system} to the current graph $G$ with parameter $z=z_1$, to obtain a 1-level $z_1$-subgraph system $\sset^1_j$ for the current graph $G=G\attime[\tau^1_j]$, in time $ \tilde O(n+m)$. We then set $G^1_j=G$ and $E^1_j=\emptyset$. Since the number of level-$1$ phases is $q_1=\eta = O(n^{1/2})$, the total running time of the level-$1$ data structure is bounded by $\tilde O\left(m\cdot n^{1/2}\right )$.

\paragraph{Level-$i$ Data Structure for $i\geq 2$.}
Consider now an integer $2\leq i\leq k$, and some level-$i$ phase $\Phi^i_j$. Let $\Phi^{i-1}_{j'}$ be the level-$(i-1)$ phase that contains $\Phi^i_j$. Recall that, at the beginning of Phase $\Phi^{i-1}_{j'}$, the level-$(i-1)$ data structure produced a graph $G^{i-1}_{j'}$, and a subset $E^{i-1}_{j'}\subseteq E(G^{i-1}_{j'})$ of at most $(i-2)\cdot z_{i-1}\cdot \eta$ edges, that we denote by $E_D$, such that $G^{i-1}_{j'}\setminus E_D=G\attime[\tau^{i-1}_{j'}]$. Additionally, it produced an $(i-1)$-level $z_{i-1}$-subgraph system $\sset^{i-1}_{j'}$ for graph $G^{i-1}_{j'}$. 
Since a level-$(i-1)$ phase spans $z_{i-1}\cdot \eta$ updates by the adversary, the total number of edges that were deleted from $G$ since Phase $\Phi^{i-1}_{j'}$ started, and until time $\tau^i_j$, is bounded by $z_{i-1}\cdot \eta$. 
We insert all these edges into set $E_D$ (but we exclude the edges that did not lie in $G\attime[\tau^{i-1}_j]$, so they were first inserted and then deleted by the adversary); note that $|E_D|\leq (i-1)\cdot z_{i-1}\cdot \eta$ now holds. 
 We let $E_I$ be the set of all edges that have been inserted into $G$ since the beginning of Phase $\Phi^{i-1}_{j'}$, and until time $\tau^i_j$ by the adversary and that remain in $G\attime[\tau^i_j]$; notice that $|E_I|\leq z_{i-1}\cdot \eta$ must hold, and that $(G^{i-1}_{j'}\setminus E_D)\cup E_I$ is now identical to the graph $G$ at time $\tau^{i}_j$.
We apply the algorithm from \Cref{thm: inductive constr of subgraph system} to graph $G^{i-1}_{j'}$, the sets $E_D$ and $E_I$ of edges, the $(i-1)$-level $z_{i-1}$-subgraph system $\sset^{i-1}_{j'}$ for $G^{i-1}_{j'}$, parameters $h=i-1,z=z_{i-1}$ and $z'=z_i$. Recall that both $z_{i-1}$ and $z_i$ are integral powers of $2$, and that $z_{i-1}= 2z_i$. Therefore, we obtain a valid input to \Cref{thm: inductive constr of subgraph system}. Recall that the algorithm from \Cref{thm: inductive constr of subgraph system} returns a subset $E_D'\subseteq E_D$ of edges, with
\[
|E_D'|\leq \frac{|E_D|\cdot z_i}{z_{i-1}}\leq \frac{(i-1)\cdot z_{i-1}\cdot \eta\cdot z_i}{z_{i-1}} = (i-1)\cdot z_i \cdot \eta.
\]
It also returns an $i$-level $z_i$-subgraph system $\sset'$ for the graph $G'=(G^{i-1}_{j'}\cup E_I)\setminus (E_D\setminus E_D')$. 
Recall that $(G^{i-1}_{j'}\cup E_I)\setminus E_D=G\attime[\tau^i_j]$. Therefore, $G'\setminus E_D'=G\attime[\tau^i_j]$. We set $G^i_j=G'$,  $E^i_j=E_D'$, and $\sset^i_j=\sset'$. Note that $(G^i_j,E^i_j,\sset^i_j)$ satisfy Requirement
\ref{requirement from level i}.

The running time of the algorithm from \Cref{thm: inductive constr of subgraph system} is $ O(n^{1+o(1)}\cdot z_{i-1}+|E_D|+|E_I|)\leq  O(n^{1+o(1)}\cdot z_{i-1})$, since $|E_D|\leq (i-1)\cdot z_{i-1}\cdot \eta\leq n\cdot z_{i-1}$ and $|E_I|\leq z_{i-1}\cdot \eta\leq n\cdot z_{i-1}$. Recall that the algorithm may modify the adjacency-list representation of the graph $G^{i-1}_{j'}$ and the $(i-1)$-level subgraph system $\sset^{i-1}_{j'}$ (which may, in turn, then be modified by algorithms from next levels). Just before we apply the algorithm from \Cref{thm: inductive constr of subgraph system}  at the beginning of the next level-$i$ phase, we undo all these changes, to restore the adjacency-list representation of graph $G^{i-1}_{j'}$, and the original $(i-1)$-level subgraph system $\sset^{i-1}_{j'}$.

Since the number of level-$i$ phases is $q_i=O(2^{i-1}\cdot \eta)$, we get that the total running time of the data structure from level $i$ is $ O\left(n^{1+o(1)}\cdot z_{i-1}\cdot 2^{i-1}\cdot \eta\right)\leq  O\left(z_1\cdot n^{1+o(1)}\cdot\eta\right )$ since $z_{i-1}=\frac{z_1}{2^{i-2}}$.

The total running time of our algorithms for all levels $1,\ldots,k$ is then bounded by $O\left(k\cdot z_1\cdot n^{1+o(1)}\cdot\eta\right ) \leq  O(m\cdot n^{1/2+o(1)})$, since $k=O(\log n),z_1=O(d)=O(m/n)$, and $\eta=O(n^{1/2})$.

Lastly, recall that every level-$k$ phase $\Phi^k_j$ spans $z_k\cdot \eta$ updates by the adversary. At the beginning of Phase $\Phi^k_j$, the level-$k$ data structure computes an $n$-vertex graph $G^k_j$, together with a set $E^k_j\subseteq E(G^k_j)$ of at most $(k-1)\cdot z_k\cdot \eta$ edges, such that $G^k_j\setminus E^k_j= G\attime[\tau^k_j]$. It also computes a $k$-level $z_k$-subgraph system $\sset^k_j$ for graph $G^k_j$. We set $z=z_k$, and apply the algorithm from \Cref{thm: matching from subgraph system} to graph $G^k_j$ and the $k$-level $z_k$-subgraph system $\sset^k_j$ for $G^k_j$. The adversarial update sequence to graph $G^k_j$ starts by first deleting the edges of $E^k_j$ from it, and then follows the $z_k\cdot \eta$ updates that graph $G$ undergoes during Phase $\Phi^k_j$. Therefore, the total length of the update sequence to $G^k_j$ is bounded by $|E^k_j|+z_k\cdot \eta\leq k\cdot z_k\cdot \eta \leq n$, since $k\leq \log n,z_k\leq \frac{\sqrt{n}}{2\log n}$ and $\eta\leq 2\sqrt{n}$, and the algorithm from \Cref{thm: matching from subgraph system} can process all these updates.

Note that, from the description of the update sequence, at all times $\tau$ during Phase $\Phi^k_j$, the dynamic graph that the algorithm from \Cref{thm: matching from subgraph system} maintains (obtained from $G^k_j$ by deleting the edges of $E^k_j$ from it and then applying the first $\tau$ updates that $G$ undergoes during Phase $\Phi^k_j$) is exactly $G\attime[\tau^k_j+\tau]$. The algorithm from \Cref{thm: matching from subgraph system} maintains a maximal matching $M'$ in the resulting graph, which must also be a maximal matching in $G$.

The running time of the algorithm from \Cref{thm: matching from subgraph system} is $ O\left(n^{1+o(1)}\cdot \left(z_k+\frac{n}{z_k}\right)\right)$.
Since the number of level-$k$ phases is $q_k=2^{k-1}\cdot \eta$, we get that the total running time of the algorithm from \Cref{thm: matching from subgraph system}  over all level-$k$ phases is:

\[\begin{split}
 O\left(n^{1+o(1)}\cdot \left(z_k+\frac{n}{z_k}\right)\cdot 2^{k-1}\cdot \eta\right)&\leq  O\left(\left(n^{1+o(1)}+\frac{n^{2+o(1)}}{z_k^2}\right)\cdot z_1\cdot \eta\right ) \\& 
 \leq O\left(n^{1+o(1)}\cdot z_1\cdot \eta\right )\\&\leq
  O\left (m\cdot n^{1/2+o(1)}\right )
  \end{split}
\]

(in the first inequality, we used the fact that $z_k=\frac{z_1}{2^{k-1}}$; in the second inequality we used the fact that $z_k\geq\frac{\sqrt{n}}{4\log n}$; and in the last inequality we used the facts that $z_1=O(d)=O(m/n)$ and $\eta=O(n^{1/2})$. We conclude that the total running time of the entire algorithm is $ O\left(m^{1+o(1)}\cdot n^{1/2}\right )$, as required.

\section{From Multi-Level Subgraph System to Fully Dynamic Maximal Matching: Proof of \Cref{thm: matching from subgraph system}}
\label{sec: subgraph-system-to-matching}

In this section we  prove \Cref{thm: matching from subgraph system}. 
The proof of the theorem is very similar to the proof of \Cref{thm: matching from basic subgraph system} and follows the same high-level structure. The main difference, in addition to a slight change in the parameters, is a more involved algorithm for rematching vertices of $A$, and the extension of the data structures so that they can handle edge insertions in addition to edge deletions.

We assume that we are given as input an initial $n$-vertex graph $G$, together with
 integral parameters $0<k\leq \log n$ and $1\leq z\leq n$ and a $k$-level $z$-subgraph system  $\sset=\left(S,B,U,M,\set{A_i,N_i,R_i}_{i=1}^k,\Lambda,L\right )$ for $G$. Our goal is to maintain a maximal matching in $G$, as it undergoes an online sequence of $r\leq n$ edge deletions and insertions.

Throughout the algorithm, the vertex sets $S,B,U$ and $\set{A_i,N_i,R_i}_{i=1}^k$ remain unchanged. The edge set $M$ is decremental: whenever the adversary deletes an edge $e\in M$ from $G$, this edge is deleted from $M$ as well, but no new edges may be inserted into $M$.
The lists in $\Lambda$ and $L$ may be updated by our algorithm, as described below.

\subsection{Data Structures and Invariants}

We maintain a graph $G'$, that can be thought of as the decremental version of $G$. In other words, at the beginning of the algorithm, $G'=G$, and, whenever an edge is deleted from $G$, we delete it from $G'$ as well, but we ignore edge insertions into $G$. We use the input adjacency-list representation of the initial graph $G\attime[0]$ in order to maintain the adjacency-list representation of $G'$, initializing $G'=G$ at the beginning of the algorithm.

We also maintain the set $E_I\subseteq E(G)$ of all edges inserted by the adversary into $G$, that currently lie in $G$. Whenever the adversary inserts an edge $e$ into $G$, we add it to set $E_I$. When the adversary deletes an edge $e$ from $G$, we first check if $e\in E_I$, and, if so, we remove it from $E_I$. Otherwise, $e\in E(G')$ must hold, and we delete $e$ from $G'$.
We note that we do not maintain the graph $G$ explicitly, and instead we only maintain $G'$ (using the input adjacency-list representation of $G\attime[0]$), and the edge set $E_I$; clearly, at all times, $G=G'\cup E_I$ holds.

Recall that, as part of the subgraph system $\sset$, we are given, for every vertex $u\in U$, a vertex list $\Lambda(u)=N_G(u)\cap (B\cup U)=N_{G'}(u)\cap (B\cup U)$, whose length is $O(z)$. Over the course of the algorithm, we will update the list $\Lambda(u)$ to ensure that $\Lambda(u)=N_{G'}(u)\cap (B\cup U)$ always holds. Specifically, whenever the adversary deletes an edge $(u,v)$ with $v\in \Lambda(u)$ from $G$, we delete $v$ from $\Lambda(u)$.
Similarly, as part of the subgraph system, we are given, for all $1\leq i\leq k$, for every vertex $v\in A_i$, the list $L(v)=N_G(v)\cap R_i=N_{G'}(v)\cap R_i$. 
Over the course of the algorithm, we will update the list $L(v)$ to ensure that $L(v)=N_{G'}(v)\cap R_i$ always holds. Specifically, whenever the adversary deletes an edge $(u,v)$ from $G$ with $u\in L(v)$, we delete $u$ from $L(v)$.

Let $\hat G$ be the subgraph of $G$ induced by the edges of $M$. Note that the maximum vertex degree in $\hat G$ is bounded by $z$. We start by computing an edge-coloring of $\hat G$ with $z+1$ colors using the algorithm from \Cref{thm: edge coloring}, in time $O\left (|E(\hat G)|\cdot n^{o(1)}\right )= O\left (n^{1+o(1)}\cdot z\right )$.
This coloring naturally defines a partition $(M_1,\ldots,M_{z+1})$ of $M$, where for all $1\leq i\leq z+1$, $M_i$ is a matching (consisting of all edges of $\hat G$ that were assigned the color $i$). For all $1\leq i\leq z+1$, we compute the number $m_i$ of vertices of $S$ that are unmatched by $M_i$.  Over the course of the algorithm, when an edge of $M_i$ is deleted by the adversary, we delete it from $M_i$ and update $m_i$ accordingly. We use the following simple claim.

\begin{claim}\label{claim:good-color}
At the beginning of the algorithm, there is an index $i^\star\in[z{+}1]$, such that the number of vertices of $S$ that are not matched by $M_{i^\star}$ is bounded by $\frac{|S|\cdot (\log n+1)}{z+1}\le \frac{2n\log n}{z}$.
\end{claim}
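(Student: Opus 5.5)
This is a double-counting argument followed by averaging. For each vertex $v\in S$, let $c(v)$ be the number of indices $i\in[z+1]$ such that $v$ is not matched by $M_i$. Since the matchings $M_1,\ldots,M_{z+1}$ partition $M$ and each $M_i$ contributes at most one edge incident to $v$, $v$ is matched in exactly $m(v)$ of them. Hence $c(v)=z+1-m(v)$. By Property \ref{prop: edges inc to S} of the $k$-level $z$-subgraph system, $m(v)\geq z-k+1$, so $c(v)\leq k$. Since the theorem assumes $k\leq\log n$, we get $c(v)\leq \log n$ (in fact at most $\log n+1$, which is all we need).

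Summing over colors, we obtain
\[
\sum_{i=1}^{z+1} m_i=\sum_{v\in S} c(v)\leq |S|\cdot(\log n+1).
\]
There are $z+1$ terms on the left, so at least one index $i^\star$ satisfies $m_{i^\star}\leq \frac{|S|(\log n+1)}{z+1}$; choose such an index. For the second inequality, use $|S|\leq n$, $\frac{1}{z+1}\leq\frac{1}{z}$, and $\log n+1\leq 2\log n$, which holds for $n\geq 2$. The degenerate case $n=1$ is trivial, since then $S$ has no edges of $M$ to speak of.

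The only step that needs care is the identity $c(v)=z+1-m(v)$. It uses that the coloring produced by \Cref{thm: edge coloring} is proper on $\hat G$ and uses at most $z+1$ colors, so the edges of $M$ at $v$ receive pairwise distinct colors. The remaining steps are routine arithmetic.
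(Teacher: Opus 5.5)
Your proposal is correct and follows essentially the same route as the paper: Property \ref{prop: edges inc to S} bounds the number of matchings missing each $v\in S$ (you get the slightly sharper bound $k$ where the paper uses $\log n+1$), then you double count and average over the $z+1$ matchings. One small quibble is that your stated reason for dismissing $n=1$ is not right, since an unmatched vertex of $S$ would then violate the bound $0$; that case simply cannot occur, because the hypothesis $0<k\le\log n$ forces $n\ge 2$.
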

\begin{proof}
	Recall that, from Property \ref{prop: edges inc to S} of the $k$-level subgraph system,  every vertex $v\in S$ is incident to at least $z-k+1\geq z-\log n$ edges in $M$. Therefore, there are at most $\log n+1$ indices $1\leq i\leq z+1$, for which $v$ is not matched in $M_i$, and so $\sum_{i=1}^{z+1} m_i \leq |S|\cdot (\log n+1)$. We conclude that there must be an index $1\leq i^*\leq z+1$ with $m_{i^*}\le \frac{|S|\cdot (\log n+1)}{z+1}\leq \frac{2|S|\cdot \log n}{z}\leq \frac{2n\log n}{z}$. 
\end{proof}

By Claim \ref{claim:good-color}, there is an index $1\leq i^*\leq z+1$ such that at most $\frac{2n\log n}{z}$ vertices of $S$ are not matched in $M_{i^*}$; we reindex the matchings so that $i^*=1$ holds.
We partition the algorithm into \emph{phases},  where each phase spans exactly $\lfloor \frac{n}{z} \rfloor$ updates by the adversary. Since the total number of updates is bounded by $r\leq n$, we get that the number of phases is bounded by $2z$. Throughout the algorithm, we will maintain a maximal matching $M^*$ with $M_1\subseteq M^*$,  
while matchings $M_2,\dots,M_{z+1}$ are used in order to ``repair" $M_1$ when necessary. The algorithm may modify the matching $M_1$, but it always ensures that  $M_1\subseteq M$ holds. We ensure that, at the beginning of every phase, the following invariant holds:

\begin{properties}{R}
	\item \label{prop: matching M1} After the initialization of every phase, all but at most $\frac{12n\log^2 n}{z}$ vertices of $S$ are matched by the edges in $M_1$.
\end{properties}

Over the course of each phase, we maintain a dynamic matching $M^*$ with $M_1\subseteq M^*$, that is a maximal matching in the current graph $G$. 
Throughout the algorithm, we say that a vertex $x\in V(G)$ is \emph{settled}, if it is settled with respect to $M^*$ (see Definition \ref{def: settled}).
Our algorithm also maintains two additional data structures: {\bf directed} graphs $H$ and $\tilde H$, that we define next. 

\paragraph{Graph $H$.} Graph $H$ is a directed graph that is used in order to quickly rematch vertices of $B\cup U$. The vertex set of $H$ is $V(H)=B\cup U$. For every vertex $u\in U$ that is currently unmatched by $M^*$, there is a directed edge in $H$ from $u$ to every vertex in $N_{G'}(u)\cap (B\cup U)$ (recall that all vertices in $N_{G'}(u)\cap (B\cup U)$ lie in $\Lambda(u)$, and that $|\Lambda(u)|\leq O(z)$). Equivalently, for every vertex $v\in B\cup U$, the set of the in-neighbors of $v$ in $H$ is exactly the set of all unmatched vertices $u\in U$ with $(u,v)\in E(G')$.

\paragraph{Good and Bad Vertices, and Graph $\tilde H$.}
The second data structure is a \emph{directed} graph $\tilde H$, that is designed to take care of the edges of $E_I$ -- the edges of $G$ that were inserted by the adversary since the beginning of the algorithm; recall that $|E_I|\leq r\leq n$ must hold. Before defining the graph $\tilde H$, we need to introduce the notion of good and bad vertices.
Throughout the algorithm, we say that a vertex $v\in V(G)$ is \emph{bad} if the number of edges inserted into $G$ by the adversary over the course of the algorithm that are incident to $v$ is at least $z$, and it is \emph{good} otherwise. Note that the same edge may be inserted and deleted numerous times, but once a vertex $v$ becomes bad it remains so until the end of the algorithm. Clearly, the
total number of bad vertices is always bounded by $\frac{2r}{z} \le \frac{2n}{z}$. The vertex set of graph $\tilde H$ is $V(G)$. For every vertex $v\in V(G)$ that is currently unmatched by $M^*$, and every edge $(v,u)\in E_I$ where $u$ is bad, there is a directed edge in $\tilde H$ from $v$ to $u$. Equivalently, for every bad vertex $v\in V(G)$, the set of all in-neighbors of $v$ in $\tilde H$ is exactly the set of all unmatched vertices $u\in V(G)$ with $(u,v)\in E_I$.

\paragraph{Phase execution: high-level overview.}
Over the course of a phase, if an edge $e\in M_1$ is deleted by the adversary, it is deleted from $M_1$ as well; recall that at most $\frac{n}{z}$ such edge deletions may occur over the course of a phase. Additionally, our algorithm may choose to delete up to $\frac{2n\log n}{z}$ edges from $M_1$ over the course of the phase. Each edge deletion from $M_1$ may cause at most two vertices of $S$ to become unmatched in $M_1$. By Property \ref{prop: matching M1}, after the initialization of a phase, at most $\frac{12n\log^2 n}{z}$ vertices of $S$ are unmatched by $M_1$. Therefore, we ensure that the following invariant holds throughout the phase:

\begin{properties}[1]{R}
	\item \label{prop during subphase: matching M1} At all times, the number of vertices of $S$ that are not matched by $M_1$ is bounded by $\frac{18 n\log^2 n}{z}$.
\end{properties}

Recall that, from Property \ref{prop: neighbors of Ai in M in Ni} of the multi-level subgraph system (see Definition \ref{def: subgraph structure}), for all $1\leq i\leq k$, for every vertex $v\in A_i$, for every edge $e=(v,u)\in M$ incident to $v$, $u\not\in R_i$ must hold. Moreover,
from Property \ref{prop: containment of Rs}, $U= R_k\subseteq R_{k-1} \subseteq \dots \subseteq R_1$ must hold. Therefore, for all $1\leq i\leq k$, for every vertex $v\in A^{\leq i}$, for every edge $e=(v,u)\in M$ incident to $v$, $u\not\in R_i$ must hold. Since $M_1\subseteq M$, we conclude that $M_1$ may not match a vertex $v\in A^{\leq i}$ to a vertex of $R_i$. Therefore, the only way that a vertex $v\in A^{\leq i}$ is matched to a vertex of $R_i$ in $M^*$ is if $v$ is not matched by $M_1$. Combining this with Invariant \ref{prop during subphase: matching M1}, we get that the following invariant must also hold throughout the phase:

\begin{properties}[2]{R}
	\item \label{prop during subphase: few vertices of S'' matched to U} At all times, for all $1\leq i\leq k$, at most $\frac{18 n\log^2 n}{z}$ vertices of $A^{\leq i}$ may be matched to vertices of $R_i$ in $M^*$.
\end{properties}

Our algorithm will also explicitly maintain the set $\hat S\subseteq S$ of all vertices $v\in S$ that are currently unmatched by $M^*$. We say that a vertex $v\in V(G)$ \emph{switches its status in $M^*$} whenever it switches from being matched to unmatched in $M^*$, or the other way around.

\subsection{Update Procedures}

Next, we describe the main procedures that our algorithm uses in order to maintain all data structures correctly.
We start with Procedure
\procupdate, that is called whenever a vertex of $G$ switches its status in $M^*$.
The purpose of the procedure is to ensure that all data structures are consistent with this change of status.

\subsection*{Procedure \procupdate}

The input to Procedure \procupdate is a vertex $v\in V(G)$. The procedure checks if $v$ is currently matched in $M^*$, and updates the vertex set $\hat S$, and the graphs $H$ and $\tilde H$ accordingly. We provide the description of \procupdate in \Cref{fig:procupdate}.

\program{Procedure \procupdate}{fig:procupdate}
{
	{\bf Input:} a vertex $v\in V(G)$.
	
	\begin{itemize}
		\item If $v$ is matched in $M^*$:
            \begin{itemize}
                \item if $v\in \hat S$, delete $v$ from $\hat S$.
                \item if $v\in U$, delete all edges of $H$ leaving $v$ from $H$.
                \item delete all edges of $\tilde H$ leaving $v$ from $\tilde H$.
            \end{itemize}
        \item Otherwise ($v$ is unmatched in $M^*$):
            \begin{itemize}
                \item if $v\in S$, insert $v$ into $\hat S$.
                \item if $v\in U$, for every vertex $v'\in \Lambda(v)$, insert the edge $(v,v')$ into $H$.
                \item for every bad vertex $v'$, if $(v,v')\in E_I$, insert  the edge $(v,v')$ into $\tilde H$.
            \end{itemize}
	\end{itemize}
}

Procedure $\procupdate(v)$ is called whenever a vertex $v\in V(G)$ changes status in $M^*$, and it ensures that the following properties hold:

\begin{properties}{Q}
    \item \label{prop: set S} At all times, $\hat S$ is exactly the set of all vertices of $S$ that are unmatched by $M^*$.
	\item \label{prop: graph H} At all times, for every vertex $v\in B\cup U$, the set of all in-neighbors of $v$ in $H$ is the set of all vertices of  $N_{G'}(v)\cap U$ that are currently unmatched by $M^*$.
    \item \label{prop: graph tilde H} At all times, for every bad vertex $v\in V(G)$, the set of all in-neighbors of $v$ in $\tilde H$ is the set of all vertices of $\set{v'\in V(G)\mid (v,v')\in E_I}$, that are currently unmatched by $M^*$.
\end{properties}

Since, for every vertex $v\in U$, $|\Lambda(v)|=O(z)$ holds, and since there are at most $O(\frac{n}{z})$ bad vertices, the following observation follows immediately from the description of the procedure.
\begin{observation} \label{obs: procupdate time}
The running time of $\procupdate(v)$ is $\tilde O\left(z+\frac{n}{z}\right)$.
\end{observation}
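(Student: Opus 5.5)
The plan is to bound the three branches of Procedure \procupdate in \Cref{fig:procupdate} separately. Each branch runs only one of its two cases, depending on whether $v$ is currently matched in $M^*$. All lists and adjacency structures are binary search trees, so each elementary insertion, deletion or lookup costs $O(\log n)$. This is absorbed into the $\tilde O(\cdot)$ notation.

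The first branch updates $\hat S$. Deleting $v$ from $\hat S$ when $v$ is matched, or inserting it when $v\in S$ is unmatched, is a single operation costing $O(\log n)$.

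The second branch updates $H$, and it only acts when $v\in U$. If $v$ becomes unmatched, we insert one edge $(v,v')$ for each $v'\in\Lambda(v)$. Since $|\Lambda(v)|=O(z)$ by the definition of the $k$-level subgraph system, and $\Lambda(v)$ only shrinks during the algorithm, this costs $\tilde O(z)$. If $v$ becomes matched, we delete all out-edges of $v$ in $H$. These out-edges are contained in $\Lambda(v)$, because the only out-edges of $v$ are those added when $v$ became unmatched, possibly minus edges later removed due to adversarial deletions. So this case also costs $\tilde O(z)$.

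The third branch updates $\tilde H$, and it is where the care is needed. If $v$ becomes unmatched, we iterate over all bad vertices $v'$ and test whether $(v,v')\in E_I$. This requires the set of bad vertices to be maintained as an explicit list. Each membership test is an $O(\log n)$ lookup in the binary search tree storing $E_I$. Since at most $2r/z\le 2n/z$ vertices are ever bad, this costs $\tilde O(n/z)$. If $v$ becomes matched, we delete all out-edges of $v$ in $\tilde H$. By construction every such out-edge points to a bad vertex, so there are at most $O(n/z)$ of them, and the cost is again $\tilde O(n/z)$.

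The main point to verify is this last bound on the out-degree of $v$ in $\tilde H$. It also requires that $\tilde H$ is kept consistent when a vertex becomes bad or when edges of $E_I$ are deleted. That maintenance is handled elsewhere, in the edge insertion and deletion routines, and is not charged to \procupdate. Summing the three branches gives a running time of $\tilde O\left(z+\frac{n}{z}\right)$, as claimed.
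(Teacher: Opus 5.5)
Your proposal is correct and follows the paper's own one-line argument. The work on $H$ is bounded by $|\Lambda(v)|=O(z)$. The work on $\tilde H$ (scanning bad vertices when $v$ becomes unmatched, and deleting out-edges when $v$ becomes matched) is bounded by the $2r/z\le 2n/z$ bound on the number of bad vertices, with the $\hat S$ update and $O(\log n)$ binary-search-tree costs absorbed into $\tilde O(\cdot)$. Your extra remarks, that the bad vertices are kept as an explicit list and that maintaining $\tilde H$ when a vertex turns bad is charged to the insertion routine, simply make explicit details the paper leaves implicit.
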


Next, we describe procedure $\procrematchBU$, whose purpose is to attempt to rematch a vertex of $B\cup U$ that became unmatched in $M^*$.

\subsection*{Procedure \procrematchBU}

The input to Procedure $\procrematchBU$ is a vertex $u\in B\cup U$ that is currently not matched by $M^*$. The procedure attempts to rematch $u$ to one of its neighbors in $G$. We provide the description of \procrematchBU in \Cref{fig:procrematchBU}.

\program{Procedure \procrematchBU}{fig:procrematchBU}
{
	{\bf Input:} a vertex $u\in B\cup U$ that is  not matched by $M^*$.

    \begin{enumerate}
        \item \label{step: H} If $H$ contains any incoming edge incident to $u$, let $(v,u)$ be any such edge:
            \begin{itemize}
                \item Insert $(u,v)$ into $M^*$, and call \procupdate for $u$ and $v$.
                \item Return SUCCESS.
            \end{itemize}
        \item  \label{step: check S} For every vertex $v\in \hat S$, if $(u,v)\in E(G')$:
            \begin{itemize}
                \item Insert $(u,v)$ into $M^*$, and call \procupdate for $u$ and $v$.
                \item Return SUCCESS.
            \end{itemize}
        \item \label{step: good} If $u$ is a good vertex:
            \begin{itemize}
                \item For every edge $(u,v)\in E_I$, if $v$ is unmatched by $M^*$:
                \begin{itemize}
                    \item Insert $(u,v)$ into $M^*$, and call \procupdate for $u$ and $v$.
                    \item Return SUCCESS.
                \end{itemize}
            \end{itemize}
        \item \label{step: bad} Otherwise, $u$ is a bad vertex:
            \begin{itemize}
                \item  If $\tilde H$ contains any incoming edge incident to $u$, let $(v,u)$ be any such edge:
                \begin{itemize}
                    \item Insert $(u,v)$ into $M^*$, and call \procupdate for $u$ and $v$.
                    \item Return SUCCESS.
                \end{itemize}
            \end{itemize}
        \item Return FAILURE.
    \end{enumerate}

}

The following observation summarizes the properties of the procedure. 

\begin{observation} \label{obs: procrematchBU}
The running time of Procedure $\procrematchBU(u)$ is $\tilde O\left(z+\frac{n}{z}\right)$. If the procedure returns FAILURE, then every vertex of $N_G(u)$ is currently matched by $M^*$.
\end{observation}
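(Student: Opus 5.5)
The plan mirrors the proof of Observation~\ref{obs: procrematchBU basic}: bound the running time step by step, then prove the FAILURE guarantee by contradiction, splitting according to the type of edge joining $u$ to an unmatched neighbor.

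\textbf{Running time.} Step~\ref{step: H} only checks whether $u$ has an incoming edge in $H$, which takes $O(1)$ time. Step~\ref{step: check S} scans $\hat S$ and performs one adjacency lookup in $G'$ per vertex, so it costs $\tilde O(|\hat S|)$. Since $M_1\subseteq M^*$, Invariant~\ref{prop during subphase: matching M1} gives $|\hat S|\le \frac{18n\log^2 n}{z}$, and this is $\tilde O(n/z)$. Step~\ref{step: good} is executed only when $u$ is good. A good vertex has had fewer than $z$ edges incident to it inserted over the course of the algorithm, so it has at most $z$ incident edges in $E_I$. Scanning them and checking the matched status of each endpoint therefore costs $\tilde O(z)$. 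Step~\ref{step: bad} again checks for a single incoming edge in $\tilde H$, taking $O(1)$ time. Finally, at most two calls to \procupdate are made, each costing $\tilde O(z+n/z)$ by Observation~\ref{obs: procupdate time}. Summing these gives $\tilde O(z+\frac nz)$.

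\textbf{FAILURE implies every neighbor is matched.} Suppose the procedure returns FAILURE, so it inserted nothing into $M^*$ and $M^*$ is unchanged. Suppose for contradiction that some $v\in N_G(u)$ is unmatched by $M^*$. Since $G=G'\cup E_I$, the edge $(u,v)$ lies in $E(G')$ or in $E_I$, and we treat these cases separately.

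\emph{Case $(u,v)\in E(G')$.} If $v\in U$, then $(v,u)$ is an edge of $H$ by Property~\ref{prop: graph H}, because $u\in B\cup U$ and $v$ is unmatched. Step~\ref{step: H} would then have succeeded. Otherwise $v\in S$, so $v\in\hat S$ by Property~\ref{prop: set S}. Since $(u,v)\in E(G')$, Step~\ref{step: check S} would then have succeeded.

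\emph{Case $(u,v)\in E_I$.} If $u$ is good, Step~\ref{step: good} scans every $E_I$-edge at $u$, including $(u,v)$, and would have matched $u$ to $v$. If $u$ is bad, Property~\ref{prop: graph tilde H} implies that $(v,u)$ is an edge of $\tilde H$, since $v$ is unmatched and $(u,v)\in E_I$. Step~\ref{step: bad} would then have succeeded.

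Every case contradicts the FAILURE outcome, so all of $N_G(u)$ is matched by $M^*$.

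\textbf{Main obstacle.} The argument is routine; the only care needed is in the bookkeeping. First, we must use that $|\hat S|$ is bounded through $M_1\subseteq M^*$ together with Invariant~\ref{prop during subphase: matching M1}. Second, the case split must use the identity $G=G'\cup E_I$ as maintained by the algorithm. This identity is what guarantees that Steps~\ref{step: H}–\ref{step: check S} together with Steps~\ref{step: good}–\ref{step: bad} cover every edge of $G$ incident to $u$.
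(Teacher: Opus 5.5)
Your proposal is correct and follows essentially the same route as the paper. The running-time accounting is the same step by step, using the bound on $|\hat S|$ from Invariant~\ref{prop during subphase: matching M1} and the $z$ bound on $E_I$-edges at a good vertex. The FAILURE argument is the same contradiction, splitting on whether the edge to the unmatched neighbor lies in $G'$ or in $E_I$ and invoking Properties~\ref{prop: set S}--\ref{prop: graph tilde H}.
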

\begin{proof}
	We first analyze the running time of the procedure, excluding the time spent on calls to Procedure \procupdate.
	Observe that Step \ref{step: H} can be implemented in time $O(1)$, since it only needs to check whether $u$ has an incoming edge in $H$. Step \ref{step: check S} can be implemented in time $O(|\hat S|)\leq \tilde O(n/z)$ (from  Property~\ref{prop during subphase: matching M1}), since it only requires scanning the vertices of $\hat S$ and checking, for each such vertex, whether it is a neighbor of $u$ in $G'$. If $u$ is a good vertex, then at most $z$ edges of $E_I$ are incident to $u$, so Step \ref{step: good} takes $O(z)$ time. If $u$ is a bad vertex, then Step \ref{step: bad} only needs to check whether $\tilde H$ contains an edge entering $u$, which can be done in $O(1)$ time. Additionally, the algorithm may call Procedure \procupdate at most twice; the running time for each such call is bounded by $\tilde O\left(z+\frac{n}{z}\right)$ from Observation \ref{obs: procupdate time}. Altogether, the running time of Procedure $\procrematchBU$ is bounded by $\tilde O\left(z+\frac{n}{z}\right)$.

Assume now that the procedure returns FAILURE, and assume for contradiction that there exists a vertex $v\in N_G(u)$ that is currently unmatched by $M^*$. Assume first that $v\in N_{G'}(u)$. If $v\in U$, then by Property \ref{prop: graph H}, edge $(v,u)$ must lie in $H$, so $u$ should have been matched in Step \ref{step: H}. Otherwise, $v\in S$, and, by Property \ref{prop: set S}, $v\in \hat S$ must hold. Then $u$ should have been matched in Step \ref{step: check S}. Lastly, assume that $v\not\in N_{G'}(u)$, so $(v,u)\in E_I$ must hold. If $u$ is a good vertex, then it should have been matched in Step \ref{step: good}. Otherwise,  $u$ is a bad vertex, and then by Property \ref{prop: graph tilde H}, edge $(v,u)$ must lie in $\tilde H$. Therefore, $u$  should have been matched in Step  \ref{step: bad}.
Since $V(G)=S\cup U$ and $E(G)=E(G')\cup E_I$, we reach a contradiction.
\end{proof}

We obtain the following immediate corollary of Observation \ref{obs: procrematchBU}, that follows from the fact that Procedure \procrematchBU does not delete any edges from $M^*$.

\begin{corollary}\label{cor: settled after rematchBU}
	After Procedure $\procrematchBU(u)$ is executed, vertex $u$ becomes settled. Moreover, for every vertex $v\in V(G)$, if $v$ was settled before the call to the procedure, it remains so after the execution of the procedure.
\end{corollary}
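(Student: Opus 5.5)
The corollary has two parts, and I would prove each directly from Observation \ref{obs: procrematchBU}, using one structural fact about Procedure $\procrematchBU$: the procedure never deletes an edge from $M^*$. The procedure changes $M^*$ only at its SUCCESS exits (Steps \ref{step: H}--\ref{step: bad}), where it inserts exactly one edge $(u,v)$. At each such exit, $v$ is currently unmatched by $M^*$. In Step \ref{step: H} this follows from Property \ref{prop: graph H}, in Step \ref{step: check S} from Property \ref{prop: set S}, in Step \ref{step: bad} from Property \ref{prop: graph tilde H}, and in Step \ref{step: good} it is checked explicitly. The input assumption says $u$ is also unmatched, so $M^*\cup\set{(u,v)}$ is again a matching. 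Hence $M^*$ only grows, so the set of matched vertices only grows. Calls to \procupdate alter only auxiliary structures ($\hat S$, $H$, $\tilde H$), never $M^*$.

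For the first claim, that $u$ becomes settled, I split on the return value. If the procedure returns SUCCESS, an edge incident to $u$ was inserted into $M^*$, so $u$ is matched and satisfies condition (i) of Definition \ref{def: settled}. If it returns FAILURE, Observation \ref{obs: procrematchBU} gives that every vertex of $N_G(u)$ is matched by $M^*$, which is condition (ii). Either way $u$ is settled.

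For the second claim, let $x$ be a vertex that was settled before the call. If $x$ was matched by $M^*$, its matching edge was not deleted, so $x$ is still matched. If instead every vertex of $N_G(x)$ was matched, each of them stays matched, since no edge was removed. In both cases the witnessing condition persists. Both settledness conditions are monotone under insertions into $M^*$ while the graph is fixed, and the graph does not change during the call.

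I do not expect a real obstacle here. The only point that needs care is confirming that every inserted edge preserves the matching property, which relies on Properties \ref{prop: set S}, \ref{prop: graph H} and \ref{prop: graph tilde H} holding at the time of the call.
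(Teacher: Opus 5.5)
Your proof is correct and follows the same route as the paper, which derives this corollary directly from Observation \ref{obs: procrematchBU} (the FAILURE case) together with the fact that \procrematchBU never deletes edges from $M^*$, so both settledness conditions persist. Your additional check, via Properties \ref{prop: set S}, \ref{prop: graph H} and \ref{prop: graph tilde H}, that each inserted edge keeps $M^*$ a valid matching is a sound detail that the paper leaves implicit.
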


Finally, we describe procedure \procrematchA, whose purpose is to attempt to rematch a vertex of $A$ that became unmatched in $M^*$.

\subsection*{Procedure \procrematchA}

The input to Procedure $\procrematchA$ is a vertex $v\in A_i$ for some $1\le i\leq k$ that is currently unmatched by $M^*$. The procedure first attempts to match $v$ with some vertex $u\in R_i$. If successful, it may delete a single edge $e=(u,u')$ from $M^*$ (and possibly from $M_1$), but it guarantees that $u'\in A^{\geq i+1}\cup B\cup U$ holds; it then attempts to rematch $u'$ by either calling Procedure $\procrematchA$ for $u'$ recursively (if $u'\in A$), or Procedure $\procrematchBU(u')$ (otherwise). If this attempt to match $v$ fails, the procedure then tries to match $v$ with its other neighbors in $G'$ that lie in $\hat S$, and finally it tries to match it via the edges of $E_I$. We provide the description of \procrematchA in Figure \ref{fig:procrematchA}.

\program{Procedure \procrematchA}{fig:procrematchA}
{
	{\bf Input:} an index $1\leq i\leq k$ and a vertex $v\in  A_i$ that is currently not matched by $M^*$.
	
	\begin{enumerate}
		\item \label{step: checking Lv}Process the first $\frac{18n\log^2 n}{z}+1$ vertices of $L(v)$. 
		 When a vertex $u\in L(v)$ is processed, if $u$ is currently not matched by $M^*$, or if it is matched by $M^*$ to a vertex of $V(G)\setminus A^{\leq i}$:
		\begin{itemize}
			\item If some edge $e=(u,u')$ incident to $u$ lies in $M^*$:  (note that $u'\in A^{\geq i+1}\cup B\cup U$ must hold)
			\begin{itemize}
				\item Delete $e$ from $M^*$. If $e\in M_1$, delete $e$ from $M_1$.
                \item Insert $(u,v)$ into $M^*$, and call \procupdate for $u'$ and $v$.
                \item If $u'\in A^{\geq i+1}$, then call $\procrematchA(i',u')$, where $i'>i$ is the index with $u'\in A_{i'}$.
                \item Otherwise ($u'\in B\cup U$ must hold), call $\procrematchBU(u')$.
			\end{itemize}
    		\item Otherwise (if $u$ is not currently matched by $M^*$): Insert $(u,v)$ into $M^*$, and call \procupdate for $u$ and $v$.
			\item Return SUCCESS.
		\end{itemize}
    \item \label{step: checking S for A} For every $v'\in \hat S$, if $(v,v')\in E(G')$:
            \begin{itemize}
                \item Insert $(v,v')$ into $M^*$, and call \procupdate for $v$ and $v'$.
                \item Return SUCCESS.
            \end{itemize}
    \item \label{step: checking good vertex for A} If $v$ is a good vertex:
        \begin{itemize}
            \item For every edge $(v,v')\in E_I$, if $v'$ is unmatched by $M^*$:
            \begin{itemize}
                \item Insert $(v,v')$ into $M^*$, and call \procupdate for $v$ and $v'$.
                \item Return SUCCESS.
            \end{itemize}
        \end{itemize}
    \item \label{step: checking bad vertex for A} Otherwise, $v$ is a bad vertex:
        \begin{itemize}
            \item  If $\tilde H$ contains any incoming edge incident to $v$, let $(v',v)$ be any such edge:
            \begin{itemize}
                \item Insert $(v,v')$ into $M^*$, and call \procupdate for $v$ and $v'$.
                \item Return SUCCESS.
            \end{itemize}
        \end{itemize}
	\item Return FAILURE
	\end{enumerate}
}

The following observation summarizes the properties of Procedure \procrematchA.

\begin{observation} \label{obs: procrematchA}
	There is a constant $c\geq 1$, such that the running time of $\procrematchA(i,v)$, including recursive calls to itself and calls to Procedures \procrematchBU and \procupdate, is bounded by $(k-i+1)\cdot \left(z+\frac{n}{z}\right)\cdot (\log n)^c\leq \tilde O\left(z+\frac{n}{z}\right)$. The procedure may call to itself recursively at most once. Moreover, if $(i_1,v_1),\ldots,(i_q,v_q)$ are the pairs to which the recursive calls of $\procrematchA$ are applied in this order, then $i<i_1<i_2<\dots < i_q\leq k$ must hold, and the algorithm may delete at most $q+1\leq k$ edges from $M^*$ and from $M_1$ (where every edge that is deleted from $M_1$ is also deleted from $M^*$). Morever, if $D$ denotes the set of all vertices of $G$ that serve as endpoints of all edges that are deleted from $M^*$ by the procedure (including all its recursive calls), then, at the end of the procedure, for every vertex $x\in D\cup \set{v}$, either $x$ is matched by $M^*$, or every vertex in $N_G(x)$ is matched by $M^*$; in other words, $x$ is settled.
\end{observation}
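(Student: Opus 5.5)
I will prove the statement by induction on $k-i$, i.e., in decreasing order of the level index $i$, mirroring the proof of Observation~\ref{obs: procrematchA basic} but now tracking the recursion chain. The structural facts come first. In Step~\ref{step: checking Lv}, a processed vertex $u\in L(v)$ lies in $R_i$. If $u$ is matched by $M^*$ to some $u'$, the procedure only acts when $u'\notin A^{\leq i}$. Since $u\in R_i\subseteq A^{\geq i+1}\cup B\cup U$ and every vertex lies in some part, this gives $u'\in A^{\geq i+1}\cup B\cup U$. Hence any recursive call is to $\procrematchA(i',u')$ with $i'>i$. The call is made at most once per invocation, because the procedure returns SUCCESS right after it. Iterating, the recursion chain satisfies $i<i_1<\dots<i_q\leq k$, so $q\leq k-i\leq k-1$. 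Each invocation deletes at most one edge, in Step~\ref{step: checking Lv}, and an edge leaves $M_1$ only when it also leaves $M^*$. Therefore at most $q+1\leq k$ edges are deleted.

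For the running time, excluding the nested call, one invocation does the following: it scans $O(n\log^2 n/z)$ entries of $L(v)$ with $O(\log n)$ lookups each; it scans $\hat S$, of size $O(n\log^2 n/z)$ by Invariant~\ref{prop during subphase: matching M1} together with $M_1\subseteq M^*$; in the good case it scans at most $z$ edges of $E_I$, and in the bad case it does an $O(1)$ check in $\tilde H$; it makes $O(1)$ calls to \procupdate; and it makes at most one call to \procrematchBU. By Observations~\ref{obs: procupdate time} and~\ref{obs: procrematchBU}, all of this totals at most $(z+n/z)(\log n)^c$ for a suitable constant $c$. The nested call is at level $i'\geq i+1$, so by induction it costs at most $(k-i)(z+n/z)(\log n)^c$. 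Summing gives $(k-i+1)(z+n/z)(\log n)^c$, and since $k\leq\log n$ this is $\tilde O(z+n/z)$.

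For settledness, I split into two cases. \emph{Case 1: no edge is deleted by this invocation.} If the procedure returns SUCCESS, then $v$ is matched. If it returns FAILURE, suppose some neighbor $y$ of $v$ in $G=G'\cup E_I$ is unmatched. If $(v,y)\in E_I$, Steps~\ref{step: checking good vertex for A} and~\ref{step: checking bad vertex for A} together with Property~\ref{prop: graph tilde H} give a contradiction. If $y\in S$, Property~\ref{prop: set S} and Step~\ref{step: checking S for A} give a contradiction. Otherwise $y\in U\subseteq R_i$, so $y\in L(v)$. This means the first $\frac{18n\log^2 n}{z}+1$ entries of $L(v)$ are all matched to vertices of $A^{\leq i}$, which contradicts Invariant~\ref{prop during subphase: few vertices of S'' matched to U}. \emph{Case 2: an edge $(u,u')$ is deleted.} Then $v$ and $u$ become matched via the new edge $(u,v)$, and $u'$ is passed to either $\procrematchBU$ or the recursive $\procrematchA$. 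By Corollary~\ref{cor: settled after rematchBU}, or by the induction hypothesis, $u'$ ends up settled. The induction hypothesis also settles every endpoint of edges deleted deeper in the recursion.

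The main obstacle is that later nested calls delete further edges, and I must show that vertices already settled stay settled. Edges are only ever removed in Step~\ref{step: checking Lv}, and each removal is immediately followed by inserting a new edge at $u$, so $u$ stays matched. The only vertex that loses its partner is $u'$, and it is handed to a rematch procedure. Thus after the deepest call returns, every vertex that was matched is either still matched or belongs to $D$ and has been settled. A vertex $x\in D\cup\{v\}$ that is unmatched had all its neighbors matched at the time it was settled. Those neighbors either remain matched or are themselves in $D$, and in the latter case the matched-neighbor property still holds at the end. I will state this as an auxiliary claim, that the matched set shrinks only at $u'$-type vertices, and prove it in the same induction. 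Observation~\ref{obs: settled if neighbors settled} then closes the argument.
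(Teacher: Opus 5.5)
Your proposal is correct and follows essentially the same route as the paper. It uses the same per-invocation cost accounting, with recursion depth bounded by the strictly increasing level index. It uses the same induction: you index it by $k-i$ and the paper by recursion depth, which is equivalent. The no-deletion case uses the same case analysis via $\hat S$, $E_I$ and $L(v)$, against the invariant bounding how many vertices of $A^{\leq i}$ are matched into $R_i$.

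Your remark that $u$ and $v$ are either still matched at the end or lie in $D$ (and are then settled by the induction hypothesis) makes the induction step slightly more explicit than the paper's. The neighbor-based argument in your last paragraph is not needed for this statement, since the nested call is the last action of the procedure.
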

\begin{proof}
	We first analyze the running time of Procedure $\procrematchA(i,v)$ excluding recursive calls to itself, and calls to Procedures $\procrematchBU$ and $\procupdate$.
	Step \ref{step: checking Lv} of Procedure $\procrematchA(i,v)$ only needs to process the first $O\left(\frac{n\log^2 n}{z}\right )$ vertices of $L(v)$, and the time required for processing each such vertex (excluding the time for recursive calls to $\procrematchA$ and  calls to Procedures $\procrematchBU$ and $\procupdate$) is $O(1)$. Therefore, the running time of this step is $O\left(\frac{n\log^2 n}{z}\right )$.
		Step \ref{step: checking S for A} can be implemented in time $O(|\hat S|)\leq \tilde O\left (\frac n z\right )$ (from  Property~\ref{prop during subphase: matching M1}), since it only requires scanning the vertices of $\hat S$ and checking, for each such vertex, whether it is a neighbor of $v$ in $G'$. If $v$ is a good vertex, then at most $z$ edges of $E_I$ are incident to $v$, so Step \ref{step: checking good vertex for A} takes $O(z)$ time. If $v$ is a bad vertex, then Step \ref{step: bad} only needs to check whether $\tilde H$ contains an edge entering $v$, which can be done in $O(1)$ time. 
	Overall, the running time of Procedure $\procrematchA(i,v)$ excluding recursive calls to itself, and calls to Procedures $\procrematchBU$ and $\procupdate$, is bounded by $\tilde O\left(z+\frac{n}{z}\right)$. Note that the procedure may call Procedure \procupdate at most twice; the running time for each such call is bounded by $\tilde O\left(z+\frac{n}{z}\right)$ from Observation \ref{obs: procupdate time}. 
	Additionally, it may call Procedure \procrematchBU, whose running time is bounded by $\tilde O\left(z+\frac{n}{z}\right)$ from Observation \ref{obs: procrematchBU} at most once. Altogether, we get that there is a constant $c>1$, such that the running time of Procedure $\procrematchA(i,v)$, excluding the recursive calls to itself, is bounded by $\left(z+\frac{n}{z}\right)\cdot (\log n)^c$. Since the procedure may call to itself recursively at most once, and the depth of the recursion is bounded by $k-i+1\leq k\leq \log n$, the total running time of the procedure is bounded by $(k-i+1)\cdot \left(z+\frac{n}{z}\right)\cdot (\log n)^c\leq \tilde O\left(z+\frac{n}{z}\right)$.

Observe that Procedure $\procrematchA(i,v)$ may only delete an edge from $M^*$ or from $M_1$ in step \ref{step: checking Lv}, and, if an edge is deleted from $M_1$, it is also deleted from $M^*$. Therefore, excluding the recursive calls to itself,  Procedure $\procrematchA(i,v)$ may delete at most one edge from $M^*$, and at most one edge from $M_1$. Additionally, the procedure may only  call to itself recursively in Step \ref{step: checking Lv}
 and, if the recursive call is applied to a pair $(i',u')$, then $i'>i$ must hold. It is now immediate to verify that, if $(i_1,v_1),\ldots,(i_q,v_q)$ are the pairs to which the recursive calls of $\procrematchA$ are applied in this order, then $i<i_1<i_2<\dots<i_q\leq k$ must hold, and the algorithm may delete at most $q+1\leq k$ edges from $M^*$ and from $M_1$; every edge that is deleted from $M_1$ is also deleted from $M^*$. Lastly, we denote by
$D(i,v)$ the set of all vertices of $G$ that serve as endpoints of all edges that are deleted from $M^*$ by Procedure $\procrematchA(i,v)$ (including all its recursive calls), and we denote by $D'(i,v)=D(i,v)\cup \set{v}$. It is now enough to prove that, at the end of the procedure, for every vertex $x\in D'(i,v)$, either $x$ is matched by $M^*$, or every vertex in $N_G(x)$ is matched by $M^*$; in other words, $x$ is settled.
The proof is by induction on the depth of the recursion. 

\paragraph{Induction Base.} For the base case, we assume that Procedure $\procrematchA(i,v)$ does not perform any recursive calls to itself. 
Assume first that it deleted an edge $(u,u')$ from $M^*$. Then it must be the case that $u'\in B\cup U$, so after the call to $\procrematchBU(u')$, this vertex becomes settled. The procedure then inserted the edge $(v,u)$ into $M^*$, so both $v$ and $u$ become settled. Moreover, it is easy to verify that  $D'(i,v)=\set{v,u,u'}$ must hold in this case, so every vertex in $D'(i,v)$ is settled.

Assume now that the procedure does not delete any edges from $M^*$. Then, $D'(i,v)=\set{v}$, and it is enough to prove that, once the procedure terminates, either $v$ is matched by $M^*$, or every vertex in $N_G(v)$ is matched by $M^*$. Assume for contradiction that this is not the case; in other words, at the end of the procedure, $v$ is not matched by $M^*$, and there is a vertex $u\in N_{G}(v)$ that is not matched by $M^*$. 

Assume first that $u\in N_{G'}(v)$. Since we assumed that $v$ is not matched at the end of the procedure, no edges were inserted into $M^*$ by the procedure. Note that it is impossible that $u\in S$, since then from Invariant \ref{prop: set S}, $u\in \hat S$ must hold, and so $v$ must have been matched in Step \ref{step: checking S for A}. Therefore, $u\in U$ must hold. Since, from Property \ref{prop: containment of Rs} of the subgraph system, $U\subseteq R_i$, and since $L(v)=N_{G'}(v)\cap R_i$ holds throughout the algorithm, $u\in L(v)$ must hold. Since vertex $v$ was not matched in Step \ref{step: checking Lv}, we get that $u$ is not among the first $\frac{18n\log^2 n}{z}+1$ vertices of $L(v)$; moreover, every one of the first  $\frac{18n\log^2 n}{z}+1$ vertices of $L(v)$ must be matched to a vertex of $A^{\leq i}$ by $M^*$. Since $L(v)\subseteq R_i$, this contradicts Invariant \ref{prop during subphase: few vertices of S'' matched to U}.

Finally, assume that $u\not\in N_{G'}(v)$, so $(u,v)\in E_I$ must hold. If $v$ is a good vertex, then it should have been matched in Step \ref{step: checking good vertex for A} of the procedure. Otherwise,  $v$ is a bad vertex, and then by Property \ref{prop: graph tilde H}, edge $(u,v)$ must lie in $\tilde H$. Therefore, $v$  should have been matched in Step  \ref{step: checking bad vertex for A} of the procedure, a contradiction.

\paragraph{Induction Step.} Assume now that Procedure $\procrematchA(i,v)$ calls itself recursively, for a pair $(i',u')$, where $i'>i$. Note that the recursive call may only occur at Step \ref{step: checking Lv} of the procedure, after an edge that is incident to $v$ is inserted into $M^*$. Note that $D'(i,v)=D'(i',u')\cup \set{v}$. By the induction hypothesis, at the end of the recursive call to Procedure  $\procrematchA(i',u')$, for every vertex $x\in D'(i',u')$, either $x$ is matched by $M^*$, or every vertex in $N_G(x)$ is matched by $M^*$. Since vertex $v$ was matched by $M^*$ prior to the call to Procedure  $\procrematchA(i',u')$, it is immediate to verify that, at the end of Procedure $\procrematchA(i,v)$, for every vertex $x'\in D'(i',u')$, either $x'$ is matched by $M^*$, or every vertex in $N_G(x')$ is matched by $M^*$.
\end{proof}

We obtain the following easy corollary of Observation \ref{obs: procrematchA}.

\begin{corollary}\label{cor: settled after rematchA}
	After Procedure $\procrematchA(i,v)$ is executed, vertex $v$ becomes settled. Moreover, for every vertex $u\in V(G)$, if $u$ was settled before the call to the procedure, then it remains so after the execution of the procedure.
\end{corollary}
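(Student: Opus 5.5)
The claim follows from \Cref{obs: procrematchA} together with the elementary fact from \Cref{obs: settled if neighbors settled}. For the first part, note that $v\in D'(i,v)=D(i,v)\cup\set{v}$, and \Cref{obs: procrematchA} guarantees that every vertex of $D'(i,v)$ is settled at the end of the procedure. So $v$ is settled immediately.

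For the second part, we fix a vertex $u$ that was settled before the call and split into two cases according to why it was settled.

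\emph{Case 1: $u$ was matched by $M^*$ before the call.} Let $e$ be its matching edge. If $e$ is still in $M^*$ at the end, then $u$ is matched and hence settled. Otherwise $e$ was deleted by the procedure. The only deletions happen in Step~\ref{step: checking Lv}, possibly inside recursive calls, and in that case $u$ is an endpoint of a deleted edge, so $u\in D(i,v)$. By \Cref{obs: procrematchA}, $u$ is settled at the end.

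\emph{Case 2: $u$ was unmatched but every $y\in N_G(u)$ was matched by $M^*$.} By Case 1 applied to each such $y$, every neighbor of $u$ is settled at the end of the procedure. Then \Cref{obs: settled if neighbors settled} gives that $u$ is settled.

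The only point needing care is that Case 1 relies on the endpoints of every deleted edge, at every recursion depth, lying in $D$. This holds by the definition of $D$ as the set of endpoints of all edges deleted by the procedure, including its recursive calls. It is also worth noting that calls to $\procrematchBU$ made inside the procedure never delete edges (Corollary~\ref{cor: settled after rematchBU}), so they cannot create additional unsettled vertices. Apart from this, there is no real obstacle; the argument mirrors the last paragraph of the proof of \Cref{obs: procrematchA basic}.
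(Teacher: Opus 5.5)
Your proposal is correct and follows essentially the same route as the paper. The paper also reads off that $v$ is settled directly from \Cref{obs: procrematchA}, handles a previously matched $u$ by noting that $u\in D$ whenever its edge is deleted, and finishes the unmatched case by applying the matched case to every neighbor and then invoking \Cref{obs: settled if neighbors settled}.
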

\begin{proof}
	It is immediate to verify from the statement of Observation \ref{obs: procrematchA} that, at the end of the execution of Procedure $\procrematchA(i,v)$, vertex $v$ becomes settled. Consider now some vertex $u\in V(G)$ that was settled before the call to the procedure. Assume first that $u$ was matched by $M^*$ before the call to the procedure. If no edge incident to $u$ was deleted from $M^*$, then $u$ remains matched by $M^*$. Otherwise, $u\in D$ must hold, and,  from Observation \ref{obs: procrematchA}, $u$ must be settled at the end of the procedure.
	
	Lastly, assume that $u$ was not matched by $M^*$ before the call to the procedure, but, for every vertex $y\in N_G(u)$, $y$ was matched by $M^*$. From our discussion above, each such vertex $y$ must remain settled at the end of the procedure. But then, from 
	Observation \ref{obs: settled if neighbors settled}, $u$ is also settled at the end of the procedure.
\end{proof}

We are now ready to complete the description of our algorithm. We start by describing the initialization procedure, which is slightly different for the first phase and for the remaining phases. Then we describe the algorithms for handling edge insertions and deletions.

\subsection{Initialization Algorithm for the First Phase}

We initialize the maximal matching $M^*$ and the data structures $H$ and $\tilde H$ as follows. 
Recall that,  at the beginning of the algorithm,
we have computed matchings $M_1,\ldots,M_{z+1}$, and we ensured that the number of vertices of $S$ that are not matched in $M_1$ is at most $\frac{2n\log n}{z}$.
We start by setting $M^* = M_1$, and by computing the set $\hat S\subseteq S$ of vertices that are not matched in $M^*$, so that $|\hat S|\leq  \frac{2n\log n}{z}$ and Property \ref{prop: set S} holds.
We initialize the graph $\tilde H$ to contain all vertices of $G$ and no edges (recall that $E_I=\emptyset$ at time $0$, so all vertices of $G$ are good and Property \ref{prop: graph tilde H} now holds).  We also initialize the graph $H$ to contain all vertices of $B\cup U$; for every vertex $u\in U$ that is currently unmatched by $M^*$, for every vertex $v\in \Lambda(u)$, we insert a directed edge $(u,v)$ into $H$. This ensures that Property \ref{prop: graph H} holds as well. Since, for every vertex $u\in U$, $|\Lambda(u)|\leq O(z)$, the running time of the initialization algorithm so far is $O(nz)$.

We extend $M^*$ to a maximal matching in the remainder of the initialization procedure, as follows.
Let $X$ contain all vertices of $G$ that are currently not matched in $M^*$. We consider every vertex $x\in X$ in turn. When $x\in X$ is considered, if $x$ is still not matched by $M^*$ at this time: if $x\in B\cup U$, we invoke Procedure $\procrematchBU(x)$, and otherwise, we invoke Procedure $\procrematchA(i,x)$, where $1\leq i\leq k$ is the index with $x\in A_i$. 
From Corollaries \ref{cor: settled after rematchBU} and \ref{cor: settled after rematchA}, once we finish processing $x$, it becomes settled, and it remains so until the end of the initialization procedure.
Therefore, once every vertex in $X$ is processed, matching $M^*$ is guaranteed to be maximal. From Observations \ref{obs: procrematchBU} and \ref{obs: procrematchA}, the running time of this step is $\tilde O\left(n\left(z+\frac{n}{z}\right)\right)$.

Recall that initially,  at most $\tfrac{2 n\log n}{z}$ vertices of $S$ were not matched by $M_1$, so $|X\cap S|\leq \tfrac{2 n\log n}{z}$ held. Therefore, Procedure \procrematchA was called  directly at most $\frac{2 n\log n}{z}$ times (excluding the recursive calls). From Observation \ref{obs: procrematchA}, each such direct call to \procrematchA may lead to the deletion of at most $k\leq \log n$ additional edges from $M_1$. Therefore, the total number of vertices of $S$ that are not matched by $M_1$ at the end of this step is bounded by $ \tfrac{2 n\log n}{z} + 2\cdot \frac{2n\log ^2n}{z}\leq \frac{6n\log^2 n}{z}$; in particular, Property~\ref{prop: matching M1} is guaranteed to hold.
This completes the description of the initialization algorithm for the first phase. Its total running time is $\tilde O\left(n \left(z+\frac{n}{z}\right)\right )$, and at the end of this procedure, Property \ref{prop: matching M1} holds.

\subsection{Initialization Algorithm for Subsequent Phases}

We now provide an algorithm for initializing a phase that is not the first phase. 
We let $\hat S'$ be the set of all vertices $v\in S$ that are currently not matched by $M_1$; note that this set may be different from the set $\hat S$ of all vertices that are currently not matched by $M^*$, and that $\hat S\subseteq \hat S'$ holds. Over the course of the initialization procedure, we may update the matching $M_1$, but the set $\hat S'$ of vertices remains fixed and contains all vertices of $S$ that were not matched by $M_1$ at the beginning of the procedure.
If, at the beginning of the phase, Property \ref{prop: matching M1}  holds, then no further initialization is needed. Therefore, we assume from now on that Property \ref{prop: matching M1} does not hold. Note that, from Property \ref{prop during subphase: matching M1}, all but at most $\frac{18n\log^2 n}{z}$ vertices of $S$ are currently matched by $M_1$. Therefore, we can assume from now on that $\frac{12n\log^2 n}{z}<|\hat S'|\leq \frac{18n\log^2 n}{z}$ holds.

Recall that at the beginning of the algorithm, we computed a collection $\set{M_1, \dots, M_{z+1}}$ of disjoint matchings, and, for all $1\leq i\le z+1$ we maintain the counter $m_i$ of the number of vertices of $S$ that are unmatched by $M_i$ throughout the algorithm. Over the course of the algorithm, matching $M_1$ may be modified by the algorithm, but for all $2\leq i\leq z+1$, the only modification that the matching $M_i$ undergoes is the deletion of edges that the adversary deletes from $G$. Let $\Pi$ be the collection of all pairs $(i,v)$, where $2\leq i\leq z+1$ and $v\in S$ is a vertex that is not matched by $M_i$. Clearly, $|\Pi|=\sum_{i=2}^{z+1}m_i$. At the beginning of the algorithm, from Property 
\ref{prop: edges inc to S} of the subgraph system, for every vertex $v\in S$, there are at most $k+1\leq \log n+1$ indices $2\leq i\leq z+1$ such that $v$ is not matched by $M_i$, so $|\Pi|\le (\log n+1)\cdot |S|\leq  (\log n+1)\cdot n$ holds at that time. Since there total number of adversarial edge deletions is bounded by $n$, and since each such edge deletion may affect at most two vertices of $S$, we get that, throughout the algorithm, $|\Pi|\le (\log n+1)\cdot n+2n = 4n\log n$ always holds.
Therefore, at the beginning of the current phase, there must exist an index $2\leq i\leq z+1$, with:

\[m_i\leq\frac{\sum_{j=2}^{z+1}m_j}{z}=\frac{|\Pi|}{z}\leq \frac{4n\log n}{z}.\]

We fix such an index $2\leq i\leq z$ from now on, so the number of vertices of $S$ that are not matched by $M_i$ is at most $\frac{4n\log n}{z}$.

We construct the graph $G^*=M_1\cup M_i$ in time $O(n)$. Since $M_1$ and $M_i$ are both matchings, every vertex in $G^*$ has degree at most $2$, and so every connected component of $G^*$ is either a simple path or a cycle (we view isolated vertices as paths of length $0$). 
Consider a vertex $v\in \hat S'$; since $v$ is not matched by $M_1$, its degree in $G^*$ is either $0$ or $1$. Therefore, the connected component of $G^*$ containing $v$ must be a path, that we denote by $P(v)$. Let $v'$ be the other endpoint of the path $P(v)$. We say that vertex $v$ is \emph{problematic} if either (i) $v'=v$; or (ii) $v'\in S$ and the last edge on $P(v)$ is in $M_1$. Notice that, if $v$ is problematic, then $v'$ is not matched by $M_i$. 
Since the number of vertices of $S$ that are not matched by $M_i$ is bounded by $\frac{4n\log n}{z}$, at most $\frac{4n\log n}{z}$ vertices of $\hat S'$ may be problematic. Next, we process every non-problematic vertex of $\hat S'$ one by one. Over the course of this step, we may delete some edges from $M^*$. We maintain the set $X\subseteq V(G)$ containing every vertex $v\in V(G)$ that was matched by $M^*$ at the beginning of the initialization procedure but now becomes unmatched. We initialize $X=\emptyset$.

\paragraph{Processing a non-problematic vertex of $\hat S'$.}
Let $v\in \hat S'$ be a vertex of $\hat S'$ that is not problematic. We augment the matching $M_1$ along the path $P(v)$ as follows. For every edge $e\in E(P)$, if $e\in M_1$ then we delete $e$ from $M_1$, and otherwise we insert it into $M_1$. It is easy to verify that after this augmentation, $M_1$ remains a valid matching, and vertex $v$ is now matched by $M_1$. The only vertex that was previously matched by $M_1$ and may become now unmatched is $v'$ (if the last edge on $P(v)$ lies in $M_1$), and in this case, $v'\not\in S$ holds. 
Notice that it is also possible that $v'\in \hat S'$ (in which case the last edge on $P(v)$ must lie in $M_i$). In this case, both $v$ and $v'$ become matched in $M_1$, and there is no need to process $v'$ separately. 
Next, we modify the matching $M^*$ in order to ensure that $M_1\subseteq M^*$ and that $M^*$ remains a valid matching; we also update the vertex set $X$. Specifically, for every edge $e\in E(P)$, if $e$ was deleted from $M_1$ then we delete it from $M^*$ as well, and if it was inserted into $M_1$ then we insert it into $M^*$ as well. Note that every inner vertex $u\in P(v)$ was initially matched in $M_1$ and  in $M^*$, and it remains so; only the endpoints $v,v'$ of $P(v)$ may have changed their status from matched to unmatched with respect to $M_1$, or the other way around. 
For every endpoint $x$ of $P(v)$, if $x$ was matched by $M_1$ but now becomes unmatched (notice that $x=v'$ must hold in this case, and $x$ is now unmatched by $M^*$ as well), we add $x$ to $X$.
For every endpoint $x$ of $P(v)$, if $x$ was unmatched by $M_1$ but now becomes matched by it, if $M^*$ now contains two edges incident to $x$, we delete the edge that is incident to $x$ and does not lie in $M_1$ from $M^*$, and we add the other endpoint of this edge into $X$.

Once very non-problematic vertex of $\hat S'$ is processed, the only vertices of $S$ that may remain unmatched by $M_1$ are the problematic vertices of $\hat S'$, and their number is bounded by $\frac{4n\log n}{z}$. Vertex set $X$ now contains every verex $v$ that was matched by $M^*$ at the beginning of the procedure but is currently unmatched by it.
It is immediate to verify that every problematic vertex $v\in\hat S'$ may contribute at most $O(1)$ vertices to set $X$, so $|X|\leq O\left (\frac{n\log n}{z}\right )$ must hold.
Lastly, we invoke Procedure $\procupdate(u)$ for every vertex $u$ whose status in $M^*$ changed between matched and unmatched. Since, for every problematic vertex $v\in \hat S'$, at most $O(1)$ vertices  change status while processing $P(v)$, from Observation \ref{obs: procupdate time}, the running time of this step is bounded by $\tilde O\left(|\hat S'|\cdot \left(z+\frac{n}{z}\right)\right)=\tilde O\left(\frac{n}{z}\cdot \left(z+\frac{n}{z}\right)\right)$. Overall, the running time of the initialization procedure so far is bounded by $O(n)+\tilde O\left(\frac{n}{z}\cdot \left(z+\frac{n}{z}\right)\right)\leq  \tilde O\left(\frac{n}{z}\cdot \left(z+\frac{n}{z}\right)\right)$.
From our discussion, all but at most $\frac{4n\log n}{z}$ vertices of $S$ are now matched by $M_1$. In the remainder of the initialization procedure, we extend $M^*$ to a maximal matching. This part of the algorithm is very similar to that of the initialization procedure for the first phase.

\paragraph{Extending $M^*$ to a maximal matching.} We consider every vertex $x\in X$ in turn. When $x\in X$ is considered, if $x$ is still not matched by $M^*$ at this time: if $x\in B\cup U$, we invoke Procedure $\procrematchBU(x)$, and otherwise, we invoke Procedure $\procrematchA(i,x)$, where $1\leq i\leq k$ is the index with $x\in A_i$. 
From Observations \ref{obs: procrematchBU} and \ref{obs: procrematchA}, the time required to process a single vertex of $X$ is 
 $\tilde O\left(z+\frac{n}{z}\right)$, and the total running time of this step is bounded by:

\[\tilde O\left(|X|\cdot \left(z+\frac{n}{z}\right)\right)\leq \tilde O\left(n+\frac{n^2}{z^2}\right).\]
In the following observation we prove that, once all vertices in $X$ are processed, $M^*$ becomes a maximal matching.

\begin{observation}\label{ob: maximal matching at the end}
	After all vertices in $X$ are processed, the matching $M^*$ becomes maximal.
	\end{observation}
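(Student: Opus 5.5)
The plan is to mirror the proof of the analogous observation in \Cref{sec: basic-subgraph-system-to-matching}, with Observation \ref{obs: maximal matching transformation} as the main tool. Let $M^*_0$ denote the matching $M^*$ at the start of the initialization procedure; it is maximal in the current graph $G$. Let $M^*_{\mathrm{end}}$ be the matching after all vertices of $X$ are processed. Set $E'=M^*_0\setminus M^*_{\mathrm{end}}$, and let $V'$ be the set of endpoints of the edges in $E'$. By Observation \ref{obs: maximal matching transformation}, applied to the pair $M^*_0$ and $M^*_{\mathrm{end}}$ in the (fixed) graph $G$, it suffices to show that every vertex of $V'$ is settled with respect to $M^*_{\mathrm{end}}$.

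Fix $x\in V'$ and let $e\in E'$ be the edge of $M^*_0$ incident to $x$. Edge $e$ left $M^*$ in one of two ways.

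\begin{enumerate}
\item It was removed during the augmentation stage, either because it lay on some path $P(v)$ and was deleted from $M_1$, or because it was the non-$M_1$ edge at a newly matched endpoint. In this case, either $x$ is matched by $M^*$ at the end of that stage (for instance, as an inner vertex of $P(v)$ or as an endpoint that is now matched by $M_1$), or $x$ is unmatched and was therefore added to $X$. I need to check carefully that every vertex that lost its $M^*_0$-edge and remained unmatched was indeed added to $X$. This follows from the bookkeeping rules: only the path endpoints change status, and the other endpoints of the discarded non-$M_1$ edges are added to $X$.
\item It was removed inside a call to $\procrematchA$ (including the recursive calls) during the extension stage. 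Then $x\in D$ for that call, and by Observation \ref{obs: procrematchA}, $x$ is settled when that call returns.
\end{enumerate}

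In the first case, if $x$ is matched at the end of the augmentation stage, it is settled. If it is in $X$, then when it is considered it is either already matched, or $\procrematchBU(x)$ or $\procrematchA(i,x)$ is invoked; by Corollaries \ref{cor: settled after rematchBU} and \ref{cor: settled after rematchA}, $x$ becomes settled.

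The remaining step is persistence: once a vertex becomes settled at some point during the extension stage, it stays settled through all later calls. This follows directly from the second parts of Corollaries \ref{cor: settled after rematchBU} and \ref{cor: settled after rematchA}, since the extension stage consists only of such calls.

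One subtlety needs care. A vertex matched at the end of the augmentation stage may later lose its edge inside some $\procrematchA$ call, which puts it back in the second case above. It may also never lose its edge, in which case it stays matched. Either way it ends settled. The main obstacle I expect is the exhaustive case analysis of the augmentation stage, verifying that no vertex of $V'$ escapes both $X$ and being matched. Combining the cases gives that all of $V'$ is settled, so $M^*_{\mathrm{end}}$ is maximal.
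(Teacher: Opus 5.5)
Your proposal is correct and follows essentially the same route as the paper. Both arguments apply Observation~\ref{obs: maximal matching transformation} to the edges of the initial $M^*$ that are missing from the final one, and then show every endpoint is settled by splitting into three groups: vertices of $X$ (settled by Corollaries~\ref{cor: settled after rematchBU} and~\ref{cor: settled after rematchA}), vertices still matched at the end, and vertices whose edge was removed inside $\procrematchA$ (settled by Observation~\ref{obs: procrematchA}, and kept settled by the persistence parts of the corollaries). Your explicit check that every vertex left unmatched after the augmentation stage lies in $X$ is exactly the fact the paper asserts when it constructs $X$.
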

\begin{proof}
	Let $E'$ be the set of all edges that lied in $M^*$ before the start of the initialization procedure, but do not lie in the final matching $M^*$, and let $V'$ be the set of all vertices that serve as endpoints of the edges in $E'$. From Observation \ref{obs: maximal matching transformation}, it is enough to show that, at the end of the initialization algorithm, every vertex in $V'$ is settled. Consider now a vertex $v\in V'$. If $v\in X$ then,  from Corollaries  \ref{cor: settled after rematchBU} and \ref{cor: settled after rematchA}, when $v$ was processed it became settled, and it remains so until the end of the algorithm. 
	If an edge incident to $v$ lies in $M^*$ at the end of the algorithm, then $v$ is settled.
	Otherwise, the only way that $v$ may lie in $V'$ is if the unique edge of $M^*$ that was incident to $v$ after the algorithm finished processing the non-problematic vertices of $\hat S'$ was deleted by Procedure $\procrematchA$. However, from Observation \ref{obs: procrematchA}, once that procedure terminated, $v$ became settled, and, from  Corollaries  \ref{cor: settled after rematchBU} and \ref{cor: settled after rematchA}, it remains so until the end of the algorithm.
\end{proof}

Recall that, after the first step of the initialization procedure,  at most $\tfrac{4 n\log n}{z}$ vertices of $S$ were not matched by $M_1$, so $|X\cap S|\leq \tfrac{4 n\log n}{z}$ held. Therefore, Procedure \procrematchA was called  directly at most $\frac{4 n\log n}{z}$ times (excluding the recursive calls). From Observation \ref{obs: procrematchA}, each such direct call to \procrematchA may lead to the deletion of at most $k\leq \log n$ additional edges from $M_1$. Therefore, the total number of vertices of $S$ that are not matched by $M_1$ at the end of this step is bounded by $ \tfrac{4 n\log n}{z} + 2\cdot \frac{4n\log ^2n}{z}\leq \frac{12n\log^2 n}{z}$; in particular, Property~\ref{prop: matching M1} is guaranteed to hold.

This completes the description of the initialization algorithm for a phase. Its total running time is $\tilde O\left(n+\frac{n^2}{z^2}\right)$, and, at the end of this procedure, Property \ref{prop: matching M1} holds.

Recall that the number of phases in the algorithm is bounded by $O(z)$. Therefore, the total time that the algorithm spends on initialization of all phases is bounded by: 

\[
\tilde O\left(z\cdot \left(n+\frac{n^2}{z^2}\right )\right ) = \tilde O\left(n\cdot \left(z+\frac{n}{z}\right)\right)
\]

\subsection{Processing an Edge Deletion}

We now provide an algorithm for processing a single edge deletion by the adversary.

Let $e=(u,v)$ be an edge that the adversary deleted. If $e\in E_I$, we delete it from $E_I$. We also delete the edges $(u,v)$ and $(v,u)$ from the graph $\tilde H$ if they lie there.
Otherwise, $e\in E(G')$ must hold and we delete $e$ from $G'$.
If $u\in U$ and $v\in \Lambda(u)$, then we delete $v$ from $\Lambda(u)$. If $u\in S\setminus B$ and $v\in L(u)$, then we delete $v$ from $L(u)$. Similarly, if $u\in \Lambda(v)$ or $u\in L(v)$, we delete $u$ from the corresponding lists.
We also delete the edges $(u,v)$ and $(v,u)$ from graph $H$ if they lie there.
If $e$ lies in some matching $M_i$, for $2\leq i\leq z+1$, then we delete $e$ from $M_i$ and we update the counter $m_i$ as needed.

If $e\not\in M^*$, then no further updates are needed. Therefore, we assume that $e\in M^*$ from now on.  We delete $e$ from $M^*$, and call \procupdate for $u$ and for $v$. By Observation \ref{obs: procupdate time}, the running time of the procedure is $\tilde O\left(z+\frac{n}{z}\right)$. If $e$ lies in $M_1$, it is also deleted from $M_1$. For every vertex $x\in \set{u,v}$, if $x\in B\cup U$, then we invoke Procedure $\procrematchBU(x)$, and otherwise we invoke Procedure $\procrematchA(x)$. From Observations \ref{obs: procrematchBU} and \ref{obs: procrematchA}, the time we spend on processing each of the vertices $u$, $v$ is $\tilde O\left(z+\frac{n}{z}\right)$, and so the total running time of the entire update algorithm is $\tilde O\left(z+\frac{n}{z}\right)$.

We claim that the resulting matching $M^*$ is maximal. In order to prove this, it is enough to show that every vertex in $G$ is settled with respect to this final matching. Indeed, consider any vertex $a\in G$. If $a\in \set{u,v}$, then, from Corollaries \ref{cor: settled after rematchBU} and \ref{cor: settled after rematchA}, once we finish processing $a$, it becomes settled, and it remains so until the end of the algorithm. Assume now that $a\not\in \set{u,v}$. Assume first that the original matching $M^*$ contained an edge $e'$ incident to $a$. Then either $e'$ lies in the final matching $M^*$, or it was deleted by Procedure $\procrematchA$. In the latter case, from Observation \ref{obs: procrematchA}, once that procedure terminated, $a$ became settled, and, from  Corollaries  \ref{cor: settled after rematchBU} and \ref{cor: settled after rematchA}, it remains so until the end of the algorithm. Therefore, if the initial matching $M^*$ contained an edge incident to $a$, then $a$ is settled with respect to the final matching $M^*$. Lastly, assume that the original matching $M^*$ did not contain an edge incident to $a$. Since $M^*$ was a maximal matching, every vertex in $N_G(a)$ was matched by $M^*$. From the above discussion, every vertex in $N_G(a)$ must be settled in the final matching $M^*$. From Observation \ref{obs: settled if neighbors settled}, it then follows that $a$ is settled in the final matching $M^*$.

Note that if $e\in M_1$, then $e$ is deleted from $M_1$. Moreover, the update procedure calls Procedure \procrematchA at most twice, and each such call may lead to the deletion of at most $k\leq \log n$ additional edges from $M_1$, from Observation  \ref{obs: procrematchA}. Recall that by Property \ref{prop: matching M1}, after the initialization procedure of the current phase, at most $\frac{12n\log^2 n}{z}$ vertices of $S$ are  not matched by $M_1$. Since the total number of edge deletions by the adversary in a phase is bounded by $\frac{n}{z}$, and since each such edge deletion may lead to up to $2\log n+1\le 3\log n$ deletions of edges from $M_1$, we then get that, throughout the phase, the total number of vertices of $S$ that are not matched by $M_1$ remains bounded by:

\[\frac{12n\log^2 n}{z}+2\cdot \frac{n}{z}\cdot 3\log n\leq \frac{18n\log^2 n}{z}.\] 

Therefore, Properties \ref{prop during subphase: matching M1} and \ref{prop during subphase: few vertices of S'' matched to U} hold throughout the phase.

\subsection{Processing an Edge Insertion}

We now describe an algorithm for processing the insertion of an edge $e=(u,v)$ into the graph $G$. 

We start by inserting the edge $e$ into the set $E_I$ of inserted edges. If $u$ is unmatched by $M^*$ and $v$ is a bad vertex, we insert $(u,v)$ into the graph $\tilde H$. Similarly, if $v$ is unmatched by $M^*$ and $u$ is a bad vertex, we insert $(v,u)$ into the graph $\tilde H$.

Next, we check whether the vertices $u$ and $v$ are currently matched by $M^*$. If both $u$ and $v$ are unmatched by $M^*$, we insert $(u,v)$ into $M^*$, and call \procupdate for $u$ and $v$. By Observation \ref{obs: procupdate time}, the running time of the update procedure so far is $\tilde O\left(z+\frac{n}{z}\right)$.

Finally, for every vertex $x\in \{u,v\}$, if $x$ is a good vertex, and the number of edges of $E_I$ incident to $x$ reaches $z+1$, then $x$ becomes a bad vertex. In this case, for every edge $(x,y)\in E_I$, if $y$ is currently unmatched by $M^*$, we insert the edge $(y,x)$ into the graph $\tilde H$. This step takes $\tilde O(z)$ time.

\subsection{Bounding the Running Time}
From our discussion, the algorithm spends initially $O\left ( n^{1+o(1)}\cdot z\right )$ time in order to compute matchings $M_1,\ldots,M_{z+1}$. Additional time required to initialize all data structures for the first phase is bounded by $\tilde O\left(n\left(z+\frac{n}{z}\right)\right)$. The total time spent on the initialization of all subsequent phases is bounded by  $\tilde O\left(n\left(z+\frac{n}{z}\right)\right)$. The time required to process every edge insertion and edge deletion is bounded by $\tilde O\left(z+\frac{n}{z}\right)$. Since the total number of such updates is bounded by $n$, the  total running time of the algorithm is
$\tilde O\left(n^{1+o(1)}\left(z+\frac{n}{z}\right)\right)$.

\section{From $h$-Level to $(h+1)$-Level Subgraph System: Proof of \Cref{thm: inductive constr of subgraph system}}
\label{sec: recursive subgraph system}

In this section we prove \Cref{thm: inductive constr of subgraph system}. 
Recall that we are given as input an $n$-vertex graph $G$ in the adjacency-list representation. We are also given as input an integral parameter $h\geq 1$ and a parameter $1\leq z\leq n$ that is an integral power of $2$. Lastly, we are given an $h$-level $z$-subgraph system $\sset =\left(S,B,U,M,\set{A_i,N_i,R_i}_{i=1}^h,\Lambda,L\right )$ for the graph $G$, a subset $E_D\subseteq E(G)$ of edges, and another set $E_I\subseteq V(G)\times V(G)$ of edges that do not lie in $G$.

Our algorithm is also given as input an integral parameter $1\leq z'<z$ that is an integral power of $2$. Our goal is to construct a subset $E_D'\subseteq E_D$ of at most $\frac{|E_D|\cdot z'}{z}$ edges, and an $(h+1)$-level $z'$-subgraph system  for the graph $G'=(G\cup E_I)\setminus (E_D\setminus E_D')$. 

Our algorithm consists of two steps. In the first step, construct the set $E'_D\subseteq E_D$ of edges, and an initial $(h+1)$-level $z'$-subgraph system $\hat \sset$  for the corresponding graph  $G'=(G\cup E_I)\setminus (E_D\setminus E_D')$ that does not quite have all required properties. Then in the second step we fix $\hat\sset$ to obtain the final proper $(h+1)$-level $z'$-subgraph system $\sset'$ for $G'$.

\subsection{Step 1: An Initial Subgraph System}

In this step we construct the set $E'_D\subseteq E_D$ of edges, and an initial $(h+1)$-level $z'$-subgraph system $\hat \sset= \left(\hat S,\hat B,\hat U,\hat M,\set{\hat A_i,\hat N_i,\hat R_i}_{i=1}^{h+1},\hat\Lambda,\hat L\right )$ for the corresponding graph  $G'=(G\cup E_I)\setminus (E_D\setminus E_D')$, that has all required properties except for, possibly,
\ref{prop: degrees in U} and \ref{prop: U neighbors in Ak}. We will then ``fix'' this subgraph system in Step 2.

We start by applying the algorithm from \Cref{thm: edge coloring}  to the subgraph of $G$ induced by the set $M$ of edges, to obtain a $(z+1)$-coloring of the edges in $M$; equivalently, we obtain a partition $(M_1,\ldots,M_{z+1})$ of $M$, where, for all $1\leq i\leq z+1$, $M_i$ is a matching. Recall that the running time of the algorithm from \Cref{thm: edge coloring}   is $O\left (|M|\cdot n^{o(1)}\right )\leq  O\left (n^{1+o(1)}\cdot z\right )$. We also store, for each edge $e\in M$, its color index $c(e)\in\{1,\ldots,z{+}1\}$ to allow an $O(1)$ time lookup.

Next, for all $1\leq i\leq z+1$, we compute the value $w_i=|E_D\cap M_i|$ by scanning the edge set $E_D$, and looking up the color $c(e)$ of each scanned edge $e$. We reindex the matchings $M_1,\ldots,M_{z+1}$ so that $w_1\leq w_2\leq\cdots\leq w_{z+1}$ holds.
We then construct the edge sets $\hat M=M_1\cup\dots\cup M_{z'}$ and $E_D'=E_D\cap \hat M$. Clearly, $|E_D'|\leq\frac{|E_D|\cdot z'}{z}$ must hold. We define the graph $G'=(G\cup E_I)\setminus (E_D\setminus E_D')$, and we modify the adjacency-list representation of $G$  in time $\tilde O(|E_I|+|E_D|)$, in order to turn it into the adjacency-list representation of $G'$.
Notice that the running time of the algorithm so far is bounded by $\tilde O\left (n^{1+o(1)}\cdot z+|E_D|+|E_I|\right )$.

We now proceed to construct the $(h+1)$-level $z'$-subgraph system  $\hat \sset= \left(\hat S,\hat B,\hat U,\hat M,\set{\hat A_i,\hat N_i,\hat R_i}_{i=1}^{h+1},\hat\Lambda,\hat L\right )$ for $G'$. 
We start by setting $\hat S=S$, $\hat U=U$, and, for all $1\leq i\leq h$, we set $ \hat A_i=A_i$. We now define the sets $\hat A_{h+1}$ and $\hat B$, by appropriately partitioning the vertex set $B$. 
In order to do so, we inspect every vertex $v\in B$ one by one. For each such vertex $v\in B$, we inspect all edges of $\hat M$ that are incident to $v$. If, for every such edge $(v,u)\in \hat M$, $u\in \hat S$ holds, then we add $v$ to $\hat A_{h+1}$, and otherwise we add it to $\hat B$. We set $\hat N_{h+1}=\hat B$ and $\hat R_{h+1}=\hat U$, as required. Note that Property \ref{prop: neighbors of Ai in M in Ni} is now guaranteed to hold for $\hat A_{h+1}$.
As before, for all $1\leq i\leq  h+1$, we denote by $\hat A^{\leq i}=\hat A_1\cup\cdots\cup \hat A_i$ and by $\hat A^{\geq i}=\hat A_i\cup\cdots\cup \hat A_{h+1}$.

Consider now an index $1\leq i\leq h$. Recall that, in the $h$-level subgraph system, $N_i\subseteq A^{\geq i+1}\cup B$ held, and that $R_i=\left(A^{\geq i+1}\cup B\cup U\right )\setminus N_i$. Since $\hat A^{\geq i+1}\cup \hat B= A^{\geq i+1}\cup B$ (as $\hat A_{h+1}\cup \hat B=B$), we simply set $\hat N_i=N_i$, which ensures that $\hat N_i\subseteq \hat A^{\geq i+1}\cup \hat B$, and that Property \ref{prop: neighbors of Ai in M in Ni} holds for $\hat A_i$. We then get that $\hat R_i=\left(\hat A^{\geq i+1}\cup \hat B\cup \hat U\right )\setminus \hat N_i=\left(A^{\geq i+1}\cup B\cup U\right )\setminus N_i=R_i$. Recall that the $h$-level subgraph system $\sset$ contained, for every vertex $v\in A_i$, the vertex list $L(v)=N_G(v)\cap R_i=N_G(v)\cap \hat R_i$. For every vertex $v\in A_i$, we initially set $\hat L(v)=L(v)$ (note that we cannot afford to copy these lists since they may be too long, so we just use a pointer to $L(v)$ in order to initialize $\hat L(v)$). We then delete from $\hat L(v)$ all vertices $u$ with $(v,u)\in E_D\setminus E_D'$, and insert into it all vertices $u'$ with $(v,u')\in E_I$ and $u'\in \hat R_i$. The time required to initialize and update the lists $\hat L(v)$ for all vertices $v\in \hat A^{\leq h}$ is $\tilde O(|E_D|+|E_I|+n)$

Recall that, from Property \ref{prop: containment of Rs} of the $h$-level subgraph system, $U= R_h\subseteq R_{h-1}\subseteq \cdots\subseteq R_1$ holds. Since $\hat U= U$, and since we have set, for all  
$1\leq i\leq h$, $\hat R_i=R_i$, and $\hat R_{h+1}=\hat U=U$, we get that $\hat U= \hat R_{h+1}= \hat R_h\subseteq\cdots\subseteq \hat R_1$ holds, establishing Property  \ref{prop: containment of Rs} for the new subgraph system $\hat \sset$.

Next, we construct, for every vertex $v\in \hat A_{h+1}$ the list $\hat L(v)=N_{G'}(v)\cap \hat R_{h+1}=N_{G'}(v)\cap \hat U$. We also construct, for every vertex $v\in \hat B$, the list $\hat L(v)=N_{G'}(v)\cap \hat U$ (even though this list is not required from the definition of the subgraph system, it will be useful for us in Step 2). Lastly, recall that, from the definition of the subgraph system, for every vertex $u\in \hat U$, we need to provide the vertex list $\hat \Lambda(u)=N_{G'}(u)\cap (\hat B\cup \hat U)$. Instead, we will construct the vertex list $\hat \Lambda(u)=N_{G'}(u)\cap (\hat A_{h+1}\cup \hat B\cup \hat U)$, since this will be more convenient for Step 2 of the algorithm; we will fix all these lists in Step 2.
 
We start by setting, for every vertex $v\in \hat B\cup \hat A_{h+1}$, $\hat L(v)=\emptyset$. 
Recall that the $h$-level subgraph system contained, for every vertex $u\in U$, a list $\Lambda(u)=N_G(u)\cap (B\cup U)=N_G(u)\cap (\hat A_{h+1}\cup \hat B\cup \hat U)$, whose length is at most $O(z)$. 
We initially set $\hat \Lambda(u)=\Lambda(u)$, and then inspect every vertex $v\in \hat \Lambda(u)$. If $v\in \hat A_{h+1}\cup \hat B$, then we add $u$ to the list $\hat L(v)$. At the end of this process, we are guaranteed that, for every vertex $v\in \hat A_{h+1}\cup \hat B$, $\hat L(v)=N_{G}(v)\cap \hat U$, and, for all $u\in \hat U$, $\hat \Lambda(u)=N_G(u)\cap \left( \hat A_{h+1}\cup  \hat B\cup \hat U\right )$. Since, for all $u\in U$, $|\Lambda(u)|\leq O(z)$, the running time of this process so far is $\tilde O(nz)$. Next, we inspect every edge of $E_D\setminus E_D'$. For each such edge $e$, if one of the endpoints of $e$ (say $u$) lies in $\hat U$, then we check whether the other endpoint lies in $\hat \Lambda(u)$, and, if so, delete it from $\hat \Lambda(u)$. Similarly, if one of the endpoints of $e$ (say $v$) lies in $\hat A_{h+1}\cup \hat B$, then we check whether the other endpoint lies in $\hat L(v)$, and, if so, we delete it from $\hat L(v)$. The other endpoint of $e$ is processed similarly. We also inspect every edge of $E_I$. 
For each such edge $e'$, if one of the endpoints of $e'$ (say $u$) lies in $\hat U$, and the other endpoint (say $u'$) lies in $\hat A_{h+1}\cup \hat B\cup \hat U$, then we insert $u'$ into $\hat \Lambda(u)$. Similarly, if one of the endpoints of $e'$ (say $v$) lies in $\hat A_{h+1}\cup \hat B$, and the other endpoint (say $v'$) lies in $\hat U$, then we insert $v'$ into $\hat L(v)$. We process both endpoints of $e'$ in this manner. At the end of this procedure we are guaranteed that, for every vertex $u\in \hat U$, $\hat \Lambda(u)=N_{G'}(u)\cap (\hat A_{h+1}\cup \hat B\cup \hat U)$, and, for every vertex $v\in \hat A_{h+1}\cup \hat B$, $\hat L(v)=N_{G'}(v)\cap \hat U$.
Clearly, the time required to construct the lists $\set{\hat L(v)\mid v\in \hat S}$ and $\set{\hat \Lambda(u)\mid u\in \hat U}$ is bounded by $\tilde O(nz+|E_D|+|E_I|)$.

This completes the construction of the initial $(h+1)$-level $z'$-subgraph system $\hat \sset$ for $G'$. 
We have already established that Properties  \ref{prop: neighbors of Ai in M in Ni} and \ref{prop: containment of Rs} hold for $\hat \sset$, and it is immediate to verify that  Properties \ref{prop: small degrees in M} and \ref{prop: no edge of M w both endpoints in U} hold as well.

For every vertex $v\in V(G)$, we denote by $m(v)$ the number of edges of $M$ incident to $v$, and by $\hat m(v)$ the number of edges of $\hat M$ incident to it.
Recall that, from Property \ref{prop: edges inc to S}, for every vertex $v\in S$,  $m(v)\geq z-h+1$ holds. Therefore, there are at most $h$ matchings $M_i$ among $M_1,\cdots,M_{z+1}$, such that $v$ is unmatched in $M_i$. Since $\hat M=M_1\cup\dots\cup M_{z'}$, we get that
$\hat m(v)\geq z'-h$ must hold, and, since $\hat S=S$, we get that Property \ref{prop: edges inc to S} holds in $\hat \sset$. We note however Properties \ref{prop: degrees in U} and \ref{prop: U neighbors in Ak} may not hold, both because the new subgraph system is defined with respect to a parameter $z'<z$, and due to the insertion of edges $E_I$ into $G'$. In addition, for every vertex $u\in \hat U$, the list $\hat \Lambda(u)$ currently contains all vertices of $N_{G'}(u)\cap (\hat A_{h+1}\cup \hat B\cup \hat U)$ (instead of $N_{G'}(u)\cap (\hat B\cup \hat U)$ -- this is on purpose and will be used in Step 2).

From our discussion, once the set $\hat M$ of edges is constructed, the time required to construct the initial $(h+1)$-level $z'$-subgraph system $\hat \sset$ by modifying the input $h$-level $z$-subgraph system $\sset$, and to transform the adjacency-list representation of $G$ into that of $G'$, is bounded by: 

$$\tilde O\left (n^{1+o(1)}\cdot z+|M|+|E_D|+|E_I|\right )\leq \tilde O\left (n^{1+o(1)}\cdot z+|E_D|+|E_I|\right ).$$ Altogether, the running time of the algorithm for Step 1 is $\tilde O\left (n^{1+o(1)}\cdot z+|E_D|+|E_I|\right )$.

\subsection{Step 2: Final Construction}

In this step we carefully modify the $(h+1)$-level $z'$-subgraph system $$\hat \sset= \left(\hat S,\hat B,\hat U,\hat M,\set{\hat A_i,\hat N_i,\hat R_i}_{i=1}^{h+1},\hat \Lambda,\hat L\right )$$
 constructed in the first step in order to turn it into a valid $(h+1)$-level $z'$-subgraph system for the graph $G'$. Our main goal is to ensure that Properties \ref{prop: degrees in U} and \ref{prop: U neighbors in Ak} hold. 
 We note that this step is very similar to Step 2 in the proof of \Cref{thm: constructing simple subgraph system}, but is somewhat more involved. Throughout the algorithm we maintain, for every vertex $v\in V(G)$, the counter $\hat m(v)$, that counts the number of edges of $\hat M$ that are incident to $v$. These counters can be initialized in time $O(nz)$, and are trivial to maintain without increasing the asymptotic running time of the algorithm, so we do not further discuss their maintenance.
 Our algorithm will only perform the following modifications to the subgraph system:

\begin{itemize}
	\item {\bf Edge Swap:} An edge $e=(v,u)\in \hat M$ with $v\in \hat B$ and  $u\in \hat U$  is deleted from $\hat M$, and, instead, another edge $e'=(v,u')$ with $u'\in \hat U$ is inserted into $\hat M$. The edge swap may only be performed if $\hat m(u')<z'$.
	
	\item {\bf Edge Insertion and Vertex Promotion:} for a vertex $u\in \hat U$, a set $E_u\subseteq \delta_{G'}(u)$ of  $z'-\hat m(u)$ edges connecting it to other vertices of $\hat U$ is inserted into $\hat M$, and $u$ is moved from 
	$\hat U$ to $\hat B$; we will ensure that, for every inserted edge $(u,u')$, prior to the insertion, $\hat m(u')<z'$ held.
	
	\item {\bf Promotion of vertices from $\hat U$ to $\hat A_{h+1}\cup \hat B$:} If a vertex $u\in \hat U$ has at least $z'-h$ edges of $\hat M$ incident to it, then we may move it from $\hat U$ to $\hat S$, where it joins the set $\hat B$. If, additionally, for every edge $(u,v)\in \hat M$ incident to $u$, the other endpoint $v$ lies in $\hat S$, then we move $u$ to $\hat A_{h+1}$ instead.
	
	\item {\bf Promotion of vertices from $\hat B$ to $\hat A_{h+1}$:} If, due to any of the modifications, for some vertex $v\in \hat B$, every vertex in $\set{v'\mid (v,v')\in \hat M}$ lies in $\hat S$, then we move $v$ to $\hat A_{h+1}$.
\end{itemize}

Our algorithm also guarantees that the following invariant always holds:

\begin{properties}{I}
	\item \label{inv: in B connection} For every vertex $v\in \hat B$, there is an edge $(v,u)\in \hat M$ with $u\in \hat U$.
\end{properties}

It is immediate to verify that this property holds at the beginning of Step 2, from the construction of the vertex set $\hat B$.

Throughout the algorithm, we set $\hat N_{h+1}=\hat B$ and $\hat R_{h+1}=\hat U$; as vertices are moved between the sets $\hat U,\hat A_{h+1}$ and $\hat B$, the sets $\hat N_{h+1}$ and $\hat R_{h+1}$ are updated accordingly.
Observe that, over the course of the algorithm, vertices may move from $\hat U$ to $\hat B\cup \hat A_{h+1}$ but not in the other direction, and vertices may move from $\hat B$ to $\hat A_{h+1}$ but not in the other direction. Observe also that our algorithm may not modify an edge of $\hat M$ that is incident to a vertex of $\hat S\setminus \hat B$, and it may not insert new edges incident to vertices of $\hat S\setminus \hat B$ into $\hat M$. 

 From the properties of the initial $(h+1)$-level $z'$-subgraph system, and since we only move vertices from $\hat U$ to $\hat S$ but not in the other direction, it is easy to verify that Property \ref{prop: neighbors of Ai in M in Ni} will continue to hold for vertices that lied in the set $\hat A_{h+1}$ at the beginning of Step 2. 
Since a new vertex $v$ can only be added to $\hat A_{h+1}$ if, for every edge $(u,v)\in \hat M$ incident to it, $u\in \hat S$ holds, we get that, whenever a new vertex $v$ is added to $\hat A_{h+1}$, Property \ref{prop: neighbors of Ai in M in Ni} holds for it. As before, since we only move vertices from $\hat U$ to $\hat S$ but not in the other direction, once a vertex $v$ joins the set $\hat A_{h+1}$, Property \ref{prop: neighbors of Ai in M in Ni} will continue to hold for it for the remainder of the algorithm.

For all $1\leq i\leq h$, the vertex sets $\hat A_i$, $\hat N_i$ and $\hat R_i$ remain unchanged. Since, in the initial $(h+1)$-subgraph system, $\hat N_i\subseteq \hat A^{\geq i+1}\cup \hat B$ held, and since our algorithm may only add new vertices to set $\hat A^{\geq i+1}\cup \hat B$ but not remove them, $\hat N_i\subseteq \hat A^{\geq i+1}\cup \hat B$ will continue to hold, and Property \ref{prop: neighbors of Ai in M in Ni} will continue to hold for $\hat A_i$. Moreover, since the set $\hat R_i$ remains unchanged,  for every vertex $v\in \hat A_i$, the list $\hat L(v)$ that we computed in Step 1 continues to contain all vertices of $N_{G'}(v)\cap \hat R_i$, as required.

From our description, it is immediate to verify that Properties \ref{prop: small degrees in M}, \ref{prop: no edge of M w both endpoints in U} and \ref{prop: edges inc to S} will continue to hold throughout this step. 
Lastly, since Property \ref{prop: containment of Rs} held in the initial $(h+1)$-level subgraph system, and since we may only remove vertices from set $\hat U$, while setting $\hat R_{h+1}=\hat U$ throughout the algorithm, we get that Property \ref{prop: containment of Rs} will continue to hold as well.

Our goal is to perform the modification steps described above to ensure that Properties \ref{prop: degrees in U} and \ref{prop: U neighbors in Ak} also hold. We also need to update the lists $\hat \Lambda(u)$ for vertices $u$ that remain in $\hat U$ (which we do at the very end of the algorithm), and to update the lists $\hat L(v)$ for vertices $v\in \hat A_{h+1}$ (which we do continuously).

Specifically, recall that, for all $1\leq i\le h+1$, for every vertex $v\in \hat A_i$, we are required that $\hat L(v)=N_{G'}(v)\cap \hat R_i$, and that, throughout the algorithm, $\hat R_{h+1}= \hat U$ holds. Therefore, our goal is to ensure that, throughout the algorithm, for every vertex $v\in \hat A_{h+1}$, $\hat L(v)=N_{G'}(v)\cap \hat U$ holds; this property is guaranteed to hold in the initial $(h+1)$-level subgraph system. 
For every vertex $v\in \hat U$, we are required that $\hat \Lambda(v)=N_{G'}(v)\cap (\hat B\cup \hat U)$ holds; the initial $(h+1)$-level subgraph system only guarantees that for each such vertex $v$, $\hat \Lambda(v)=N_{G'}(v)\cap (\hat A_{h+1}\cup \hat B\cup \hat U)$, and that $\sum_{v\in \hat U}|\hat \Lambda(v)|\leq O(nz+|E_I|)$. We will only update the sets $\hat \Lambda(v)$ for  vertices $v\in \hat U$ at the end of the algorithm.

For every vertex $v\in \hat B$, we maintain a counter $n(v)$, that counts the number of edges $(u,v)\in \hat M$ with $u\in \hat U$ and the set $Z(v)$ of all vertices $u\in \hat U$ with $(u,v)\in \hat M$. We initialize all these counters $n(v)$ and vertex sets $Z(v)$ at the beginning of Step 2 in time $O(|\hat M|)\leq O(nz)$, and then maintain them explicitly.
We assume that every vertex that is moved from $\hat U$ into $\hat B$ is automatically inserted into $\hat N_{h+1}$; that every vertex that is removed from $\hat U$ is automatically deleted from $\hat R_{h+1}$; and that every vertex that is moved from $\hat B$ to $\hat A_{h+1}$ is automatically deleted from $\hat N_{h+1}$. We do not explicitly discuss maintaining the vertex sets $\hat N_{h+1}$ and $\hat R_{h+1}$.

Whenever our algorithm identifies a vertex $u\in \hat U$ with  $\hat m(u)\geq z'-h$, it calls Procedure $\procpromote(u)$, described in Figure \ref{fig:propromote}. The procedure moves $u$ from $\hat U$ to $\hat S$, where it joins either the set $\hat A_{h+1}$ (if all edges $(u,v)\in \hat M$ have $v\in \hat S$), or $\hat B$ (otherwise). The procedure also updates all data structures accordingly. Its running time is $\tilde O(|\hat \Lambda(u)|+z')$.

\program{Procedure \procpromote}{fig:propromote}
{
	{\bf Input:} a vertex $u\in \hat U$ with $\hat m(u)\geq z'-h$, and the list $\hat \Lambda(u)=N_{G'}(u)\cap (\hat A_{h+1}\cup \hat B\cup \hat U)$.
	
	\begin{itemize}
		\item Move $u$ from $\hat U$ to $\hat S$;
		\item If, for all edges $(u,v)\in \hat M$ incident to $u$, $v\in \hat S$ holds: add $u$ to $\hat A_{h+1}$.
		
		\item Otherwise:
		 add $u$ to $\hat B$,
			 initialize $Z(u)=\set{v\in \hat U\mid (u,v)\in \hat M}$ and $n(u)=|Z(u)|$.

		\item Compute the set $\hat L(u)=N_{G'}(u)\cap \hat U$ of vertices, by inspecting all vertices in the set $\hat \Lambda(u)=N_{G'}(u)\cap (\hat A_{h+1}\cup \hat B\cup \hat U)$.
	
		\item For every vertex $v\in \hat A_{h+1}\cup \hat B$ with $u\in \hat L(v)$: (note that each such vertex $v$ must lie in $\hat \Lambda(u)=N_{G'}(u)\cap (\hat A_{h+1}\cup \hat B\cup \hat U)$)
		
		\begin{itemize}
		\item delete $u$ from $\hat L(v)$;
		\item if $v\in \hat B$ and $(u,v)\in \hat M$, decrease $n(v)$ by $1$ and delete $u$ from $Z(v)$. If $n(v)$ reaches $0$, move $v$ from $\hat B$ to $\hat A_{h+1}$.
	\end{itemize}
\end{itemize}
}

At the beginning of the algorithm, we apply Procedure $\procpromote(u)$ to every vertex $u\in \hat U$ with $\hat m(u)\geq z'-h$, so all such vertices are removed from $\hat U$ (this step does not change the edge set $\hat M$). After that we process every vertex $u$ that remains in $\hat U$ one by one, using  Procedure \procprocess described in \Cref{fig:procprocess}. When a vertex $u\in \hat U$ is processed by the procedure, it first checks whether there are $z'-\hat m(u)$ edges in $G'$ connecting $u$ to other vertices of $\hat U$ (all such vertices of $\hat U$ must lie in the list $\hat \Lambda(u)=N_{G'}(u)\cap (\hat A_{h+1}\cup \hat B\cup \hat U)$). If so, then the procedure inserts into $\hat M$ exactly $z'-\hat m(u)$ edges of $G'$ connecting $u$ to vertices of $\hat U$, and then calls $\procpromote(u)$. If, additionally, due to these edge insertions, the counter $\hat m(u')$ for some other vertices $u'\in \hat U$ reaches $z'-h$, it calls $\procpromote(u')$ for all such vertices $u'$ as well. Assume now that $G'$ contains fewer than $z'-\hat m(u)$ edges connecting $u$ to other vertices of $\hat U$. The next step is to check whether vertex set $\hat B$ contains 
$z'-\hat m(u)$ vertices $v$ with $(u,v)\in E(G')\setminus\hat M$. If this is the case then we insert the corresponding edges $(u,v)$ into $\hat M$ (if an edge $(u,v)$ with $v\in \hat B$ is inserted into $\hat M$, we delete from $\hat M$ an edge connecting $v$ to some vertex of $\hat U$ to ensure that $\hat m(v)\leq z'$ continues to hold; such an edge must exist from Invariant \ref{inv: in B connection}). The procedure then updates all data structures accordingly.

\program{Procedure \procprocess}{fig:procprocess}
{
	{\bf Input:} a vertex $u\in \hat U$ and the list $\hat \Lambda(u)=N_{G'}(u)\cap (\hat A_{h+1}\cup\hat B\cup \hat U)$.
	
	\begin{itemize}
	\item Denote: $r=z'-\hat m(u)$.
    \item If $|N_{G'}(u)\cap \hat U|\ge r$: (observe that the set $N_{G'}(u)\cap \hat U=\hat\Lambda(u)\cap \hat U$ can be constructed by inspecting all vertices of $\hat \Lambda(u)$)
	\begin{itemize}
		\item Select an arbitary set $\set{v_1,\ldots,v_r}$ of $r$ vertices of $N_{G'}(u)\cap \hat U$;
		\item For all $1\leq i\leq r$, insert the edge $(v_i,u)$ into $\hat M$, so $\hat m(u)=z'$ now holds;
		\item call $\procpromote(u)$;
		\item For all $1\leq i\leq r$, if $\hat m(v_i)\geq z'-h$, call $\procpromote(v_i)$;
		\item Terminate the procedure.
		\end{itemize}
	
	We assume from now on that $|N_{G'}(u)\cap \hat U|<r$.
	 \item	If $|N_{G'}(u)\cap \hat B|\geq z'$: (observe that the set $N_{G'}(u)\cap \hat B=\hat\Lambda(u)\cap \hat B$ can be constructed by inspecting all vertices of $\hat \Lambda(u)$)
	\begin{itemize}
		\item Select an arbitrary set $\set{v_1,\ldots,v_r}$ of $r$ vertices of $N_{G'}(u)\cap \hat B$ such that, for all $1\leq i\leq r$, $(v_i,u)\not\in \hat M$;
		\item For all $1\leq i\le r$, select an arbitrary vertex $a_i\in Z(v_i)$ (recall that $a_i\in \hat U$ and $(v_i,a_i)\in \hat M$ must hold).
		\item For all $1\leq i\leq r$:
		\begin{itemize}
			\item insert the edge $(v_i,u)$ into $\hat M$ and insert $u$ into $Z(v_i)$;
			\item delete the edge $(v_i,a_i)$ from $\hat M$ and delete $a_i$ from $Z(v_i)$;
					
		\end{itemize}
		\item call $\procpromote(u)$ (notice that $\hat m(u)=z'$ now holds);
	
	\end{itemize}
	\end{itemize}
}

It is easy to verify that Procedure $\procprocess(u)$ can be implemented to run in time $O(|\hat \Lambda(u)|+z')$, excluding the time spent on Procedure $\procpromote$. If $u$ remains in $\hat U$ at the end of Procedure $\procprocess(u)$, then we are guaranteed that $|N_{G'}(u)\cap \hat U|\leq z'$ and $|N_{G'}(u)\cap \hat B|\leq z'$ holds at this time. Note that, over the course of the algorithm for Step 2, vertices may leave the set $\hat U$ but they may never join it. Therefore, $|N_{G'}(u)\cap \hat U|\leq z'$ will continue to hold until the end of Step 2. Moreover, the only vertices that may join the vertex set $\hat B$ are  vertices that lie in $\hat U$. 
Therefore, over the remainder of Step~2 at most $|N_{G'}(u)\cap \hat U|\le z'$ additional neighbors of $u$ can move into $\hat B$, and hence $|N_{G'}(u)\cap \hat B|\leq 2z'$ must hold at the end of Step 2. This ensures that, at the end of Step 2, Properties \ref{prop: degrees in U} and \ref{prop: U neighbors in Ak} hold.

Finally, for every vertex $u$ that remains in $\hat U$ at the end of Step 2, we inspect the list $\hat \Lambda(u)=N_{G'}(u)\cap (\hat A_{h+1}\cup \hat B\cup \hat U)$, and delete from it the vertices of $\hat A_{h+1}$, to ensure that $\hat \Lambda(u)=N_{G'}(u)\cap ( \hat B\cup \hat U)$ holds as required. From our discussion, we obtain a valid $(h+1)$-level $z'$-subgraph system $\hat \sset$ for the graph $G'=(G\cup E_I)\setminus (E_D\setminus E_D')$. 
Recall that edge set $E'_D$ remained unchanged over the course of Step 2, so, at the end of the algorithm,  $|E_D'|\leq \frac{|E_D|\cdot z'}{z}$ holds as required. We return $E'_D$ and $\sset'=\hat \sset$ as the algorithm's outcome.

As observed already, the running time of Procedure $\procprocess(u)$ is $\tilde O(|\hat \Lambda(u)|+z')$, excluding the time spent on  Procedure $\procpromote$. Procedure $\procpromote$ may be applied at most once to every vertex $u'$ that lies in $\hat U$ at the beginning of Step 2, and its running time is $\tilde O(|\hat \Lambda(u')|+z')$. 
The total running time of the algorithm for Step 2 is then bounded by: $\tilde O(nz'+|E_I|+\sum_{u\in \hat U}|\hat \Lambda(u)|)\leq  \tilde O(nz+|E_I|)$, and the running time of the entire algorithm is bounded by $\tilde O(n^{1+o(1)}\cdot z+|E_I|+|E_D|)$.

\bibliographystyle{alpha}
\bibliography{ref}
\end{document}